\declaretheorem{theorem}
\declaretheorem[sibling=theorem]{lemma}
\declaretheorem[sibling=theorem]{corollary}
\declaretheorem[style=definition]{definition}
\declaretheorem[style=definition]{example}
\declaretheorem[sibling=theorem]{proposition}
\newcommand{\MMS}{\mathsf{MMS}}
\title{
   Maximin-Aware Allocations of Indivisible Chores  with Symmetric and Asymmetric Agents
}
\author{
    Tianze Wei
    \thanks{City University of Hong Kong. Email: t.z.wei-8@my.cityu.edu.hk}
    \and
    Bo Li
    \thanks{The Hong Kong Polytechnic University. Email: comp-bo.li@polyu.edu.hk}
    \and
    Minming Li
    \thanks{City University of Hong Kong. Email: minming.li@cityu.edu.hk.}   
}
\date{August 2023}
\begin{document}

\begin{titlepage}
    \thispagestyle{empty}
    \maketitle
    \begin{abstract}
        \thispagestyle{empty}
        The real-world deployment of fair allocation algorithms usually involves a heterogeneous population of users, which makes it challenging for the users to get complete knowledge of the allocation except for their own bundles. Chan et al. [IJCAI 2019] proposed a new fairness notion, maximin-awareness (MMA), which guarantees that every agent is not the worst-off one, no matter how the items that are not allocated to her are distributed. We adapt and generalize this notion to the case of indivisible chores and when the agents may have arbitrary weights. Due to the inherent difficulty of MMA, we also consider its up to one and up to any relaxations. A string of results on the existence and computation of MMA related fair allocations, and their connections to existing fairness concepts is given.
    \end{abstract}
\end{titlepage}

\section{Introduction}


Fairness is an important issue in many multi-agent systems, where all participants should be treated equally \cite{steinhaus1949division}. 
For example, network technology companies like Amazon, Google and Meta use schedulers in the cloud 
to allocate resources (e.g., servers, memory, etc.) or tasks (e.g., development, maintenance,  etc.) among a number of self-interested agents who want to maximize
the utility of their own allocations.
To ensure the sustainability of the Internet economy, the schedulers want the allocations to be fair \cite{DBLP:books/daglib/0017734,DBLP:conf/eurosys/VermaPKOTW15,DBLP:conf/sigcomm/GrandlAKRA14}.
\citet{books/daglib/0017730} presented the real-life applications of fair allocation problems, and a recent survey by \citet{amanatidis2023fair} reviewed the progress from the perspective of computer science and economics.

Two of the most well-established fairness criteria are \textit{envy-freeness} (EF) \cite{varian1974equity} and \textit{proportionality} (PROP) \cite{steinhaus1949division}. 
EF is an envy-based notion where every agent compares her own bundle with every other agent's and wants to get the best bundle. 
PROP is a share-based notion where every agent wants to ensure that the value of her bundle is no worse than $\frac{1}{n}$ fraction of her value for all the items where $n$ is the number of agents. 
When the items are indivisible, EF and PROP are hard to satisfy, and many relaxations have been studied instead in the literature. 
Among these relaxations, the ``up to one'' relaxation is one of the most popular ways, which requires the fairness notions to be satisfied after the removal of some item. The resulting notions are called EF or PROP up to one item, abbreviated as EF1 \cite{lipton2004approximately} and PROP1 \cite{conitzer2017fair}.
A stronger relaxation is ``up to any'' which strengthens the qualifier of the removed item to be arbitrary. Thus we get EF or PROP up to any item, abbreviated as EFX \cite{caragiannis2019unreasonable} and PROPX \cite{moulin2019fair}.
Besides these additive relaxations, maximin share fairness (MMS) \cite{budish2011combinatorial} is another popular notion, which requires every agent's value to be no worse than her worst share in an optimal $n$-partition of all items.
Formal definitions of these fairness notions are deferred to Section \ref{sec:prelim}.

\subsection{Epistemic Fair Allocation}

Motivated by the real-world applications where the agents do not have complete knowledge of the allocation due to privacy concerns or a huge number of agents involved in the system, an emerging line of research in fair resource allocation studies the epistemic fairness leveraging the information the agents have \cite{chen2017ignorance,aziz2018knowledge,DBLP:conf/aaai/HosseiniSVWX20,garg2022existence}. 
For example, epistemic EF was introduced by \citet{aziz2018knowledge} for the setting when the agents are connected via a social network and only know the bundles allocated to their neighbors. 
Informally, an allocation is epistemic EF if there exists a distribution of the items allocated to her non-neighboring agents such that the  allocation is EF to her.
Similar ideas have been extended to EFX by \citet{garg2022existence}.

What all the aforementioned works have in common is that they consider the epistemic variants of envy-based fairness notions. 
This is partly due to the fact that most share-based notions themselves do not require the knowledge of the allocations to the other agents. 
In comparison, \citet{chan2019maximin} proposed another epistemic fairness notion named \textit{maximin-aware} (MMA), which combines envy- and share-based requirements. 
Informally, the intuition of MMA is to ensure every agent does not obtain the worst bundle without knowing how the items that are not allocated to her are allocated among the other agents.
Since MMA is hard to satisfy, they also proposed the up to one and up to any relaxations. 

So far, most of the epistemic study of fair resource allocation has focused on the case of goods, when the agents want to obtain items with high value. 
Following \citet{chan2019maximin}, we study MMA allocations and make the following extensions. 
While \citet{chan2019maximin} only considered allocating goods among symmetric agents,
we study the allocation of indivisible chores (i.e., agents want to obtain items with low value) and the general situation where the agents may have different weights. 
We summarize our problem and the results in the following section.

\subsection{Our Problem and Main Contribution}

We study the maximin-aware (MMA) allocation of indivisible chores among $n$ agents whose cost functions are additive.
We use \textit{weights} to represent the asymmetry of agents in the
system when the agents may have different \textit{obligations} or \textit{responsibilities}
in a system. For example, a person in a leadership position is naturally expected to undertake higher collaboration responsibilities.
Weighted fairness has been justified
since the very early study of fair division in the context of
cake cutting problem \cite{robertson1998cake}.
In our problem, the agents know the number of agents, the weights of agents and the set of items. Meanwhile, they are only aware of their own bundles but do not know how the items that are not allocated to them are allocated among the other agents.
An MMA allocation guarantees that for each agent, there must exist some other agent whose bundle is no better than hers.
A bit more formally, an allocation is MMA if the cost of every agent's bundle is no greater than her $n-1$ (weighted) maximin share of the items that are not allocated to her, where 
weighted maximin share is defined in \citet{aziz2019weighted}.

\begin{figure}[t]
    \centering \includegraphics[width=0.5 \textwidth]{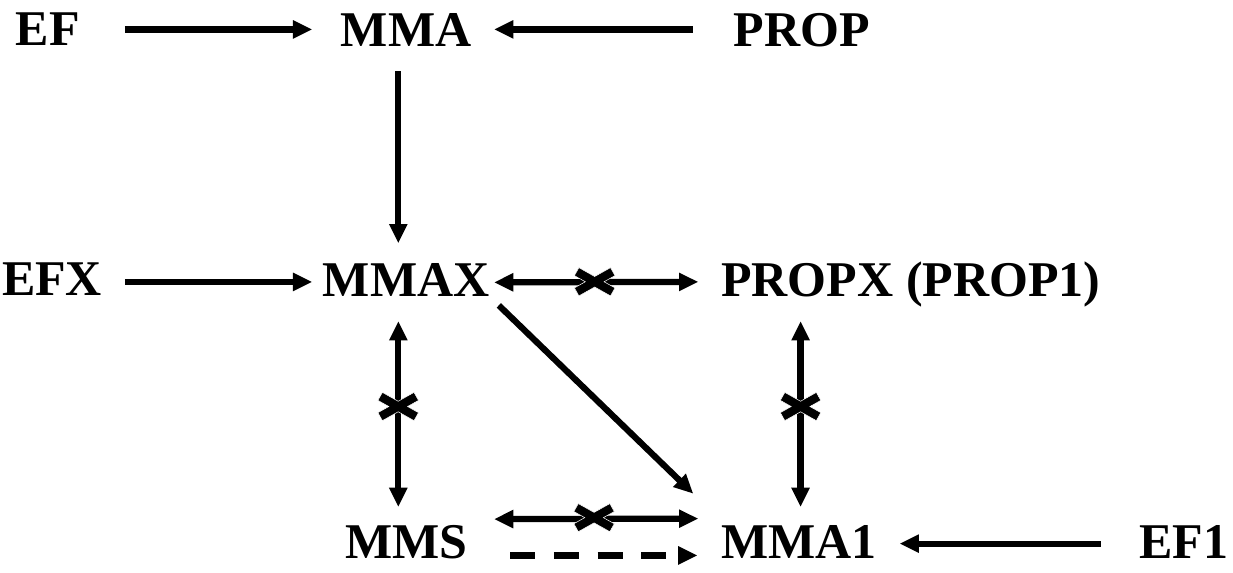}
    \caption{Relationship between MMA and other notions. The dotted arrow means the implication holds when the weights are the same. Here we do not show the known relationship such as EF implies EFX.}  \label{fig: MMA_Realtionship}
\end{figure}

Since MMA is hard to satisfy, we also consider its ``up to one'' and ``up to any'' relaxations, resulting in the notions of MMA1 and MMAX.
The relationships between MMA1/MMAX and existing notions are shown in Figure \ref{fig: MMA_Realtionship}.
We can see that general EF related notions are stronger than MMA related ones, but PROP related notions and MMA related ones are not comparable. 
An exception is that MMS implies MMA1 when the agents have the same weight, but fails to ensure any approximation of MMA1 if the agents' weights are arbitrary. 
Another one is that a PROP allocation is also MMA;
however, a PROP1 or PROPX allocation does not have any guaranteed approximation of MMA1 or MMAX.

The advantage of MMA1 is its guaranteed existence for agents with arbitrary weights.
Regarding MMAX, we show its existence when the agents have the same weight. When the agents have different weights, MMAX allocations may not exist for two agents, implied by \cite{hajiaghayi2023almost}.
When the agents have arbitrary weights, we design  a 1.91-approximation algorithm if $n=2$ and a $(1+\lambda)$-approximation algorithm, where $\lambda = \frac{3-n+\sqrt{n^2+10n-7}}{4n-4}$, if $n\geq 3$. 
We plot the approximation ratios in Figure \ref{fig: Approximation_ratio}. As we can see, since $\frac{1}{n-1} < \lambda < \frac{2}{n-1}$, the approximation ratio becomes close to 1 when $n$ becomes large. 
Designing algorithms for envy-based notions (such as EFX) is arguably harder than that for share-based ones (such as PROPX) since it requires comparisons between every agent with every other agent. 
For example, a PROPX allocation always exists when the agents have arbitrary weights \cite{li2022almost}; however, the best known approximation of EFX is $O(n^2)$ for agents with the same weight \cite{DBLP:conf/ijcai/0002022}, and nothing is known beyond the same weight. 
\begin{figure}[t]
    \centering \includegraphics[width=0.5 \textwidth]{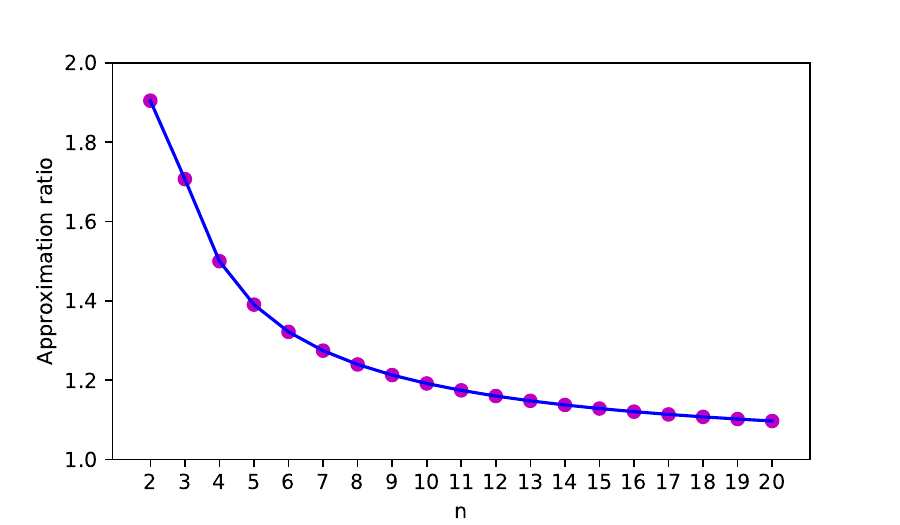}
    \caption{The illustration of the approximation ratios for MMAX when agents have arbitrary weights.}  \label{fig: Approximation_ratio}
\end{figure}

\subsection{Other Related Work}
\label{sec: related work}
Fair allocation of indivisible items for the case of goods has been widely studied; see the survey of \citet{amanatidis2023fair} for an overview. There is also a recent line of work regarding the fair allocation of indivisible chores. \citet{aziz2017algorithms} initiated the fair allocation of chores with MMS notions and showed that there exists an instance where MMS allocations do not exist. But approximate MMS allocations can be computed efficiently \citet{aziz2017algorithms,barman2020approximation,huang2021algorithmic,huang2023reduction}.
When the agents are asymmetric, \citet{aziz2019weighted} and \citet{DBLP:journals/corr/abs-2211-13951} explored the weighted MMS fairness and anyprice share (APS) fairness.
\citet{aziz2020polynomial} proposed a polynomial-time algorithm to compute an allocation that satisfies Pareto optimality and PROP1 simultaneously for asymmetric agents when the set of items includes goods and chores. \citet{sun2021connections} studied the connections among several fairness notions like EFX, MMS, etc. in allocating chores. 
\citet{li2022almost} and \citet{wu2023weighted}, respectively, showed that when agents are symmetric, PROPX and EF1 allocations exist and can be computed in polynomial time.
\citet{garg2022fair} presented a polynomial-time algorithm to compute an allocation that satisfies Pareto optimality and EF1 simultaneously when agents have at most two values for chores, and \citet{wu2023weighted} extended this result to asymmetric agents.
\citet{hajiaghayi2023almost} showed that EFX allocations do not always exist when there are two or three agents with different weights.

\section{Preliminaries}
\label{sec:prelim}
We introduce our model and solution concepts in this section. 
Let $N=\{1,\ldots,n\}$ be a set of $n$ agents, and $M = \{f_1,\ldots,f_m\}$ be a set of $m$ indivisible chores. Each agent $i \in N$ has a cost function $c_{i}: 2^{N} \rightarrow \mathbb{R}_{\geq 0}$. The cost functions are assumed to be additive, i.e., for any $S\subseteq M$, $c_i(S) = \sum_{f\in S} c_i(\{f\})$. 
For simplicity, we use $c_i(f)$ instead of $c_i(\{f\})$ for $f \in M$. 
We study the weighted setting, where each agent has a weight $w_i > 0$, and the weights add up to one, i.e., $\sum_{i \in N}w_i =1$. 
Without loss of generality, we assume that $c_i(M) = 1$ for all agents $i \in N$.
A chores allocation instance is denoted as $\mathcal{I} = \langle M, N, \boldsymbol{c}, \boldsymbol{w}\rangle$, where $\boldsymbol{c} = (c_1,\ldots,c_n)$ and $\boldsymbol{w}= (w_1,\ldots,w_n)$. 
An allocation $\mathcal{X} = (X_1, \ldots, X_n)$, where $X_i$ is the bundle allocated to agent $i$, is an $n$-partition of $M$ among $n$ agents, i.e., $\bigcup_{i \in N}X_{i} = M$ and $X_{i} \cap X_{j} = \emptyset$ for any two agents $i \neq j$. 
Let $\Pi_{n}(M)$ denote the set of all $n$-partitions of $M$.
Particularly, we denote the set of items that are not allocated to agent $i$ as $X_{-i} = \bigcup_{j \in N \setminus \{i\}}X_j$.


\subsection{Fairness Concepts}

We next introduce the most classic fairness notions, including \textit{envy-freeness},  \textit{proportionality} and their relaxations.

\begin{definition}[\textbf{EF}]
For any $\alpha \in [1,+\infty)$, an allocation $\mathcal{X} = (X_1,\ldots,X_n)$ is $\alpha$-approximate envy-free ($\alpha$-EF), if for any two agents $i, j \in N$, 
$\frac{c_i(X_i)}{w_i}\leq \alpha \cdot \frac{c_i(X_j)}{w_j}.$
The allocation is EF if $\alpha=1$.
\end{definition}

\begin{definition}[\textbf{EF1 and EFX}]
For any $\alpha \in [1,+\infty)$, an allocation $\mathcal{X} = (X_1,\ldots,X_n)$ is 
$\alpha$-approximate envy-free up to one item ($\alpha$-EF1), if for any two agents $i, j \in N$ with $X_i \neq \emptyset$, 
\begin{equation}
\label{eq:ef1}
    \frac{c_i(X_i \setminus \{f\})}{w_i}\leq \alpha \cdot \frac{c_i(X_j)}{w_j} \text{ for some item $f \in X_i$}.
\end{equation}
If the quantifier ``some'' in Inequality (\ref{eq:ef1}) is changed to ``any'', the allocation is $\alpha$-approximate envy-free up to any item ($\alpha$-EFX).
The allocation is EF1 or EFX if $\alpha=1$.

\end{definition}


\begin{definition}[\textbf{PROP}]
For any $\alpha \in [1,+\infty)$, an allocation $\mathcal{X}=(X_1,\ldots,X_n)$ is $\alpha$-approximate proportional ($\alpha$-PROP), if for any agent $i \in N$, 
$c_i(X_i) \leq \alpha \cdot w_i \cdot c_i(M).$
The allocation is PROP if $\alpha=1$.
\end{definition}

\begin{definition}[\textbf{PROP1 and PROPX}]
\label{def: prop1_propx_chores}
For any $\alpha \in [1,+\infty)$, an allocation $\mathcal{X}=(X_1,\ldots,X_n)$ is $\alpha$-approximate proportional up to one item ($\alpha$-PROP1), if for any agent $i \in N$ with $X_i \neq \emptyset$, 
\begin{equation}
\label{eq:prop1}
    c_i(X_i \setminus \{f\}) \leq \alpha \cdot w_i \cdot c_i(M) \text{ for some item $f \in X_i$.}
\end{equation}
If the quantifier ``some'' in Inequality (\ref{eq:prop1}) is changed to ``any'', the allocation is $\alpha$-approximate proportional up to any item ($\alpha$-PROPX).
The allocation is PROP1 or PROPX if $\alpha=1$.


\end{definition}

Besides PROP1 and PROPX, \textit{maximin share fairness} is another popular relaxation of \textit{proportionality}.

Given an instance $\mathcal{I} = \langle M, N, \boldsymbol{c}, \boldsymbol{w} \rangle$, the \textit{maximin share} of agent $i$ on $M$ among $n$ agents is defined as :
$$ \MMS_{i}(M, n) =  w_i \cdot \min_{ \mathcal{Y} \in \Pi_{n}(M)} \max_{j \in N} \frac{ c_{i}(Y_j)}{w_j}.$$

Different from the unweighted case, we can see that even if two agents have the same cost function, they may still have different MMS values.

\begin{definition}[\textbf{MMS}]
For any $\alpha \in [1,+\infty)$, an allocation $\mathcal{X}=(X_1,\ldots,X_n)$ is $\alpha$-approximate maximin share fair ($\alpha$-MMS), if for any agent $i \in N$, $c_i(X_i) \leq \alpha \cdot \MMS_{i}(M, n).$ The allocation is MMS if $\alpha=1$.
\end{definition}

\subsection{Maximin-Aware and Its Relaxations}

Next, we introduce our main solution concept, maximin-aware fairness, which is a hybrid notion of EF and MMS. 
Intuitively, 
a bundle $X_i$ is maximin-aware (MMA) fair to an agent $i \in N$ if no matter how the items not allocated to her are distributed among the other agents, there always exists one agent whose  bundle is no better than hers. 
In other words, for an arbitrary $(n-1)$-allocation $(Y_j)_{j\neq i}$ of $X_{-i}$, there exists $j^* \in N \setminus \{i\}$ such that
$
\frac{c_i(X_i)}{w_i} \le \frac{c_i(Y_{j^*})}{w_{j^*}}.
$
Equivalently, a bundle $X_i$ is MMA fair to an agent $i \in N$ if $c_i(X_i) \le \MMS_{i}(X_{-i}, n-1)$, where 
\[
\MMS_{i}(X_{-i}, n-1) = w_i \cdot \min_{\mathcal{Z} \in  \Pi_{n-1}(X_{-i})} \max_{j \in N \setminus \{i\}} \frac{c_{i}(Z_j)}{w_j}.
\]


\begin{definition}[\textbf{MMA}]

For any $\alpha \in [1,+\infty)$, an allocation $\mathcal{X}=(X_1,\ldots,X_n)$ is $\alpha$-approximate maximin-aware fair ($\alpha$-MMA), if for any agent $i \in N$, 
$
c_i(X_i) \leq \alpha \cdot \MMS_{i}(X_{-i}, n-1).
$
The allocation is MMA if $\alpha=1$.
\end{definition}


Similar to EF and MMS, MMA allocations may not exist.
Suppose that there are two agents and one item with positive costs for both agents. In any allocation, there always exists one agent who cannot satisfy the condition of MMA. Thus, we consider the relaxations of MMA.

\begin{definition}[\textbf{MMA1 and MMAX}]

For any $\alpha \in [1,+\infty)$, an allocation $\mathcal{X}=(X_1,\ldots,X_n)$ is $\alpha$-approximate maximin-aware up to one item ($\alpha$-MMA1), if for any agent $i \in N$ with $X_i \neq \emptyset$, there exists one item $f \in X_{i}$ such that 
\begin{equation}
\label{eq:mma1}
    c_i(X_i \setminus \{f\}) \leq \alpha \cdot \MMS_{i}(X_{-i}, n-1).
\end{equation}
Similarly, the allocation is $\alpha$-approximate maximin-aware up to any item ($\alpha$-MMAX) if Inequality (\ref{eq:mma1}) holds for any item $f \in X_{i}$. The allocation is MMA1 or MMAX if $\alpha=1$.
\end{definition}

By the definitions, it is straightforward that any $\alpha$-MMA allocation is $\alpha$-MMAX, and any $\alpha$-MMAX allocation is $\alpha$-MMA1.
Next, we illustrate these definitions via an example. 



\begin{example}
Consider an instance with three agents and five items. Assume that their weights are $w_1 = \frac{1}{2}$, $w_2 = \frac{1}{3}$ and $w_3 = \frac{1}{6}$ respectively. For simplicity, assume that they have the same cost function, as shown in Table \ref{tab:example_mma}. 

 \begin{table}[h]
     \centering
     \begin{tabular}{c|ccccc}
     \toprule
    $f_j$ & $f_1$ & $f_2$ & $f_3$ & $f_4$ & $f_5$ \\
    \midrule
     $c(f_j)$ & $\frac{19}{72}$ & $\frac{17}{72}$ & $\frac{2}{9}$ & $\frac{11}{72}$ & $\frac{1}{8}$ \\
     \bottomrule
     \end{tabular}
     \caption{An example of MMA related notions.}
     \label{tab:example_mma}
 \end{table}
 
Let us examine an allocation $\mathcal{X} = (X_1,X_2,X_3)$ with $X_1 = \{f_3\}$, $X_2 = \{f_1,f_2\}$ and $X_3 = \{f_4,f_5\}$. It is not hard to see that for agent 1, $X_1$ is MMA to her since $c(X_1) = \frac{2}{9} < \MMS_1(X_{-1},2) = \max\{(\frac{17}{72}+\frac{11}{72}+\frac{1}{8}) \times \frac{3}{2}, \ \frac{19}{72} \times 3 \} = \frac{19}{24}$; for agent 2, $X_2$ is MMA1 but not MMAX to her since $c(X_2 \setminus \{f_1\}) = \frac{17}{72} < \MMS_2(X_{-2},2) = \max \{ (\frac{2}{9}+ \frac{11}{72}) \times \frac{2}{3}, \ \frac{1}{8} \times 2 \} = \frac{1}{4}$ and $c(X_2 \setminus \{f_2\}) = \frac{19}{72} > \MMS_2(X_{-2},2) = \frac{1}{4}$; for agent 3, $X_3$ is MMAX to her since $c(X_3 \setminus \{f_5\}) = \frac{11}{72} = \MMS_3(X_{-3},2) = \max \{(\frac{16}{72} + \frac{17}{72}) \times \frac{1}{3}, \ \frac{19}{72} \times \frac{1}{2}\}= \frac{11}{72}$ and $c(X_3 \setminus \{f_5\}) > c(X_3 \setminus \{f_4\})$.

\end{example}


\paragraph{Ordered instance.} An instance $\mathcal{I} = \langle M, N, \boldsymbol{c}, \boldsymbol{w} \rangle$ is called {\em ordered} if all agents have the same ranking on all the items, i.e., for every agent $i\in N$, 
\[
c_i(f_1) \ge c_i(f_2) \ge \cdots \ge c_i(f_m).
\]
Note that in an ordered instance, the agents can still have different cardinal values for the items. 
Intuitively, there are more conflicts among agents in an ordered instance than in a general one since agents desire the same item. 
In fact, \cite{bouveret2016characterizing} and \cite{barman2020approximation} formalized this intuition and showed that any algorithm that ensures $\alpha$-MMS allocations for ordered instances can be converted to an algorithm for the general instances in polynomial time with the same guarantee. 
We show that the same result holds for MMA1 and MMAX allocations even when the agents have non-identical weights. 

\begin{definition}[\textbf{Ordered instance}]
\label{def: ordered_instance}

Given an instance $\mathcal{I} = \langle M, N, \mathbf{c}, \mathbf{w} \rangle$, the corresponding ordered instance of $\mathcal{I}$ is denoted as $\mathcal{I}' = \langle M, N, \mathbf{c'}, \mathbf{w} \rangle$, where for each agent $i \in N$ and item $f_{j} \in M$, we have $c_i^{'}(f_j) = c_i(f^{j})$ ($f^{j}$ is the $j^{th}$ highest cost item of $M$ for agent $i$).
\end{definition}

\begin{algorithm}
\caption{Reduction}
\label{alg: reduction_alg_chores}
\KwIn{An instance $\mathcal{I}  = \langle M, N, \mathbf{c}, \mathbf{w} \rangle $, its ordered instance $\mathcal{I}^{'}  = \langle M, N, \mathbf{c'}, \mathbf{w} \rangle$ and an  allocation $\mathcal{X}^{'} = (X_1^{'},\ldots, X_n^{'})$ of ordered instance $\mathcal{I^{'}}$}
\KwOut{An allocation $\mathcal{X}=(X_1,\ldots,X_n)$ of instance $\mathcal{I}$}

Let $\mathcal{X} = (\emptyset, \ldots, \emptyset)$ and $T = M$;

\For(\tcp*{From the smallest item to the largest one}){$j = m $ \KwTo $1$}{Let $i^{*}$ denote the agent whose bundle $X_i^{'}$ includes $f_j$;

Choose the item $f_{j}^{*}$ = $\arg \min_{f \in T} c_{i}(f)$; \label{lin: pick_target_chore}

$X_{i^{*}} = X_{i^{*}} \cup \{f_{j}^{*}\}$ and $T = T \ \backslash \ \{f_{j}^{*}\}$;
}

\Return $\mathcal{X}$
\end{algorithm}

\begin{lemma}
\label{lem: property_reduction_chores}
 Given an instance $\mathcal{I}$, the corresponding ordered instance $\mathcal{I}^{'}$ and an allocation $\mathcal{X}^{'}=(X_1^{'},\ldots,X_n^{'})$ of $\mathcal{I}^{'}$, the allocation $\mathcal{X}=(X_1,\ldots,X_n)$ returned by Algorithm \ref{alg: reduction_alg_chores} has the following properties:
\begin{enumerate}
    \item $\lvert X_i \rvert = \lvert X_i^{'} \rvert$.
    \item $c_i(X_i \backslash \{f\}) \leq c_i^{'}(X_i^{'} \backslash \{f^{'}\})$, where $f \in X_{i}$, $f^{'} \in X_{i}^{'}$ and $f$ is the corresponding chosen item in line \ref{lin: pick_target_chore} when $f^{'}$ is considered.
    \item $\MMS_i(X_{-i},n-1) \geq \MMS_i^{'}(X_{-i}^{'},n-1)$, where $\MMS_{i}^{'}(X_{-i}^{'}, n-1)$ is defined by the cost function $c_{i}^{'}$.
\end{enumerate}
\end{lemma}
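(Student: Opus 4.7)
The plan is to first dispose of Property 1 and then extract a single per-iteration inequality that drives both Properties 2 and 3. Property 1 is immediate: each iteration of Algorithm \ref{alg: reduction_alg_chores} adds exactly one item to $X_{i^*}$, where $i^*$ is the unique agent with $f_j \in X_{i^*}^{'}$, so the number of iterations contributing to $X_i$ equals $|X_i^{'}|$.

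The workhorse is the following claim, which I would prove by pigeonhole. At the start of iteration $j$ we have $|T| = j$, and the item picked in line \ref{lin: pick_target_chore} for the receiving agent $i^*$, call it $g$, satisfies $c_{i^*}(g) \le c_{i^*}(f^{j}) = c_{i^*}^{'}(f_j)$, where $f^{j}$ is the $j$-th costliest item of $M$ for $i^*$. The argument is: the $j - 1$ strictly costliest items of $M$ for $i^*$ form a set of size $j - 1$, so $T$ (of cardinality $j$) cannot be contained in it, whence $T$ contains an item whose $c_{i^*}$-cost is at most $c_{i^*}(f^{j})$; the minimizer $g$ achieves at most this bound. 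Property 2 is then immediate: the algorithm induces a natural bijection between $X_i^{'}$ and $X_i$ that pairs each $f_l \in X_i^{'}$ with the item $g_l \in X_i$ picked in iteration $l$; applying the claim with $i^* = i$ gives $c_i(g_l) \le c_i^{'}(f_l)$ for every such $l$, and summing over all $f_l \in X_i^{'}$ except the distinguished $f'$ yields $c_i(X_i \setminus \{f\}) \le c_i^{'}(X_i^{'} \setminus \{f'\})$, where $f$ is the item paired with $f'$.

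For Property 3, my plan is to reduce the weighted MMS comparison to a stochastic dominance statement on cost multisets. The central sub-claim is: for every threshold $t$, $|\{g \in X_i : c_i(g) \ge t\}| \le |\{f \in X_i^{'} : c_i^{'}(f) \ge t\}|$. Indeed, the right-hand side counts iterations $j \le N_i(t)$ in which $i^* = i$, where $N_i(t) := |\{f \in M : c_i(f) \ge t\}|$; the per-iteration claim forces any item picked for $i$ in an iteration $j > N_i(t)$ to have $c_i$-cost strictly below $t$, so such iterations cannot contribute to the left-hand side. Because the two multisets $\{c_i(f) : f \in M\}$ and $\{c_i^{'}(f) : f \in M\}$ coincide (the ordered instance merely permutes items), this inequality transports to $|\{g \in X_{-i} : c_i(g) \ge t\}| \ge |\{f \in X_{-i}^{'} : c_i^{'}(f) \ge t\}|$, establishing that the multiset $\{c_i(g) : g \in X_{-i}\}$ stochastically dominates $\{c_i^{'}(f) : f \in X_{-i}^{'}\}$.

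The concluding step is a monotone coupling. Stochastic dominance supplies a bijection $\pi : X_{-i} \to X_{-i}^{'}$ with $c_i^{'}(\pi(g)) \le c_i(g)$ for every $g$; hence any $(n-1)$-partition $\mathcal{Z}$ of $X_{-i}$ transports via $\pi$ to a partition of $X_{-i}^{'}$ whose $c_i^{'}$-cost on each block is at most the $c_i$-cost of the corresponding block of $\mathcal{Z}$. Dividing each block by the corresponding $w_j$, taking the max over $j \neq i$, then the min over $\mathcal{Z}$, and finally multiplying by $w_i$, delivers $\MMS_i(X_{-i}, n-1) \ge \MMS_i^{'}(X_{-i}^{'}, n-1)$. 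I expect the main obstacle to be isolating the right per-iteration pigeonhole and handling ties carefully when converting stochastic dominance to the pointwise coupling $\pi$; once that is in place the remainder is routine bookkeeping.
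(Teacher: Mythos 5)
Your proposal is correct and follows essentially the same route as the paper's proof: the per-iteration pigeonhole bound $c_{i^*}(g)\le c_{i^*}^{'}(f_j)$ for Property 2, and for Property 3 a sorted (monotone) pairing between $X_{-i}$ and $X_{-i}^{'}$ used to transport an $(n-1)$-partition of $X_{-i}$ to one of $X_{-i}^{'}$ with blockwise smaller cost. Your threshold-counting argument merely makes explicit the step the paper dismisses as ``easy to see,'' and your direct minimization over partitions replaces the paper's proof by contradiction; neither is a substantive difference.
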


\begin{proof}
For the first statement, consider the round of $j$ of the for-loop in Algorithm \ref{alg: reduction_alg_chores}; only agent $i$ who picks $f_j$ can choose what she likes most from the remaining available items $T$. Thus, it is not hard to see that $\lvert X_i \rvert = \lvert X_i^{'} \rvert$. 

To prove the second statement, it is sufficient to prove when $f^{*}_{j}$ is selected from $T$ in line \ref{lin: pick_target_chore} of Algorithm \ref{alg: reduction_alg_chores}, we have $c_i(f_{j}^{*}) \leq c_{i}^{'}(f_{j})$. Because if it is true, for any agent $i \in N$, there are $\lvert X_{i} \rvert $ inequalities that hold. After we add arbitrary $\lvert X_i \rvert -1$ inequalities up, we can get the desired inequality. Next, we consider the occasion when agent $i$ picks $f_j$ in the round of $j$. Notice that, before the round of $j$, exactly $m-j$ items are chosen, and there are $j$ items left in $T$. Therefore, the worst item that agent $i$ can obtain from $T$ is the one that has the same cost as $c_{i}^{'}(f_j)$, i.e., we have $c_{i}(f_{j}^{*}) \leq c_{i}^{'}(f_{j})$.

For the simplicity of the proof of the last statement, all items $\{f_{(k)}\}$ in $X_{-i}$ and $\{f_{(k)}^{'}\}$ in $X_{-i}^{'}$, where $k = 1, \ldots, \lvert X_{-i} \rvert$, are assumed to be sorted in the increasing order of cost respectively. By the above two statements, we have $\lvert X_i \rvert = \lvert X_i^{'} \rvert$, and $c_i(f) \leq c_i^{'}(f^{'})$, where $f \in X_i$, $f^{'} \in X_{i}^{'}$ and $f$ is the corresponding chosen item of $f^{'}$ in line \ref{lin: pick_target_chore}. Therefore, it is easy to see that $\lvert X_{-i} \rvert = \lvert X_{-i}^{'} \rvert$ and for each pair of items $f_{(k)}$ and $f_{(k)}^{'}$, where $k = 1, \ldots, \lvert X_{-i} \rvert$, from $X_{-i}$ and $X_{-i}^{'}$ respectively, we have $c_i(f_{(k)}) \geq c_{i}^{'}(f_{(k)}^{'})$.

 Suppose, towards a contradiction, that $\MMS_i(X_{-i},n-1) < \MMS_i^{'}(X_{-i}^{'},n-1)$. Let $\mathcal{Y} = (Y_j)_{j \neq i}$, where $w_i\cdot \max_{j \in N \backslash\{i\} }\frac{c_i(Y_j)}{w_j} = \MMS_i(X_{-i},n-1)$, be an $(n-1)$-allocation of $X_{-i}$. Next, if we replace each item in $\mathcal{Y}$ with the item with the same index from $X_{-i}^{'}$, we obtain an allocation $\mathcal{Y}^{'}$ that is an $(n-1)$-partition of $X_{-i}^{'}$. In this allocation $\mathcal{Y}^{'}$, we have $c_i(Y_j) \geq c_i^{'}(Y_{j}^{'})$ for any $j \in N \backslash \{i\}$. So we get $w_{i} \cdot \max_{j \in N \backslash\{i\}} \frac{c_i^{'}(Y_{j}^{'})}{w_{j}} \leq w_i \cdot \max_{j \in N \backslash\{i\}} \frac{c_{i}(Y_{j})}{w_{j}} = \MMS_{i}(X_{-i},n-1) < \MMS_i^{'}(X_{-i}^{'},n-1)$, which contradicts the definition of $\MMS_i^{'}(X_{-i}^{'},n-1)$.
\end{proof}

\begin{lemma}
\label{lem: wmma1_reduction_approximation_chores}
For any $\alpha \in [1,+\infty)$, suppose that there is a polynomial time algorithm that computes an $\alpha$-MMA1 allocation $\mathcal{X}^{'}=(X_1^{'}, \ldots, X_n^{'})$ of an ordered instance $\mathcal{I}^{'}$, then we have a polynomial time algorithm that computes an $\alpha$-MMA1 allocation $\mathcal{X} = (X_1, \ldots, X_n)$ of instance $\mathcal{I}$.
\end{lemma}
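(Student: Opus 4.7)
The plan is to use Algorithm \ref{alg: reduction_alg_chores} (the Reduction algorithm) as a polynomial-time post-processing step that converts the $\alpha$-MMA1 allocation of the ordered instance into one for the original instance, and then verify $\alpha$-MMA1 using the three properties already established in Lemma \ref{lem: property_reduction_chores}.

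First, I would describe the algorithm. Given instance $\mathcal{I}$, construct its ordered instance $\mathcal{I}'$ in polynomial time by sorting each agent's costs (this is the definition in Definition \ref{def: ordered_instance}). Invoke the assumed polynomial-time subroutine on $\mathcal{I}'$ to obtain an $\alpha$-MMA1 allocation $\mathcal{X}' = (X_1', \ldots, X_n')$. Then run Algorithm \ref{alg: reduction_alg_chores} on the triple $(\mathcal{I}, \mathcal{I}', \mathcal{X}')$ to produce $\mathcal{X} = (X_1, \ldots, X_n)$. The overall procedure is clearly polynomial, so only correctness is at issue.

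To verify that $\mathcal{X}$ is $\alpha$-MMA1, fix any agent $i$ with $X_i \neq \emptyset$. By Property 1 of Lemma \ref{lem: property_reduction_chores}, $|X_i| = |X_i'|$, so $X_i' \neq \emptyset$ as well, and the $\alpha$-MMA1 guarantee for $\mathcal{X}'$ supplies some $f' \in X_i'$ with
\[
c_i'(X_i' \setminus \{f'\}) \le \alpha \cdot \MMS_i'(X_{-i}', n-1).
\]
Let $f \in X_i$ be the item chosen by Algorithm \ref{alg: reduction_alg_chores} in line \ref{lin: pick_target_chore} on the iteration that processes $f'$. Chaining Property 2 and Property 3 of Lemma \ref{lem: property_reduction_chores} with the above inequality yields
\[
c_i(X_i \setminus \{f\}) \le c_i'(X_i' \setminus \{f'\}) \le \alpha \cdot \MMS_i'(X_{-i}', n-1) \le \alpha \cdot \MMS_i(X_{-i}, n-1),
\]
which is exactly Inequality (\ref{eq:mma1}) for the item $f$ at agent $i$. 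Since $i$ was arbitrary, $\mathcal{X}$ is $\alpha$-MMA1.

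There is no real obstacle here: Lemma \ref{lem: property_reduction_chores} has already done all the structural work, carefully matching the ``removed'' item $f'$ in the ordered allocation with its counterpart $f$ in the original allocation (Property 2) and comparing the MMA benchmarks in the correct direction (Property 3). The only thing to take care of is to apply Property 2 to the \emph{specific} witness item $f'$ coming from the $\alpha$-MMA1 condition, rather than to an arbitrary element of $X_i'$; once that pairing is made explicit, the inequalities compose directly. The argument is uniform in $\alpha$, and the identical scheme will in fact be reusable for MMAX, provided one quantifies over ``any'' $f'$ and uses Property 2 pairwise.
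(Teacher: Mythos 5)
Your proof is correct and follows essentially the same route as the paper: run the Reduction algorithm (Algorithm \ref{alg: reduction_alg_chores}) on the ordered instance's allocation and chain Properties 2 and 3 of Lemma \ref{lem: property_reduction_chores}. The only cosmetic difference is that the paper instantiates the removed item as the maximum-cost pair $f_{max}$, $f_{max}^{'}$ (a valid MMA1 witness whenever any witness exists), whereas you carry along the corresponding pair of the given existential witness; both choices are covered by Property 2.
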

\begin{proof}
   For any agent $i \in N$, by Lemma \ref{lem: property_reduction_chores}, we have  $c_i(X_i \backslash \{f_{max}\}) \leq c_i^{'}(X_i^{'} \backslash \{f_{max}^{'}\})$, where $f_{max} =\arg \max_{f \in X_i}c_i(f)$ and $f_{max}^{'} =\arg\max_{f \in X_i^{'}}c_{i}^{'}(f)$, and $\MMS_i(X_{-i},n-1) \geq \MMS_i^{'}(X_{-i}^{'},n-1)$. Therefore, if we have $c_{i}^{'}(X_i^{'} \backslash \{f_{max}^{'}\}) \leq \alpha \cdot \MMS_i^{'}(X_{-i}^{'},n-1)$, $c_i(X_i \backslash \{f_{max}\}) \leq \alpha \cdot \MMS_i(X_{-i},n-1)$ also holds.
\end{proof}

The proof of the following lemma is identical to that of Lemma \ref{lem: wmma1_reduction_approximation_chores} except that we use
$f_{min}$ and $f_{min}^{'}$ to replace $f_{max}$ and $f_{max}^{'}$ respectively, where $f_{min} = \arg \min_{f \in X_i}c_i(f)$ and $f_{min}^{'} = \arg \min_{f \in X_{i}^{'}}c_i^{'}(f)$.

\begin{lemma}
\label{lem: wmmax_reduction_approximation_chores}
For any $\alpha \in [1,+\infty)$, suppose that there is a polynomial time algorithm that computes an $\alpha$-MMAX allocation $\mathcal{X}^{'}=(X_1^{'}, \ldots, X_n^{'})$ of an ordered instance $\mathcal{I}^{'}$, then we have a polynomial time algorithm that computes an $\alpha$-MMAX allocation $\mathcal{X} = (X_1, \ldots, X_n)$ of instance $\mathcal{I}$.
\end{lemma}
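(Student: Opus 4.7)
The plan is to mirror the proof of Lemma \ref{lem: wmma1_reduction_approximation_chores}, replacing the role of the maximum-cost item with the minimum-cost item in each bundle. First I would start from an $\alpha$-MMAX allocation $\mathcal{X}'$ of the ordered instance $\mathcal{I}'$, apply Algorithm \ref{alg: reduction_alg_chores} to produce an allocation $\mathcal{X}$ of $\mathcal{I}$, and then rewrite the MMAX condition in its equivalent ``minimum-removal'' form: for every agent $i$ with $X_i\neq\emptyset$, the quantity $c_i(X_i\setminus\{f\})$ is maximized over $f\in X_i$ when $f=f_{min}=\arg\min_{f\in X_i}c_i(f)$, so $\alpha$-MMAX for $i$ is equivalent to $c_i(X_i\setminus\{f_{min}\})\leq \alpha\cdot\MMS_i(X_{-i},n-1)$, and similarly for $\mathcal{X}'$ with $f_{min}'$.

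The core step is to assemble the chain
\[
c_i(X_i\setminus\{f_{min}\})\ \leq\ c_i'(X_i'\setminus\{f_{min}'\})\ \leq\ \alpha\cdot\MMS_i'(X_{-i}',n-1)\ \leq\ \alpha\cdot\MMS_i(X_{-i},n-1).
\]
The middle inequality is precisely the $\alpha$-MMAX guarantee on $\mathcal{X}'$ applied at $f_{min}'$, and the rightmost inequality is exactly statement 3 of Lemma \ref{lem: property_reduction_chores}. So the only step that needs a little care is the leftmost inequality, which is the analogue of the ``summing inequalities'' argument used for MMA1 but applied to the minimum instead of the maximum.

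For that leftmost step, I would use the bijective pairing between $X_i$ and $X_i'$ induced by Algorithm \ref{alg: reduction_alg_chores}: for every $f'\in X_i'$, the algorithm assigns a unique $f\in X_i$ with $c_i(f)\leq c_i'(f')$ (statement 2 of Lemma \ref{lem: property_reduction_chores}), and statement 1 says this pairing is a bijection. Let $\tilde{f}'\in X_i'$ denote the partner of $f_{min}\in X_i$ under this pairing. Summing statement 2 of Lemma \ref{lem: property_reduction_chores} over the $|X_i|-1$ pairs that exclude $(f_{min},\tilde{f}')$ yields $c_i(X_i\setminus\{f_{min}\})\leq c_i'(X_i'\setminus\{\tilde{f}'\})$. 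Since removing the smallest-cost item from a bundle leaves the largest possible remaining cost, $c_i'(X_i'\setminus\{\tilde{f}'\})\leq c_i'(X_i'\setminus\{f_{min}'\})$, giving the required bound.

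The one subtlety I would flag as the main obstacle is the last observation in the previous paragraph: the partner of $f_{min}$ need not be $f_{min}'$, so the inequality $c_i(X_i\setminus\{f_{min}\})\leq c_i'(X_i'\setminus\{f_{min}'\})$ does not follow from a direct item-wise comparison, but only after using the extra slack ``smallest item gives largest leftover sum.'' This is exactly the asymmetry with the MMA1 argument, where removing the max from both sides matches the direction of the comparison automatically. Polynomial running time is immediate since Algorithm \ref{alg: reduction_alg_chores} is polynomial.
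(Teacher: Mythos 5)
Your proposal is correct and follows essentially the same route as the paper, which simply declares the proof identical to that of Lemma \ref{lem: wmma1_reduction_approximation_chores} with $f_{min}$ and $f_{min}'$ replacing $f_{max}$ and $f_{max}'$. In fact you are slightly more careful than the paper: you explicitly handle the possibility that the partner of $f_{min}$ under the pairing of Algorithm \ref{alg: reduction_alg_chores} is not $f_{min}'$, closing that gap via the observation that removing the cheapest item leaves the largest leftover, whereas the paper implicitly relies on the (true but unstated) fact that the minimum-cost items correspond under the reduction.
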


It is common that the reduction from an arbitrary instance to an ordered instance is used in MMS or PROPX allocations where these two fairness notions emphasize the comparison between the value of a bundle and a fixed threshold value; see \citep{barman2020approximation, garg2020improved,huang2021algorithmic,li2022almost}. Surprisingly, we show the reduction also works in MMA1 and MMAX allocations. What is the difference in terms of proof? In MMS or PROPX allocations, it is sufficient to show that after the execution of the reduction algorithm, for any agent $i \in N$, we have $c_i(X_i) \leq c_{i}^{'}(X_i^{'})$ or $c_i(X_i \backslash \{f_{min}\}) \leq c_{i}^{'}(X_{i}^{'} \backslash \{f_{min}^{'}\})$ where $f_{min} = \arg \min_{f \in X_i}c_i(f)$ and $f_{min}^{'} = \arg \min_{f \in X_{i}^{'}}c_{i}^{'}(f)$  since the benchmark values are the same in the original instance and its ordered instance, i.e., $\MMS_{i}(M,n) = \MMS_{i}^{'}(M,n)$ and $w_i \cdot c_i(M) = w_i \cdot c_{i}^{'}(M)$. However, it is insufficient in MMA1 and MMAX allocations, and it is necessary to prove the following two conditions. On the one hand, we need to show $c_i(X_i \backslash \{f\}) \leq c_i^{'}(X_i^{'} \backslash \{f^{'}\})$ where $f$ and $f^{'}$ are the items with the maximum cost or minimum cost in $X_i$ and $X_{i}^{'}$ respectively; On the other hand; we need to show $\MMS_{i}(X_{-i},n-1) \geq \MMS_{i}^{'}(X_{-i}^{'},n-1)$. Only when the above two conditions are satisfied simultaneously can the reduction work.

Therefore, in the following sections, we only focus on ordered instances.




\section{Connections between MMA and Other Fairness Notions}
In this section, we introduce the connections between MMA related notions and the ones related to EF, PROP and MMS.

\subsection{EF, EF1 and EFX}

We start with EF related notions. 
By the definitions, it is not hard to verify that EF implies MMA, and the implication also holds for up to one/any relaxations. 

\begin{lemma}
For any $\alpha \in [1,+\infty)$, any $\alpha$-EF allocation is also $\alpha$-MMA.    
\end{lemma}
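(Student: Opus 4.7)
The plan is to unpack both definitions, feed EF into a weighted averaging argument, and conclude that every hypothetical $(n-1)$-partition of $X_{-i}$ must contain a ``heavy'' part (from agent $i$'s perspective) that witnesses the MMA condition.

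First I would fix an agent $i \in N$ and an arbitrary $(n-1)$-partition $\mathcal{Z} = (Z_j)_{j \neq i}$ of $X_{-i}$. Since MMA is defined via a minimum over such partitions, it suffices to prove that
\[
\frac{c_i(X_i)}{w_i} \;\le\; \alpha \cdot \max_{j \neq i} \frac{c_i(Z_j)}{w_j}
\]
for this generic $\mathcal{Z}$. The standard weighted averaging fact, namely $\max_{j\neq i} \tfrac{c_i(Z_j)}{w_j} \ge \tfrac{\sum_{j\neq i} c_i(Z_j)}{\sum_{j\neq i} w_j} = \tfrac{c_i(X_{-i})}{1-w_i}$, reduces the task to showing $\tfrac{c_i(X_i)}{w_i} \le \alpha \cdot \tfrac{c_i(X_{-i})}{1-w_i}$.

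Next I would invoke $\alpha$-EF on the original allocation $\mathcal{X}$. For every $j \neq i$, the EF inequality rearranges to $c_i(X_j) \ge \tfrac{w_j}{\alpha\,w_i} \, c_i(X_i)$. Summing over all $j \neq i$ and using $\sum_{j\neq i} w_j = 1 - w_i$ gives
\[
c_i(X_{-i}) \;=\; \sum_{j \neq i} c_i(X_j) \;\ge\; \frac{(1-w_i)\, c_i(X_i)}{\alpha\, w_i},
\]
which is exactly the bound needed above. Combining the two pieces yields $c_i(X_i) \le \alpha\, w_i \cdot \max_{j\neq i}\tfrac{c_i(Z_j)}{w_j}$; taking the minimum over all $(n-1)$-partitions $\mathcal{Z}$ of $X_{-i}$ gives $c_i(X_i) \le \alpha \cdot \MMS_i(X_{-i}, n-1)$, as required.

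I do not expect any real obstacle here: the statement is essentially a two-line consequence of the definitions plus weighted averaging. The only mild subtlety is making sure that the EF hypothesis is applied to the \emph{actual} bundles $X_j$ (to control the total $c_i(X_{-i})$), while the MMA conclusion is about arbitrary hypothetical partitions $Z_j$ of the same item set. Also worth noting in passing is that the proof is uniform in $\alpha$ and does not require the agents to have equal weights, so the same argument should go through for the up-to-one and up-to-any relaxations in subsequent lemmas with only cosmetic changes.
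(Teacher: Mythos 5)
Your proof is correct and uses the same core argument the paper relies on (and spells out for the EF1/MMA1 analogue in Lemma \ref{lem:ef1_mma1_chores}): sum the $\alpha$-EF inequalities over $j \neq i$ to bound $c_i(X_{-i})$ from below, and compare against an averaging bound over any $(n-1)$-partition of $X_{-i}$. The only difference is presentational — you argue directly for an arbitrary partition and then take the minimum, whereas the paper's template argues by contradiction using the optimal partition — which is an equivalent reorganization.
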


\begin{lemma}
\label{lem:ef1_mma1_chores}
For any $\alpha \in [1,+\infty)$, any $\alpha$-EF1 allocation is also $\alpha$-MMA1. 
\end{lemma}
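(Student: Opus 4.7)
The plan is to fix an arbitrary agent $i$ with $X_i\neq\emptyset$ and exhibit a single item whose removal certifies $\alpha$-MMA1. A natural candidate is $f^{*}\in\arg\max_{f\in X_i}c_i(f)$. The first step is to observe that although the $\alpha$-EF1 definition formally allows the removed item to depend on the opposing agent $j$, the maximum-cost item $f^{*}$ can always be substituted: for any $f_j\in X_i$ that witnesses EF1 against $j$, we have $c_i(X_i\setminus\{f^{*}\})\le c_i(X_i\setminus\{f_j\})$, so
\[
\frac{c_i(X_i\setminus\{f^{*}\})}{w_i}\ \le\ \alpha\cdot\frac{c_i(X_j)}{w_j} \qquad\text{for every } j\neq i.
\]

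Next I would sum this uniform EF1 inequality across $j\neq i$, using $\sum_{j\neq i}w_j=1-w_i$, to derive the lower bound
\[
c_i(X_{-i})\ =\ \sum_{j\neq i}c_i(X_j)\ \ge\ \frac{(1-w_i)\,c_i(X_i\setminus\{f^{*}\})}{\alpha\,w_i}.
\]
The crux is then to transfer this total-cost bound into a bound on $\MMS_i(X_{-i},n-1)$. For that I would invoke a weighted-average / pigeonhole inequality: for any $(n-1)$-partition $\mathcal{Z}=(Z_j)_{j\neq i}$ of $X_{-i}$,
\[
\max_{j\neq i}\frac{c_i(Z_j)}{w_j}\ \ge\ \frac{\sum_{j\neq i}c_i(Z_j)}{\sum_{j\neq i}w_j}\ =\ \frac{c_i(X_{-i})}{1-w_i}.
\]
Substituting the EF1-based lower bound for $c_i(X_{-i})$ collapses the right-hand side to $c_i(X_i\setminus\{f^{*}\})/(\alpha w_i)$. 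Because this holds for every partition $\mathcal{Z}$, it also holds for the MMS-minimizing one, yielding $\MMS_i(X_{-i},n-1)/w_i\ge c_i(X_i\setminus\{f^{*}\})/(\alpha w_i)$, which rearranges to the required $c_i(X_i\setminus\{f^{*}\})\le\alpha\cdot\MMS_i(X_{-i},n-1)$.

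I do not anticipate a serious obstacle. The only step that calls for genuine care, as opposed to routine manipulation, is the initial collapse of the per-agent EF1 witnesses $f_j$ to the single maximum-cost item $f^{*}$; this is special to the chores setting (removing a heavier chore only helps the owner). Everything else is a short chain: sum EF1 inequalities to lower-bound $c_i(X_{-i})$, apply the weighted-average inequality to lower-bound $\max_{j} c_i(Z_j)/w_j$ uniformly in the partition, and take the min over partitions to reach the MMA1 conclusion.
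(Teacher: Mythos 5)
Your proposal is correct and follows essentially the same argument as the paper: fix the maximum-cost item of $X_i$ as the uniform EF1 witness, sum the weighted EF1 inequalities over $j\neq i$ to relate $c_i(X_i\setminus\{f^{*}\})$ to $c_i(X_{-i})$, and compare against the $(n-1)$-partition of $X_{-i}$ by an averaging/pigeonhole step. The only cosmetic difference is that you argue directly via the weighted-average inequality $\max_{j}c_i(Z_j)/w_j\ge c_i(X_{-i})/(1-w_i)$, whereas the paper phrases the same averaging step as a proof by contradiction against the MMS-optimal partition.
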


\begin{proof}
 Let $\mathcal{X} = (X_1,\ldots, X_n)$ be an $\alpha$-EF1 allocation. By the definition of EF1 allocations, for any agent $i \in N$, we have $w_{j} \cdot c_i(X_i \setminus \{f_{max}^{i}\}) \leq \alpha \cdot w_{i} \cdot c_i(X_j)$, where $f_{max}^{i} = \arg \max_{f \in X_i}c_i(f)$, for any agent $j \in N \setminus \{i\}$. Then, summing up respective inequalities for all $ j \in N \setminus \{i\}$, we get 
\begin{equation}
(1-w_i)\cdot c_i(X_i \setminus \{f_{max}^{i}\}) \leq \alpha \cdot w_i \cdot c_i(X_{-i}).
\label{eqn: ef1_mma1_chore_inequality}  
\end{equation}
Suppose, for the contradiction, that $\mathcal{X}$ is not an $\alpha$-MMA1 allocation, i.e., there exists some agent $i \in N $ such that $c_i(X_i \setminus \{f\}) > \alpha \cdot \MMS_{i}(X_{-i},n-1)$ holds for any item $f \in X_{i}$. Let $\mathcal{Y} = (Y_j)_{j \neq i}$, where $w_i\cdot \max_{j \in N \setminus \{i\}}\frac{c_i(Y_j)}{w_j} = \MMS_i(X_{-i},n-1)$, be an $(n-1)$-allocation of $X_{-i}$. So we have $w_j \cdot c_i(X_i \setminus \{f_{max}^{i}\}) > \alpha \cdot w_i \cdot c_i(Y_j)$ for any $j \in N \setminus \{i\}$. Similarly, summing up respective inequalities for all $j \in N \setminus \{i\}$, we have $(1-w_i) \cdot c_i(X_i \setminus \{f_{max}^{i}\}) > \alpha \cdot w_i \cdot c_i(X_{-i})$. which contradicts Inequality (\ref{eqn: ef1_mma1_chore_inequality}). 
\end{proof}

\citet{wu2023weighted} showed that a (weighted) EF1 allocation can be computed in polynomial time, and thus by Lemma \ref{lem:ef1_mma1_chores}, we directly have the following corollary. 
    
\begin{corollary}
\label{coro:MMA1}
    MMA1 allocations exist and can be computed in polynomial time.
\end{corollary}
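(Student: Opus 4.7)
The plan is to string together the two ingredients already in place. First, I would invoke the result of \citet{wu2023weighted}, which provides a polynomial-time algorithm that, given any chores instance $\mathcal{I} = \langle M, N, \boldsymbol{c}, \boldsymbol{w}\rangle$ with arbitrary weights, outputs a (weighted) EF1 allocation $\mathcal{X} = (X_1,\ldots,X_n)$. Note that for the purpose of this corollary we only need the existence and polynomial-time computability of EF1, not any structural information about the allocation produced.

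Second, I would apply Lemma \ref{lem:ef1_mma1_chores} with $\alpha = 1$: that lemma states that every $\alpha$-EF1 allocation is also $\alpha$-MMA1, so the allocation $\mathcal{X}$ returned in the first step is automatically MMA1. Since the transformation from the EF1 guarantee to the MMA1 guarantee is purely a change of certificate (it does not touch the allocation itself), the runtime is inherited from the EF1 algorithm and remains polynomial.

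There is essentially no obstacle here; the work has already been done in Lemma \ref{lem:ef1_mma1_chores} and in \citet{wu2023weighted}. The only care needed is to confirm that the EF1 algorithm of \citet{wu2023weighted} handles the weighted chores setting we consider (which it does, as noted in Section \ref{sec: related work}), and that Lemma \ref{lem:ef1_mma1_chores} is applied at $\alpha=1$ rather than some strict approximation. With those two observations, the corollary follows in two lines.
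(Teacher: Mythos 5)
Your proposal is correct and matches the paper's argument exactly: the paper also obtains this corollary by combining the polynomial-time weighted EF1 algorithm of \citet{wu2023weighted} with Lemma \ref{lem:ef1_mma1_chores} at $\alpha=1$. Nothing further is needed.
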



The proof of the following lemma is identical to that of Lemma \ref{lem:ef1_mma1_chores}. except that we use
$f_{min}^{i}$ to replace $f_{max}^{i}$ where $f_{min}^{i} = \arg \min_{f \in X_i}c_i(f)$.

\begin{lemma}
\label{lem:efx_mmax_chores}
For any $\alpha \in [1,+\infty)$, any $\alpha$-EFX allocation is also $\alpha$-MMAX. 
\end{lemma}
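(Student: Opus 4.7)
The plan is to mirror the argument used for Lemma \ref{lem:ef1_mma1_chores}, but replacing the maximum-cost item $f_{max}^{i}$ by the minimum-cost one, $f_{min}^{i}=\arg\min_{f\in X_i}c_i(f)$. The conceptual reason this swap is the right one is that the EFX hypothesis (``for any $f\in X_i$'') is strongest precisely when $f$ has the smallest cost, since that leaves the largest residual bundle $X_i\setminus\{f\}$; dually, the MMAX conclusion (``for any $f\in X_i$'') is hardest to establish for the same choice $f=f_{min}^{i}$, so it suffices to verify it only for that particular $f$.

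Concretely, let $\mathcal{X}=(X_1,\ldots,X_n)$ be an $\alpha$-EFX allocation and fix any agent $i\in N$ with $X_i\neq\emptyset$. Instantiating the EFX inequality at $f=f_{min}^{i}$ gives, for every $j\in N\setminus\{i\}$,
\[
w_j\cdot c_i\!\left(X_i\setminus\{f_{min}^{i}\}\right)\;\le\;\alpha\cdot w_i\cdot c_i(X_j).
\]
Summing these inequalities over $j\in N\setminus\{i\}$ and using $\sum_{j\neq i}w_j=1-w_i$, I obtain the aggregated bound
\begin{equation}
\label{eq:efx-mmax-aggregate}
(1-w_i)\cdot c_i\!\left(X_i\setminus\{f_{min}^{i}\}\right)\;\le\;\alpha\cdot w_i\cdot c_i(X_{-i}).
\end{equation}

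Next I would suppose, for contradiction, that $\mathcal{X}$ fails to be $\alpha$-MMAX. Because $c_i(X_i\setminus\{f\})$ is maximized over $f\in X_i$ at $f=f_{min}^{i}$, the failure must occur precisely at this choice, i.e., $c_i(X_i\setminus\{f_{min}^{i}\})>\alpha\cdot\MMS_i(X_{-i},n-1)$ for some agent $i$. Let $\mathcal{Y}=(Y_j)_{j\neq i}$ be an $(n-1)$-partition of $X_{-i}$ attaining $w_i\cdot\max_{j\neq i}\frac{c_i(Y_j)}{w_j}=\MMS_i(X_{-i},n-1)$. Then for every $j\neq i$ the per-agent inequality $w_j\cdot c_i(X_i\setminus\{f_{min}^{i}\})>\alpha\cdot w_i\cdot c_i(Y_j)$ holds; summing over $j\neq i$ yields $(1-w_i)\cdot c_i(X_i\setminus\{f_{min}^{i}\})>\alpha\cdot w_i\cdot c_i(X_{-i})$, directly contradicting \eqref{eq:efx-mmax-aggregate}.

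I do not anticipate a real obstacle: the argument is structurally identical to the MMA1 case, and the only step that requires a moment of care is justifying that it is enough to check the MMAX bound at $f=f_{min}^{i}$, which is immediate from additivity of $c_i$ and the definition of $f_{min}^{i}$.
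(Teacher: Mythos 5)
Your proposal is correct and is exactly the proof the paper intends: the paper explicitly states that the argument is identical to that of Lemma \ref{lem:ef1_mma1_chores} with $f_{max}^{i}$ replaced by $f_{min}^{i}$, which is precisely what you carry out. Your added observation that it suffices to check the MMAX bound at $f=f_{min}^{i}$ (since $c_i(X_i\setminus\{f\})$ is maximized there) is a correct and worthwhile justification that the paper leaves implicit.
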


When agents have different weights, EFX allocations may not exist for two or three agents \cite{hajiaghayi2023almost}. Especially, when there are two agents, EFX is equivalent to MMAX. Thus, MMAX allocations may not exist for two agents. However, when the agents have the same weight, the \textit{Top-trading Envy Cycle Elimination} algorithm \cite{li2022almost,DBLP:conf/approx/BhaskarSV21} can find EFX allocations for ordered instances in polynomial time. By Lemma \ref{lem:efx_mmax_chores}, we have the following corollary.

\begin{corollary}
    When all agents have the same weight, MMAX allocations exist and can be computed in polynomial time.
\end{corollary}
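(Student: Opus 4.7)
The plan is to chain together three ingredients that are already in place: the reduction to ordered instances (Lemma \ref{lem: wmmax_reduction_approximation_chores}), the existence of polynomial-time EFX algorithms in the symmetric-weight ordered setting, and the EFX-implies-MMAX implication (Lemma \ref{lem:efx_mmax_chores}). Since the corollary is really just a composition of prior results, I would present it as a short pipeline rather than a fresh argument.

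First, I would take an arbitrary chores instance $\mathcal{I}$ with $w_1 = \cdots = w_n = 1/n$ and pass to its ordered counterpart $\mathcal{I}'$ as in Definition \ref{def: ordered_instance}; constructing $\mathcal{I}'$ only requires sorting each agent's cost vector, which is polynomial. Next, I would invoke the Top-trading Envy Cycle Elimination algorithm of \citet{li2022almost} and \citet{DBLP:conf/approx/BhaskarSV21}, which, in the equal-weight setting, produces an EFX allocation $\mathcal{X}' = (X_1', \ldots, X_n')$ of $\mathcal{I}'$ in polynomial time.

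Then I would apply Lemma \ref{lem:efx_mmax_chores} with $\alpha = 1$ to conclude that $\mathcal{X}'$ is also MMAX with respect to $\mathcal{I}'$. Finally, I would run Algorithm \ref{alg: reduction_alg_chores} on $\mathcal{X}'$ to obtain an allocation $\mathcal{X}$ of the original instance $\mathcal{I}$; Lemma \ref{lem: wmmax_reduction_approximation_chores} (instantiated at $\alpha=1$) then guarantees that $\mathcal{X}$ is MMAX for $\mathcal{I}$. Every step runs in polynomial time, so the overall procedure does as well.

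There is essentially no obstacle, since the substantive content has already been discharged by Lemma \ref{lem:efx_mmax_chores} and Lemma \ref{lem: wmmax_reduction_approximation_chores}. The only thing worth emphasizing in the writeup is that the cited EFX existence result is stated for ordered instances with equal weights, which is exactly the regime where we are applying it, so no additional hypothesis is needed. Thus the corollary follows immediately from the composition of these three polynomial-time steps.
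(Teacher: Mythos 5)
Your proposal is correct and follows exactly the paper's own route: the paper derives this corollary by noting that Top-trading Envy Cycle Elimination yields EFX allocations for ordered instances with equal weights in polynomial time, then applies Lemma \ref{lem:efx_mmax_chores}, with Lemma \ref{lem: wmmax_reduction_approximation_chores} implicitly handling the passage from ordered to general instances. Your writeup merely makes that last reduction step explicit, which the paper leaves to its earlier blanket remark that only ordered instances need be considered.
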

 

\subsection{PROP, PROP1 and PROPX}

Next, we discuss the connections between MMA and PROP related notions. 
As we will see, in sharp contrast to EF, although PROP still implies MMA, PROP1 and PROPX do not guarantee any bounded approximation for MMA1 and MMAX. 


\begin{lemma}
\label{lem: ef_prop_mma_chores}
Any PROP allocation is also MMA.
\end{lemma}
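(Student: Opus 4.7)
The plan is a short two-step argument. First I would lower bound $\MMS_i(X_{-i}, n-1)$ by a weighted average, and then I would use the PROP condition to show that $c_i(X_i)$ never exceeds that average.

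Concretely, fix any agent $i$ and any $(n-1)$-partition $\mathcal{Z} = (Z_j)_{j \neq i}$ of $X_{-i}$. By the standard averaging (or pigeonhole) inequality applied to the weighted quantities $\{c_i(Z_j)/w_j\}_{j\neq i}$,
\[
\max_{j \in N\setminus\{i\}} \frac{c_i(Z_j)}{w_j} \;\geq\; \frac{\sum_{j\neq i} c_i(Z_j)}{\sum_{j\neq i} w_j} \;=\; \frac{c_i(X_{-i})}{1-w_i},
\]
since the $w_j$'s with $j\neq i$ sum to $1-w_i$ and the $Z_j$'s partition $X_{-i}$. Taking the minimum over $\mathcal{Z}$ and multiplying by $w_i$ yields
\[
\MMS_i(X_{-i}, n-1) \;\geq\; \frac{w_i\cdot c_i(X_{-i})}{1-w_i}.
\]

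Second, PROP gives $c_i(X_i) \leq w_i \cdot c_i(M) = w_i\cdot(c_i(X_i) + c_i(X_{-i}))$, which after rearrangement becomes $(1-w_i)\,c_i(X_i) \leq w_i\cdot c_i(X_{-i})$, i.e., $c_i(X_i) \leq w_i\cdot c_i(X_{-i})/(1-w_i)$. Chaining with the MMS lower bound above delivers $c_i(X_i) \leq \MMS_i(X_{-i}, n-1)$, as required.

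There is no real obstacle here; the only subtlety is making sure the weighted averaging step is stated correctly, since in the weighted setting the relevant quantity is $c_i(Z_j)/w_j$ rather than $c_i(Z_j)$, and the denominator on the right is $1-w_i$ rather than $n-1$. Everything else is routine arithmetic.
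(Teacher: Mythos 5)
Your proof is correct and rests on the same key fact as the paper's: the weighted maximum $\max_{j\neq i} c_i(Z_j)/w_j$ is at least the weighted average $c_i(X_{-i})/(1-w_i)$, combined with the rearranged PROP bound $(1-w_i)c_i(X_i)\leq w_i c_i(X_{-i})$. The paper phrases this as a proof by contradiction (summing the inequalities $w_j c_i(X_i) > w_i c_i(Y_j)$ over the optimal partition), whereas you state the averaging bound directly, but the argument is essentially identical.
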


\begin{proof}
Let $\mathcal{X} = (X_1,\ldots, X_n)$ be a PROP allocation. By the definition of PROP allocations, for any agent $i \in N$, we have $c_i(X_i) \leq {w_i}$. Suppose, for the contradiction, that $\mathcal{X}$ is not an MMA allocation, i.e., there exists some agent $i \in N$ such that $c_i(X_i) > \MMS_{i}(X_{-i},n-1)$. Let $\mathcal{Y} = (Y_j)_{j \neq i}$, where $w_i\cdot \max_{j \in N \backslash \{i\}}\frac{c_i(Y_j)}{w_j} = \MMS_i(X_{-i},n-1)$, be an $(n-1)$-allocation of $X_{-i}$. So we have $w_{j} \cdot c_i(X_i) > w_i \cdot c_i(Y_j) $ for any $j \in N \backslash \{i\}$. Then, summing up respective inequalities for all $j \in N \backslash \{i\}$, we get $(1-w_i)\cdot c_i(X_i) > w_i \cdot c_i(X_{-i})$. At last, adding $w_i \cdot c_i(X_i)$ to both sides of the above inequality, we have $c_i(X_i) > w_i$, which contradicts that $\mathcal{X}$ is PROP.
\end{proof}




\begin{proposition}
\label{pro:propx_mma1_mmax_chores}
 A PROPX allocation is not necessarily $\alpha$-MMA1 or $\alpha$-MMAX for any $\alpha \in [1,+\infty)$, even when the agents have the same weight.   
\end{proposition}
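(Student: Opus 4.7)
The plan is to exhibit, for any prescribed $\alpha \in [1, +\infty)$, a PROPX allocation whose MMA1 ratio exceeds $\alpha$; since $\alpha$-MMAX implies $\alpha$-MMA1 by definition, this will also rule out $\alpha$-MMAX using the same instance. The key observation is that the PROPX threshold for agent $i$ depends on the total cost $c_i(M)$, whereas the MMA1 threshold depends on $\MMS_i(X_{-i}, n-1)$. Thus a highly skewed allocation, in which $X_{-i}$ is made arbitrarily cheap while $X_i \setminus \{f\}$ retains a constant cost for every $f \in X_i$, can keep the former satisfied while pushing the latter ratio to infinity.

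Concretely, I will use $n=2$ agents with equal weights $w_1 = w_2 = 1/2$ and three items $f_1, f_2, f_3$ with identical additive costs $c(f_1) = c(f_2) = (1-\epsilon)/2$ and $c(f_3) = \epsilon$, for a parameter $\epsilon > 0$ to be chosen at the end. The allocation under consideration is $X_1 = \{f_1, f_2\}$ and $X_2 = \{f_3\}$.

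The steps are: (i) verify PROPX — for agent $1$, $c_1(X_1 \setminus \{f\}) = (1-\epsilon)/2 \le 1/2 = w_1 c_1(M)$ for either $f$, and for agent $2$ removing the single item $f_3$ leaves an empty bundle of cost $0$; (ii) compute $\MMS_1(X_{-1}, 1)$, which is straightforward because $X_{-1} = \{f_3\}$ must be handed to the single remaining agent, giving $\MMS_1(X_{-1}, 1) = \epsilon$; (iii) observe that $\min_{f \in X_1} c_1(X_1 \setminus \{f\}) = (1-\epsilon)/2$, so the MMA1 requirement for agent $1$ forces $(1-\epsilon)/2 \le \alpha \cdot \epsilon$, which fails as soon as $\epsilon < 1/(2\alpha + 1)$. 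Since this $\epsilon$ can be chosen for any fixed $\alpha$, no finite approximation ratio works.

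There is no substantive technical obstacle here; the argument is verification of definitions on a tiny instance. The only care needed is in step (ii), to interpret $\MMS_1(X_{-1}, n-1)$ correctly as a maximin share over the single-item set $X_{-1}$ partitioned among $n-1 = 1$ agent (not as any proportional share of $M$), and in step (i) to note that the PROPX inequality is vacuously satisfied when removing the unique item from $X_2$ leaves the empty set.
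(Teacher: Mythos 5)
Your proposal is correct and follows essentially the same route as the paper's proof: a two-agent equal-weight instance where one agent receives two items of cost close to $\tfrac{1}{2}$ while the other receives only negligible-cost items, so the allocation is PROPX (the threshold $w_1 c_1(M)=\tfrac12$ is met after removing any item) yet $\MMS_1(X_{-1},1)=O(\epsilon)$ forces the MMA1/MMAX ratio to blow up as $\epsilon\to 0$. The paper uses four items ($f_3,f_4$ each of cost $\epsilon$) instead of your three, but this is an immaterial difference; your computations and the choice $\epsilon<\tfrac{1}{2\alpha+1}$ are all correct.
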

\begin{proof}

 Consider an instance with two agents with identical weights and four items. We focus on agent $1$, and the cost of each item, according to agent 1, is shown in Table \ref{tab: propx_mma1_mmax_example}.  
 
 \begin{table}[h]
     \centering
     \begin{tabular}{c|cccc}
     \toprule
    $f_j$ & $f_1$ & $f_2$ & $f_3$ & $f_4$\\
    \midrule
     $c_1(f_j)$ & $\frac{1}{2} - \epsilon$ & $\frac{1}{2} - \epsilon$ & $\epsilon$ & $\epsilon$ \\
     \bottomrule
     \end{tabular}
     \caption{An example of bad approximate MMA1 or MMAX guarantee that a PROPX  allocation with symmetric agents provides, where $0 < \epsilon < \frac{1}{6}$.}
     \label{tab: propx_mma1_mmax_example}
     \vspace{-0.5em}
 \end{table}
    
     Now consider an allocation $\mathcal{X} = (X_1, X_2)$ with $X_1 = \{f_1,f_2\}$ and $X_2 = \{f_3, f_4\}$ and assume that this allocation is PROPX to agent 2. \footnote{For agent 2, it is easy to assign costs to the items to make allocation $\mathcal{X}$ be PROPX to her. Therefore, we do not specify the cost of each item from her perspective. The same applies to the subsequent examples.} For agent 1, this is a PROPX allocation, and this allocation is not better than ($\frac{1}{4\epsilon}-\frac{1}{2}$)-MMA1 or ($\frac{1}{4\epsilon}-\frac{1}{2}$)-MMAX since $\frac{c_1(X_1 \setminus \{f\})}{\MMS_1(X_{-1},1)}=\frac{\frac{1}{2}-\epsilon}{2\epsilon}$ for any item $f \in X_{1}$. 
    For any given $\alpha\ge 1$, setting $\epsilon<\frac{1}{4\alpha+2}$, we have $\frac{1}{4\epsilon}-\frac{1}{2} > \alpha$. Thus, this allocation is not $\alpha$-MMA1 or $\alpha$-MMAX.
\end{proof}

The above result directly implies the following corollary.
\begin{corollary}
\label{cor:prop1_mma1__mmax_chores}
A PROP1 allocation is not necessarily $\alpha$-MMA1 or $\alpha$-MMAX for any $\alpha \in [1,+\infty)$.    
\end{corollary}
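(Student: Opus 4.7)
The plan is to derive this corollary essentially for free from Proposition \ref{pro:propx_mma1_mmax_chores}, by observing that PROPX is a strictly stronger notion than PROP1 in the chores setting. Specifically, if $c_i(X_i \setminus \{f\}) \le w_i \cdot c_i(M)$ holds for every $f \in X_i$, then it holds in particular when $f$ is chosen to be the item of maximum cost in $X_i$, which immediately witnesses the existential quantifier in the definition of PROP1. Thus any PROPX allocation is automatically PROP1.

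Given this, I would simply reuse the instance from the proof of Proposition \ref{pro:propx_mma1_mmax_chores} (the two-agent, four-chore construction in Table \ref{tab: propx_mma1_mmax_example}) and the same allocation $X_1 = \{f_1, f_2\}$, $X_2 = \{f_3, f_4\}$. Since that allocation was already shown to be PROPX, it is in particular PROP1. The analysis carried out in the proposition then shows that for any fixed $\alpha \ge 1$, choosing $\epsilon < \frac{1}{4\alpha+2}$ makes the ratio $\frac{c_1(X_1 \setminus \{f\})}{\MMS_1(X_{-1},1)}$ exceed $\alpha$ for every $f \in X_1$, so the allocation is neither $\alpha$-MMA1 nor $\alpha$-MMAX. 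There is no real obstacle here beyond stating the PROPX $\Rightarrow$ PROP1 implication cleanly; the heavy lifting was already done in the previous proposition.
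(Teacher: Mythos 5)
Your proposal is correct and matches the paper exactly: the paper also derives this corollary directly from Proposition \ref{pro:propx_mma1_mmax_chores}, relying on the fact that any PROPX allocation is in particular PROP1, so the same instance and allocation serve as the counterexample. No further work is needed.
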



Similarly, we have the following counterpart results.

\begin{proposition}
\label{pro:mmax_prop1__propx_chores}
    An MMAX allocation may not be PROP1 or PROPX, even when the agents have the same weight.
\end{proposition}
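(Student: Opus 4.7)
The plan is to exhibit an explicit small instance with equal weights in which some allocation satisfies MMAX for every agent yet fails PROP1 for one of them. Since PROPX is strictly stronger than PROP1, the same allocation simultaneously witnesses both halves of the statement, so no separate example for PROPX is needed.

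The guiding observation is that the PROP1 threshold $w_i \cdot c_i(M) = 1/n$ is structurally blind to how $X_{-i}$ is arranged, whereas the MMAX benchmark $\MMS_i(X_{-i}, n-1)$ can be inflated well above $1/n$ by concentrating the mass of $X_{-i}$ on a single heavy item. So I would give agent $1$ several equal-cost medium items whose total remains above $1/n$ even after any one is removed (killing PROP1), while arranging $X_{-1}$ so that $\MMS_1(X_{-1}, n-1)$ is dominated by one large item and comfortably exceeds the post-removal cost (saving MMAX).

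Concretely, I take $n = 3$ with identical cost functions and five items; set $X_1 = \{f_1, f_2, f_3\}$, $X_2 = \{f_4\}$, $X_3 = \{f_5\}$; and parametrize $c(f_1) = c(f_2) = c(f_3) = \epsilon$, $c(f_4) = A$, $c(f_5) = B$ with $3\epsilon + A + B = 1$. The conditions then reduce to $2\epsilon > 1/3$ (so PROP1 fails for agent $1$, since removing any item from $X_1$ leaves cost $2\epsilon$) and $2\epsilon \le A$ (so MMAX holds, using that $\MMS_1(X_{-1}, 2) = \max(A, B) = A$ because the min-max $2$-partition of $\{f_4, f_5\}$ separates the two items). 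A choice such as $\epsilon = 11/60$, $A = 23/60$, $B = 1/15$ satisfies both strictly, and since agents $2$ and $3$ hold singletons their MMAX conditions are trivial.

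The one conceptual point I want to flag---and the reason the construction cannot drop below $n = 3$---is a sanity check ruling out $n = 2$: with two equal-weight agents, $\MMS_1(X_{-1}, 1) = c_1(X_2) = 1 - c_1(X_1)$, so MMAX becomes $c_1(X_1) \le (1 + \min_{f \in X_1} c_1(f))/2$, which immediately yields $c_1(X_1) - \max_{f \in X_1} c_1(f) \le 1/2$, i.e., PROP1. Thus both $n \ge 3$ and a concentration of mass on a single item in $X_{-1}$ are essential structural features of the counterexample; beyond these, the verification is routine arithmetic.
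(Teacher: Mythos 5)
Your proposal is correct and takes essentially the same approach as the paper: an explicit equal-weight counterexample in which one agent holds several medium items whose cost after removing any one still exceeds $1/n$, while a single heavy item in $X_{-i}$ inflates $\MMS_i(X_{-i},n-1)$ above that residual cost. The paper's instance uses $n=4$ with costs $(\tfrac13,\tfrac13,\tfrac13,0,0)$ and $X_1=\{f_1,f_2\}$ rather than your $n=3$ parametrization, but the construction and verification are the same in substance (your $n=2$ impossibility remark is a nice, correct addition).
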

\begin{proof}
    Consider an instance with four agents with identical weights and five items. We focus on agent $1$, and the cost of each item, according to agent 1, is shown in Table \ref{tab: mmax_prop1_propx_example}.
    
    \begin{table}[h]
     \centering
     \begin{tabular}{c|cccccc}
     \toprule
    $f_j$ & $f_1$ & $f_2$ & $f_3$ & $f_4$ & $f_5$ \\
    \midrule
     $c_1(f_j)$ & $\frac{1}{3}$ & $\frac{1}{3}$ & $\frac{1}{3}$ & $0$ & $0$\\
     \bottomrule
     \end{tabular}
     \caption{An example that MMAX cannot imply PROP1 or PROPX.}
     \label{tab: mmax_prop1_propx_example}
     \vspace{-0.5em}
 \end{table}
 
Now consider an allocation $\mathcal{X} = (X_1, X_2, X_3, X_4)$ with $X_1 = \{f_1,f_2\}$, $X_2 = \{f_3\}$, $X_3 = \{f_4\}$, and $X_4 = \{f_5\}$. It is trivial that $X_2$ is MMAX to agent 2, $X_3$ is MMAX to agent 3 and $X_4$ is MMAX to agent 4. For agent 1, $X_1$ is MMAX to her, but this allocation is not PROP1 or PROPX to her.  
\end{proof}

The above result directly implies the following corollary.

\begin{corollary}
    An MMA1 allocation may not be PROP1 or PROPX.
\end{corollary}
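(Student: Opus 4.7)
The plan is to observe that the corollary follows immediately from Proposition \ref{pro:mmax_prop1__propx_chores} via the implication chain stated just after the definition of MMA1 and MMAX: any MMAX allocation is also an MMA1 allocation. Hence any counterexample showing that MMAX fails to imply PROP1 or PROPX is automatically a counterexample showing that MMA1 fails to imply PROP1 or PROPX.

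Concretely, I would reuse the instance constructed in the proof of Proposition \ref{pro:mmax_prop1__propx_chores}: four agents with identical weights, five items, with agent $1$'s costs $(c_1(f_1),\ldots,c_1(f_5)) = (\tfrac{1}{3},\tfrac{1}{3},\tfrac{1}{3},0,0)$ and the other three agents assigned costs that make the allocation $X_1 = \{f_1,f_2\}$, $X_2 = \{f_3\}$, $X_3 = \{f_4\}$, $X_4 = \{f_5\}$ MMAX for them. That proposition already verifies the allocation is MMAX for every agent and fails both PROP1 and PROPX for agent $1$ (since $c_1(X_1 \setminus \{f\}) = \tfrac{1}{3} > w_1 \cdot c_1(M) = \tfrac{1}{4}$ for every $f \in X_1$).

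The only step needed here is to invoke the fact, noted immediately after Definition \textbf{MMA1 and MMAX}, that every MMAX allocation is an MMA1 allocation. Applying this to the allocation above yields an MMA1 allocation that is neither PROP1 nor PROPX, proving the corollary. There is no substantive obstacle; the statement is a direct consequence of the preceding proposition together with the inclusion between the two relaxations.
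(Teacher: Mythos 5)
Your proposal is correct and matches the paper's own argument: the paper derives this corollary directly from Proposition \ref{pro:mmax_prop1__propx_chores} using the fact that every MMAX allocation is MMA1, which is exactly the implication you invoke. The numerical check that $c_1(X_1 \setminus \{f\}) = \tfrac{1}{3} > \tfrac{1}{4} = w_1 \cdot c_1(M)$ for every $f \in X_1$ is also consistent with the paper's example.
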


When agents have identical weights, if an allocation $\mathcal{X} = (X_1, \ldots,X_n)$ is MMA1 or MMAX, it is obvious that for each agent $i \in N$, we have $c_i(X_i \setminus \{f_{max}^{i}\}) < \frac{1}{2}$ or $c_i(X_i \setminus \{f_{min}^{i}\}) \leq \frac{1}{2}$, where $f_{max}^{i}$ and $f_{min}^{i}$ are the items with the maximum and minimum cost in $X_i$ respectively. Therefore, it is not hard to see that an MMA1 allocation is also $\frac{n}{2}$-PROP1, and an MMAX allocation is also $\frac{n}{2}$-PROPX.


\subsection{MMS}

Finally, we discuss the relationship between MMS and MMA related notions. 

\begin{proposition}
\label{pro: mms_mma1_mmax_chores}
    When the agents have arbitrary weights, an MMS allocation is not necessarily $\alpha$-MMA1 or $\alpha$-MMAX for any $\alpha \in [1,+\infty)$. 
\end{proposition}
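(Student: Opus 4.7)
The plan is to construct a two-agent instance with highly asymmetric weights in which the natural MMS allocation forces one agent to take the entire chore set, and then exploit the fact that the MMA benchmark $\MMS_i(X_{-i}, n-1)$ collapses to zero when $X_{-i} = \emptyset$. The intuition is that, under heavy weight asymmetry, the MMS benchmark only recognizes how well a bundle can be split to match the weights; if one weight is tiny, the indivisibility of the items makes the balanced split infeasible, pushing the "right" MMS bundle to the heavy agent and leaving nothing for the other.

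Concretely, I will take $n = 2$ with weights $w_1 = 1 - \epsilon$ and $w_2 = \epsilon$ for an arbitrarily small $\epsilon > 0$, together with two chores $f_1, f_2$ of cost $1/2$ each to both agents. The first step is to compute $\MMS_1(M, 2)$ by enumerating the three partitions: any partition giving something to agent $2$ produces a ratio $c_1(Y_2)/w_2 \geq (1/2)/\epsilon$, while giving both items to agent $1$ yields max ratio $1/(1-\epsilon)$. For small $\epsilon$ the latter is minimal, so $\MMS_1(M, 2) = (1-\epsilon)\cdot \tfrac{1}{1-\epsilon} = 1$. By an analogous calculation, $\MMS_2(M,2) = \epsilon/(1-\epsilon)$.

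The second step is to verify that the allocation $X_1 = \{f_1, f_2\}$, $X_2 = \emptyset$ is MMS: agent $1$ satisfies $c_1(X_1) = 1 = \MMS_1(M,2)$, and agent $2$ satisfies $c_2(X_2) = 0 \leq \MMS_2(M,2)$ trivially. The third and decisive step is to note that since $X_{-1} = \emptyset$, the only $(n-1)$-partition of $X_{-1}$ is the empty bundle assigned to agent $2$, so $\MMS_1(X_{-1}, 1) = 0$. Yet for every $f \in X_1$ we have $c_1(X_1 \setminus \{f\}) = 1/2 > 0 = \alpha \cdot \MMS_1(X_{-1}, 1)$ for every finite $\alpha$, so the $\alpha$-MMA1 inequality fails. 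Because $\alpha$-MMAX implies $\alpha$-MMA1, the allocation is not $\alpha$-MMAX either.

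I do not anticipate a substantive obstacle: the only subtle checkpoint is verifying, in the MMS computation, that every split assigning an item to agent $2$ is strictly worse than putting everything on agent $1$, which follows immediately from $1/(2\epsilon) > 1/(1-\epsilon)$ for $\epsilon$ small. The example also shows that this is truly a weighted phenomenon, since the collapse $\MMS_1(X_{-1},1)=0$ relies on the MMS benchmark forcing the heavy-weight agent to shoulder all items.
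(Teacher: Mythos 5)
Your construction is correct, but it is genuinely different from the paper's. The paper uses three agents with weights $\tfrac12,\tfrac14,\tfrac14$ and four items of costs $\tfrac12-\epsilon,\tfrac12-\epsilon,\epsilon,\epsilon$, so that the MMS allocation $X_1=\{f_1,f_2\}$ leaves agent $1$ with a \emph{positive but tiny} benchmark $\MMS_1(X_{-1},2)=2\epsilon$ and ratio $\frac{1/2-\epsilon}{2\epsilon}$; the parameter $\epsilon$ must then be tuned to each target $\alpha$. You instead push the weight asymmetry to the extreme ($w_1=1-\epsilon$, $w_2=\epsilon$, two items of cost $\tfrac12$), so that the unique optimal MMS partition assigns \emph{everything} to agent $1$, making $X_{-1}=\emptyset$ and hence $\MMS_1(X_{-1},1)=0$; a single instance then defeats every finite $\alpha$ at once. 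Your MMS computation is right ($\tfrac{1}{2\epsilon}>\tfrac{1}{1-\epsilon}$ for small $\epsilon$, so $\MMS_1(M,2)=1$ and $X_1=M$, $X_2=\emptyset$ is MMS), and the degenerate convention $\MMS_i(\emptyset,n-1)=0$ is exactly the one the paper itself uses to show MMA allocations need not exist for a single item, so there is no definitional quarrel; note also that agent $1$'s bundle is nonempty, so the MMA1 condition is not vacuous for her. What each approach buys: yours is shorter and gives a uniform counterexample, and it makes transparent why the phenomenon is weighted (with equal weights the all-to-one-agent allocation would not be MMS, consistent with the paper's Proposition \ref{pro:mms-mma1}); the paper's example is less degenerate --- every bundle is nonempty and the benchmark is strictly positive --- which shows that unbounded failure of MMA1/MMAX is not merely an artifact of an empty complement.
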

\begin{proof}
     Consider an instance with three agents and four items.  Assume that their weights are $w_1 = \frac{1}{2}$, $w_2 = \frac{1}{4}$ and $w_3 = \frac{1}{4}$ respectively. We focus on agent $1$, and the cost of each item, according to agent 1, is shown in Table \ref{tab: propx_mma1_mmax_example}.    
      \begin{table}[h]
     \centering
     \begin{tabular}{c|cccc}
     \toprule
    $f_j$ & $f_1$ & $f_2$ & $f_3$ & $f_4$\\
    \midrule
     $c_1(f_j)$ & $\frac{1}{2} - \epsilon$ & $\frac{1}{2} - \epsilon$ & $\epsilon$ & $\epsilon$ \\
     \bottomrule
     \end{tabular}
     \caption{An example of bad approximate MMA1 or MMAX guarantee that an MMS allocation provides, where $0 < \epsilon < \frac{1}{6}$.}
     \label{tab: mms_mma1_mmax_example}
     \vspace{-0.8em}
     \end{table} 
     
    Note that $\MMS_1 = 1-2\epsilon$. Now consider an allocation $\mathcal{X} = (X_1, X_2, X_3)$ with $X_1 = \{f_1,f_2\}$, $X_2 = \{f_3\}$ and $X_3 = \{f_4\}$ and assume that this allocation is MMS to agent 2 and agent 3. For agent 1, this is an MMS allocation, and this allocation is no better than ($\frac{1}{4\epsilon}-\frac{1}{2}$)-MMA1 or ($\frac{1}{4\epsilon}-\frac{1}{2}$)-MMAX since $\frac{c_1(X_1 \setminus \{f\})}{\MMS_1(X_{-1},2)}=\frac{\frac{1}{2}-\epsilon}{2\epsilon}$ for any item $f \in X_{1}$. For any given $\alpha \geq 1$, setting $\epsilon < \frac{1}{4\alpha+2}$, we have $\frac{1}{4\epsilon}-\frac{1}{2} > \alpha$. Thus, this allocation is not $\alpha$-MMA1 or $\alpha$-MMAX.   
\end{proof}

\begin{proposition}
\label{pro:mms-mma1}
    When the agents have the same weight, 
    \begin{enumerate}
        \item any MMS allocation is also MMA1;
        \item there exists $\alpha \in (1,+\infty)$ such that an $\alpha$-MMS allocation is not $\beta$-MMA1 for any $\beta \in [1,+\infty)$;
        \item an MMS allocation is not necessarily $\alpha$-MMAX for any $\alpha \in [1,+\infty)$.
    \end{enumerate} 
\end{proposition}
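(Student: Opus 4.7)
My plan is to address the three parts separately; I expect the main obstacle to be the construction underlying part~(1). For (1) I will argue by contradiction. Suppose $\mathcal{X}$ is an MMS allocation but MMA1 fails for some agent $i$; taking $f^* = \arg\max_{f \in X_i} c_i(f)$, the failure of MMA1 means $c_i(X_i) - c_i(f^*) > \MMS_i(X_{-i}, n-1)$. Let $(Z_1, \ldots, Z_{n-1})$ be an MMS-optimal $(n-1)$-partition of $X_{-i}$. The $n$-partition $(X_i, Z_1, \ldots, Z_{n-1})$ of $M$ has maximum cost $\max(c_i(X_i), \MMS_i(X_{-i}, n-1)) = c_i(X_i)$, so $\MMS_i(M, n) \le c_i(X_i)$; combined with the MMS hypothesis $c_i(X_i) \le \MMS_i(M, n)$, this pins down $c_i(X_i) = \MMS_i(M, n)$.

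Next I will exhibit a strictly better $n$-partition and contradict the definition of $\MMS_i(M, n)$. Let $Z^{\min}$ be the cheapest $Z_j$ and consider $\mathcal{Y}' = (\{f^*\} \cup Z^{\min},\ X_i \setminus \{f^*\},\ \{Z_j\}_{j \neq \min})$. I claim each of its parts has cost strictly less than $c_i(X_i)$: the merged part has cost at most $c_i(f^*) + \MMS_i(X_{-i}, n-1) < c_i(f^*) + (c_i(X_i) - c_i(f^*)) = c_i(X_i)$ by the MMA1-failure inequality; the isolated part $X_i \setminus \{f^*\}$ has cost $c_i(X_i) - c_i(f^*) < c_i(X_i)$ provided $c_i(f^*) > 0$ (otherwise $c_i(X_i) = 0$ and MMA1 is trivial); and each remaining $Z_j$ has cost at most $\MMS_i(X_{-i}, n-1)$, which is strictly less than $c_i(X_i) - c_i(f^*)$. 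Hence $\max \mathcal{Y}' < c_i(X_i) = \MMS_i(M, n)$, the desired contradiction.

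For (2) I will fix $\alpha = 2$ and reuse the instance from Proposition~\ref{pro:propx_mma1_mmax_chores}: two equal-weight agents and four items of common cost $(1/2 - \epsilon, 1/2 - \epsilon, \epsilon, \epsilon)$, with allocation $X_1 = \{f_1, f_2\},\ X_2 = \{f_3, f_4\}$. Here $\MMS_1(M,2) = 1/2$ (attained by $(\{f_1, f_3\}, \{f_2, f_4\})$), so $c_1(X_1) = 1 - 2\epsilon \le 2 \cdot \MMS_1$ and the allocation is $2$-MMS; yet $c_1(X_1 \setminus \{f_1\}) = 1/2 - \epsilon$ and $\MMS_1(X_{-1}, 1) = 2\epsilon$, so the MMA1 ratio $(1/2 - \epsilon)/(2\epsilon)$ exceeds any prescribed $\beta$ for sufficiently small $\epsilon$.

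For (3) the argument of part~(1) pinpoints where MMS can fail to imply MMAX: the construction of $\mathcal{Y}'$ requires the removed item to have strictly positive cost, so a separating example should plant a zero-cost item in the MMS bundle of some agent. I will take two equal-weight agents and three items of common cost $(1 - 2\epsilon, 2\epsilon, 0)$. Then $\MMS_1(M,2) = 1 - 2\epsilon$, so the allocation $X_1 = \{f_1, f_3\},\ X_2 = \{f_2\}$ is exactly MMS for both agents; however, removing the zero-cost item $f_3$ leaves $c_1(X_1 \setminus \{f_3\}) = 1 - 2\epsilon$ against $\MMS_1(X_{-1}, 1) = 2\epsilon$, producing an MMAX ratio $(1-2\epsilon)/(2\epsilon)$ that is unbounded as $\epsilon \to 0$. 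The asymmetry between removing the largest item (benign for MMA1) and removing the smallest item (fatal for MMAX) is precisely the phenomenon that drives this separation, and it is the subtlety that makes the contradiction in part~(1) delicate to pin down.
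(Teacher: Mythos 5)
Your proof is correct and follows essentially the same route as the paper's: part (1) is the same contradiction argument that moves the largest item of $X_i$ into a bundle of an optimal $(n-1)$-partition of $X_{-i}$ to produce an $n$-partition beating $\MMS_i(M,n)$, and parts (2) and (3) use counterexamples of the same structure (yours for (2) reuses the four-item $(\tfrac12-\epsilon,\tfrac12-\epsilon,\epsilon,\epsilon)$ instance with $\alpha=2$ where the paper uses a three-item instance with $\alpha=\tfrac32$, and your (3) is a reparametrization of the paper's zero-cost-item example). The only differences are cosmetic, e.g., merging $f^*$ into the cheapest rather than the most expensive bundle and the redundant but harmless derivation of $c_i(X_i)=\MMS_i(M,n)$.
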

\begin{proof}
For the first statement, let $\mathcal{X}=(X_1,\ldots, X_n)$ be an MMS allocation. Suppose, for the contradiction, that $\mathcal{X}$ is not an MMA1 allocation, i.e., there exists some agent $i \in N$ such that $c_i(X_i \setminus \{f\}) > \MMS_i(X_{-i},n-1)$ holds for any item $f \in X_{i} $. Let $\mathcal{Y}= (Y_j)_{j \neq i}$, where $\max_{j \in N \setminus \{i\}}c_i(Y_j) = \MMS_i(X_{-i},n-1)$, be an $(n-1)$-allocation of $X_{-i}$. In allocation $\mathcal{Y}$, we assume that bundle $Y_{k}$ satisfies $c_i(Y_k) = \MMS_i(X_{-i},n-1)$. Next, if we choose one item $f_{max}^{i}$ from $X_{i}$, where $f_{max}^{i} = \arg \max_{f \in X_{i}}c_i(f)$, and put it in bundle $Y_{k}$, we obtain a new $n$-partition of $M$, i.e., $\mathcal{P}^{new} = \{X_i \setminus \{f_{max}^{i}\}, Y_{k} \cup \{f_{max}^{i}\ \}\} \cup \{Y_{j}\}_{j \neq i,k}$. In $\mathcal{P}^{new}$, the cost of the bundle with the maximum cost is strictly less than $c_{i}(X_i)$, which contradicts the definition of $\MMS_i(M,n)$.

Regarding the second statement, consider an instance with two agents with identical weights and three items. We focus on agent $1$, and the cost of each item, according to agent 1, is shown in Table \ref{tab: mms_mma1_example}.

\begin{table}[h]
     \centering
     \begin{tabular}{c|ccc}
     \toprule
    $f_j$ & $f_1$ & $f_2$ & $f_3$ \\
    \midrule
     $c_1(f_j)$ & $\frac{2}{3+\epsilon}$ & $\frac{1}{3+\epsilon}$ & $ \frac{\epsilon}{3+\epsilon}$  \\
     \bottomrule
     \end{tabular}
     \caption{An example of bad approximate MMA1 guarantee that an approximate MMS allocation with symmetric agents provides, where $0 < \epsilon < 1$.}
     \label{tab: mms_mma1_example}
     \vspace{-0.8em}
     \end{table}  
     
Note that $\MMS_1 = \frac{2}{3+\epsilon}$. Now consider an allocation $\mathcal{X} = (X_1, X_2)$ with $X_1 = \{f_1,f_2\}$ and $X_2 = \{f_3\}$ and assume that this allocation is MMS to agent 2. It is easy to see that, for agent 1, this is a $\frac{3}{2}$-MMS allocation, and this allocation is no better than $\frac{1}{\epsilon}$-MMA1 since $\frac{c_1(X_1 \setminus \{f_1\})}{\MMS_1(X_{-1},1)}=\frac{1}{\epsilon}$. For any $\beta \geq 1$, setting $\epsilon < \frac{1}{\beta}$, we have $\frac{1}{\epsilon} > \beta$. Thus, this allocation is not $\beta$-MMA1.

For the last statement, suppose that we have two agents with identical weights and three items. We focus on agent $1$, and the cost of each item, according to agent 1, is shown in Table~\ref{tab: mms_mmax_example}.

\begin{table}[h]
     \centering
     \begin{tabular}{c|ccc}
     \toprule
    $f_j$ & $f_1$ & $f_2$ & $f_3$ \\
    \midrule
     $c_1(f_j)$ & $\frac{1}{1+\epsilon}$ & $\frac{\epsilon}{1+\epsilon}$ & $ 0$  \\
     \bottomrule
     \end{tabular}
     \caption{An example of bad approximate MMAX guarantee that an MMS allocation with symmetric agents provides, where $0 < \epsilon < 1$.}
     \label{tab: mms_mmax_example}
     \vspace{-0.5em}
     \end{table}  
     
It is not hard to check $\MMS_1 = \frac{1}{1+\epsilon}$. Next, consider an allocation $\mathcal{Y} = (Y_1, Y_2)$ with $Y_1 = \{f_1,f_3\}$ and $Y_2 = \{f_2\}$ and assume that this allocation is MMS to agent 2. For agent 1, this is an MMS allocation, and this allocation is no better than $\frac{1}{\epsilon}$-MMAX since $\frac{c_1(X_1 \setminus \{f_3\})}{\MMS_1(X_{-1},1)}=\frac{1}{\epsilon}$. Similarly, for any $\alpha \geq 1$, setting $\epsilon < \frac{1}{\alpha}$, we have $\frac{1}{\epsilon} > \alpha$. Thus, this allocation is not $\alpha$-MMAX. 
\end{proof}

\begin{proposition}
    An MMA allocation may not be MMS, even when the agents have the same weight.
\end{proposition}

\begin{proof}
    Consider an instance with three agents and nine items from \cite{feige2021tight}. We focus on agent $1$, and the cost of each item, according to agent 1, is shown in Table \ref{tab: mma_mms_example}.

    \begin{table}[h]
     \centering
     
     \begin{tabular}{c|ccccccccc}
     \toprule
    $f_j$ & $f_1$ & $f_2$ & $f_3$ & $f_4$ & $f_5$ & $f_6$ & $f_7$ &$f_8$& $f_9$\\
    \midrule
     $c_1(f_j)$ &  $\frac{6}{129}$&  $\frac{15}{129}$&  $\frac{22}{129}$& $\frac{26}{129}$& $\frac{10}{129}$ & $\frac{7}{129}$ & $\frac{12}{129}$ & $\frac{19}{129}$ & $\frac{12}{129}$\\
     \bottomrule
     \end{tabular}
     
     \vspace{-0.5em}
     
     \caption{An example that MMA can not imply MMS.}
     \label{tab: mma_mms_example}
     
 \end{table}
 
It can be verified that $\MMS_1 =\frac{43}{129}$. Consider an allocation $\mathcal{X}=(X_1, X_2, X_3)$ with $X_1 = \{f_1, f_4, f_7\}$, $X_2 = \{f_2,f_5,f_8\}$ and $X_3 = \{f_3,f_6,f_9\}$. Assume that $X_2$ is MMA to agent 2 and $X_3$ is MMA to agent 3. Allocation $\mathcal{X}$ is not an MMS to agent 1 since $c_1(X_1) = \frac{6+26+12}{129}=\frac{44}{129} > \frac{43}{129}$. However, $X_1$ is MMA to agent 1 because $c_1(X_1) = \MMS_{1}(X_{-1},2) =\frac{44}{129}$.
\end{proof}

The above result directly implies the following corollary.

\begin{corollary}
    An MMA1 or MMAX allocation may not be MMS.
\end{corollary}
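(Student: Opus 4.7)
The plan is to observe that this corollary follows immediately from the previous proposition by the monotonicity of the MMA hierarchy. By the definitions given in the preliminaries, every MMA allocation is automatically MMAX, and every MMAX allocation is automatically MMA1, since the condition $c_i(X_i \setminus \{f\}) \le \MMS_i(X_{-i},n-1)$ is satisfied vacuously by taking the empty removal in the MMA case and for any / some $f \in X_i$ in the MMAX / MMA1 cases.

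Thus the three-agent, nine-item instance from \cite{feige2021tight} used in the preceding proposition already does the job: the allocation $\mathcal{X}=(\{f_1,f_4,f_7\},\{f_2,f_5,f_8\},\{f_3,f_6,f_9\})$ was shown to be MMA while failing to be MMS for agent $1$ (because $c_1(X_1)=44/129 > 43/129 = \MMS_1$). Being MMA, it is in particular MMAX and MMA1, but it is still not MMS.

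So the proof is a single sentence: invoke the previous proposition and the chain of implications MMA $\Rightarrow$ MMAX $\Rightarrow$ MMA1. There is no real obstacle here; the only thing to make sure of is that the same allocation witnesses both MMAX and MMA1 against MMS simultaneously, which is immediate since both notions are weaker than MMA and the failure of MMS is a property of the allocation itself, independent of which relaxation of MMA we place it in.
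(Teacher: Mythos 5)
Your proposal is correct and matches the paper's argument exactly: the paper likewise derives this corollary directly from the preceding proposition via the chain MMA $\Rightarrow$ MMAX $\Rightarrow$ MMA1, using the same nine-item instance as the witness. The only minor imprecision is your phrase ``vacuously by taking the empty removal''; the implication MMA $\Rightarrow$ MMAX really rests on $c_i(X_i\setminus\{f\})\le c_i(X_i)$ for nonnegative costs, but this is exactly the monotonicity the paper also relies on, so there is no gap.
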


Before the end of this section, we use the following example to illustrate that there are some scenarios where finding approximate MMAX allocations is preferred over finding approximate MMS allocations. Consider an instance with three agents whose weights are $w_1=\frac{1}{2}$ and $w_2 = w_3 = \frac{1}{4}$ respectively and five items. 
Assume that they have the same cost function shown in Table \ref{tab: mms_mmax}.
\begin{table}[b]
     \centering
     \begin{tabular}{c|cccccc}
     \toprule
    $f_j$ & $f_1$ & $f_2$ & $f_3$ & $f_4$ & $f_5$ \\
    \midrule
     $c(f_j)$ &  $\frac{3}{4+9\epsilon}$&  $\frac{1}{4+9\epsilon}$&  $\frac{3\epsilon}{4+9\epsilon}$& $\frac{3\epsilon}{4+9\epsilon}$& $\frac{3\epsilon}{4+9\epsilon}$\\
     \bottomrule
     \end{tabular}
    \caption{An example that approximate MMAX is preferred, where $\epsilon$ is arbitrarily small.}
    \label{tab: mms_mmax}
 \end{table}
 
It can be verified that $\MMS_1 = \frac{3}{4+9\epsilon}$ and $\MMS_2 = \MMS_3= \frac{3}{8+18\epsilon}$. Allocating $X_1=\{f_1,f_2\}$ to agent 1 is $\frac{4}{3}$-MMS to her, which is not too bad regarding MMS fairness
but is severely unfair since almost all cost is on agent 1.
Fortunately, any (bounded-approximate) MMAX allocation ensures that one agent can only get one of the items $f_1$ and $f_2$, since we have 
\[
\MMS_1(\{f_3,f_4,f_5\},2)= \frac{12\epsilon}{4+9\epsilon} \ll \min\{c(f_1), c(f_2)\},
\]
and similarly for $i=2,3$,
\[
\MMS_i(\{f_3,f_4,f_5\},2)= \frac{3\epsilon}{4+9\epsilon} \ll \min\{c(f_1), c(f_2)\}.
\]

\section{Computing Approximate MMAX Allocations}
By Corollary \ref{coro:MMA1}, we know that an MMA1 allocation can be computed in polynomial time.
In this section, we design efficient algorithms to compute approximate MMAX allocations for agents with arbitrary weights. 

\subsection{Approximation Algorithm for Arbitrary $n$}

By the definition of MMAX, each agent $i$ will compare $c_i(X_i \setminus \{f\})$ for an item $f\in X_i$ and $\MMS_i(X_{-i},n-1)$, where $\MMS_i(X_{-i},n-1)$ depends on the allocation $\mathcal{X}$ instead of being a prefixed share.
That is, an MMAX allocation, on the one hand, requires that every agent has a low cost on her bundle excluding any item (similar to PROPX), and on the other hand, requires the existence of some agent whose bundle is worse off than hers (similar to EFX) no matter how the items that are not allocated to her are distributed.
Proposition \ref{pro:propx_mma1_mmax_chores} shows that a PROPX allocation does not provide any guarantee for MMAX. 
However, we show that we can turn an arbitrary PROPX allocation into an approximate MMAX one with some modifications. 

Consider the following instance, where we have three agents such that $w_1=\frac{1}{2}$ and $w_2=w_3=\frac{1}{4}$, and four items. We assume that they have the same cost function, and the costs of four items are shown in Table \ref{tab:alg:mmax}.
Consider a PROPX allocation $\mathcal{X} = (X_1, X_2,X_3)$ with $X_1 = \{f_1,f_2\}$, $X_2 = \{f_3\}$ and $X_3 = \{f_4\}$. It is clear that $X_2$ and $X_3$ are PROPX to agents 2 and 3, and $X_1=\{f_1, f_2\}$ is PROPX to agent 1.
However, $X_1$ is far from being MMAX to agent 1 since $f_3$ and $f_4$ have infinitesimally small costs compared to the weights of the other two agents.
 \begin{table}[h]
     \centering
     \begin{tabular}{c|cccc}
     \toprule
    $f_j$ & $f_1$ & $f_2$ & $f_3$ & $f_4$  \\
    \midrule
     $c(f_j)$ & $\frac{1}{2}-\epsilon$ & $\frac{1}{2}-\epsilon$ & $\epsilon$ & $\epsilon$ \\
     \bottomrule
     \end{tabular}
     \caption{An example of hard instance, where $\epsilon$ is arbitrarily small.}
     \label{tab:alg:mmax}
 \end{table}
 
In this example, the PROPX allocation fails to provide a good approximation of MMAX to an agent because the items allocated to an agent are all very heavy compared to those allocated to the other agents.  
Then our algorithm starts with an arbitrary PROPX allocation, and if an agent only obtains a single item, her bundle is trivially MMAX to her.
The bad situation is when an agent $i$ gets at least two items where each item has a cost larger than a fraction (a parameter determined by $n$) of that of the items allocated to the other agents. Next we exchange two arbitrary items in $X_i$ with the two smallest bundles among the other agents in agent $i$'s perspective. 
Note that if $n=2$, we can only exchange one item with one bundle.
We check if the bad situation happens for the agents one by one and return the final allocation. 
The formal description of the algorithm is shown in Algorithm \ref{alg: swap algorithm}.




\begin{algorithm}[t]
\caption{Swap algorithm}
\label{alg: swap algorithm}
\KwIn{A PROPX allocation $\mathcal{X} = (X_1, \ldots, X_n)$}
\KwOut{An approximate MMAX allocation $\mathcal{X}^{*}$}
For every agent $i \in N$, let $f_{min}^{i}$ = $\arg \min_{f \in X_i} c_i(f)$ and $f_{max}^{i}$ = $\arg \max_{f \in X_i} c_i(f)$;

{\color{gray} \% - - - - - - - - - - - - - - - - $n = 2$ - - - - - - - - - - - - - - - - }

\If{$n=2$}{
\For{$i = 1$ \KwTo $n$}{\If{$\lvert X_i \rvert  \geq 2$  \ and \ $c_i(f_{min}^{i}) > \lambda \cdot c_i(X_{-i})$ \label{lin: heavy_bundle_criterion_1}}{$X_i = M \setminus \{f_{min}^{i}\}$  and $ X_{-i} = \{f_{min}^{i}\}$;}}}

{\color{gray} \% - - - - - - - - - - - - - - - - $n \ge 3$ - - - - - - - - - - - - - - - -}

\If{$n\geq 3$}{
\For{$i = 1$ \KwTo $n$}{\If{$\lvert X_i \rvert \geq 2$ \ and \ $c_i(f_{min}^{i}) > \lambda \cdot c_i(X_{-i})$ \label{lin: heavy_bundle_criterion_2}}{Choose two bundles $X_j$ and $X_k$ ($j \neq k$) that have the two smallest costs from $X_{-i}$ according to agent $i$;\\ 
$X_{i} = X_{i} \cup X_{j} \cup X_{k} \setminus \{f_{min}^{i}, f_{max}^{i}\}$, $X_{j} = \{f_{min}^{i}\}$ and $ X_{k} = \{f_{max}^{i}\}$;}}}

\Return $\mathcal{X}^{*}$
\end{algorithm}

Before showing the performance of the algorithm, we first prove the following property for PROPX allocations. 

\begin{lemma}
\label{lem:wpropx_property}
     Given any PROPX allocation $\mathcal{X}$\,$=(X_1,\ldots,X_n)$, the following inequality holds for every agent $i \in N$,
    $$ \frac{c_i(X_i \setminus \{f_{min}^{i}\})}{w_i} \leq \frac{c_i(X_{-i}) + c_i(f_{min}^{i})}{1-w_i},$$ 
    where $f_{min}^{i} = \arg \min_{f \in X_i}c_i(f)$.
\end{lemma}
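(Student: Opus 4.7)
The plan is to unfold the PROPX definition at the particular item $f_{min}^{i}$, split $c_i(M)$ into the pieces that appear on both sides of the claimed inequality, and then rearrange. This should be essentially a one-line calculation, so the "main obstacle" is really just the bookkeeping.

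First, fix an agent $i\in N$. Since $\mathcal{X}$ is PROPX (and $\mathcal{X}$ in this lemma is assumed nonempty for $i$ because we are talking about $f_{min}^{i}$), Definition~\ref{def: prop1_propx_chores} applied with $\alpha=1$ and with the particular item $f=f_{min}^{i}\in X_i$ gives
\[
c_i(X_i\setminus\{f_{min}^{i}\})\ \le\ w_i\cdot c_i(M).
\]
Next I would decompose the total cost. By additivity, $c_i(M)=c_i(X_i)+c_i(X_{-i})$, and then $c_i(X_i)=c_i(X_i\setminus\{f_{min}^{i}\})+c_i(f_{min}^{i})$, so
\[
c_i(X_i\setminus\{f_{min}^{i}\})\ \le\ w_i\cdot\bigl(c_i(X_i\setminus\{f_{min}^{i}\})+c_i(f_{min}^{i})+c_i(X_{-i})\bigr).
\]

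Finally, I would move the $w_i\cdot c_i(X_i\setminus\{f_{min}^{i}\})$ term from the right-hand side to the left, obtaining
\[
(1-w_i)\cdot c_i(X_i\setminus\{f_{min}^{i}\})\ \le\ w_i\cdot\bigl(c_i(X_{-i})+c_i(f_{min}^{i})\bigr),
\]
and dividing by $w_i(1-w_i)>0$ (which is legitimate because $0<w_i<1$) yields exactly the claimed inequality. There is no real obstacle here; the only subtle point is remembering that $c_i(M)$ contains both $c_i(X_{-i})$ and the part of $c_i(X_i)$ that we subtracted off, which is why $c_i(f_{min}^{i})$ reappears on the right-hand side of the final bound rather than cancelling out.
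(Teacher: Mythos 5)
Your proof is correct and follows exactly the same route as the paper's: apply the PROPX guarantee at $f_{min}^{i}$, expand $c_i(M)=c_i(X_i\setminus\{f_{min}^{i}\})+c_i(X_{-i})+c_i(f_{min}^{i})$, and rearrange. Nothing to add.
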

\begin{proof}
By the definition of PROPX allocations, for any agent $i \in N$, we have
\begin{equation}
\label{eqn: wpropx}
    c_i(X_i \setminus \{f_{min}^{i}\}) \leq w_i \cdot c_i(M).
\end{equation}
Next, we expand $c_i(M)$, i.e., $c_i(M) = c_i(X_i \setminus \{f_{min}^{i}\}) + c_i(X_{-i}) + c_i(f_{min}^{i})$, and then Inequality (\ref{eqn: wpropx}) can be changed into the desired form
    \begin{equation*}     
       \frac{c_i(X_i \setminus \{f_{min}^{i}\})}{w_i}  \leq \frac{c_i(X_{-i}) + c_i(f_{min}^{i})}{1-w_i},
    \end{equation*}  
which completes the proof of the lemma.
\end{proof}


\begin{figure}
    \centering 
    \includegraphics[width=0.4 \textwidth]{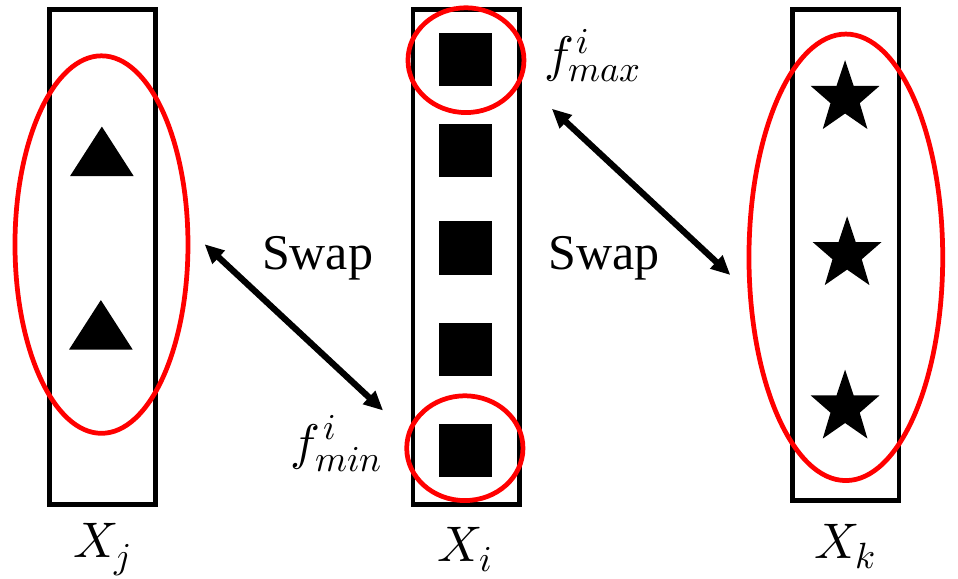}
    \caption{The illustration of \textit{swap}  when $n\geq 3$ and $X_i$ satisfies the condition in Step \ref{lin: heavy_bundle_criterion_2} in Algorithm \ref{alg: swap algorithm}.}
    \label{fig: swap_process}
\end{figure}

\begin{theorem}
\label{thm:wmmax_alg_ratio}
    When $n=2$, Algorithm \ref{alg: swap algorithm} computes a $\frac{3+\sqrt{5}}{2}(\approx 2.618)$-MMAX allocation by setting $\lambda =\frac{1+\sqrt{5}}{2} \approx 1.618$. When $n\ge 3$, Algorithm \ref{alg: swap algorithm} computes a $(1+\lambda)$-MMAX allocation, where $\lambda = \frac{3-n+\sqrt{n^2+10n-7}}{4n-4}$ and $\frac{1}{n-1} < \lambda < \frac{2}{n-1}$.
\end{theorem}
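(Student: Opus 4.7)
The plan is to show every agent satisfies the $(1+\lambda)$-MMAX condition in the final allocation $\mathcal{X}^*$, where $\lambda$ is the value prescribed by the theorem. Throughout, I use the pigeonhole bound $\MMS_i(S,n-1) \ge w_i c_i(S)/(1-w_i)$ for any $S\subseteq M$, which holds because the maximum of $c_i(Z_j)/w_j$ over any $(n-1)$-partition of $S$ dominates the weighted average $c_i(S)/(1-w_i)$. If $|X_i^*|\le 1$ the MMAX condition is vacuous, so I focus on two cases: (i) agent $i$ did not trigger the swap in its iteration, in which case $X_i$ equals the PROPX bundle and the other agents' swaps only rearrange items inside $M\setminus X_i$, leaving the set $X_{-i}$ unchanged; and (ii) agent $i$ did trigger the swap and was not later absorbed into a singleton.

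For case (i), the check $c_i(f_{min}^i)\le \lambda c_i(X_{-i})$ holds on the final state, so Lemma \ref{lem:wpropx_property} combined with the averaging bound yields
\[
c_i(X_i\setminus\{f_{min}^i\}) \le \frac{w_i\bigl(c_i(X_{-i})+c_i(f_{min}^i)\bigr)}{1-w_i} \le \frac{w_i(1+\lambda)c_i(X_{-i})}{1-w_i} \le (1+\lambda)\MMS_i(X_{-i},n-1).
\]

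For case (ii), write $P_{\min}=c_i(f_{min}^i)$ and $P_{\max}=c_i(f_{max}^i)$. Lemma \ref{lem:wpropx_property} together with $|X_i^{\mathrm{old}}|\ge 2$ forces $P_{\max}\le w_i$, and feeding the implied bound $c_i(X_{-i}^{\mathrm{old}})\ge (1-w_i)-P_{\min}$ into the trigger condition $P_{\min}>\lambda c_i(X_{-i}^{\mathrm{old}})$ yields the two key inequalities $P_{\min} > \lambda(1-w_i)/(1+\lambda)$ and $w_i > \lambda/(1+2\lambda)$. When $n=2$, $X_i^*=M\setminus\{f_{min}^i\}$ and $\MMS_i(X_{-i}^*,1)=w_i P_{\min}/(1-w_i)$, so the MMAX inequality rearranges to $P_{\min}\ge (1-w_i)/(1+\lambda w_i)$; the two displayed bounds imply this whenever $\lambda^2 w_i\ge 1$, and the choice $\lambda=\phi$ (the golden ratio) satisfies $\phi^3=2\phi+1$, equivalently $\lambda/(1+2\lambda)=1/\lambda^2$, so the threshold $w_i>\lambda/(1+2\lambda)$ already forces $\lambda^2 w_i>1$, giving $\alpha=1+\phi=(3+\sqrt 5)/2$. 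When $n\ge 3$, the swap hands $f_{min}^i$ and $f_{max}^i$ to the two $c_i$-smallest bundles $X_j,X_k$ of $X_{-i}^{\mathrm{old}}$, and since the sum of the two smallest of $n-1$ nonnegative values is at most twice their average, $c_i(X_j)+c_i(X_k)\le \frac{2}{n-1}c_i(X_{-i}^{\mathrm{old}})$; chaining this with $P_{\max}\ge P_{\min}>\lambda(1-w_i)/(1+\lambda)$ and $c_i(X_{-i}^{\mathrm{old}})\ge (1-w_i)-P_{\min}$ gives
\[
c_i(X_{-i}^*) \ge (1-w_i)\cdot\frac{(n-3)+2\lambda(n-1)}{(n-1)(1+\lambda)},
\]
and the MMAX requirement $c_i(X_{-i}^*)\ge (1-w_i)/(1+\lambda w_i)$ evaluated at the worst-case $w_i=\lambda/(1+2\lambda)$ simplifies, after clearing denominators, to the quadratic inequality $2(n-1)\lambda^2 + (n-3)\lambda - 2 \ge 0$, whose positive root is exactly $\lambda=(3-n+\sqrt{n^2+10n-7})/(4n-4)$. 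Agents $j$ and $k$ receive singletons after the swap so their MMAX is trivial, and the bounds $1/(n-1)<\lambda<2/(n-1)$ fall out by substituting these two candidate values into the quadratic.

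The main obstacle is the bookkeeping across iterations: verifying that $X_{-i}$ really is invariant as a set under other agents' swaps for non-triggering $i$ (so the no-trigger algebra applies to the final state), confirming that a triggering agent later raided into a singleton is automatically MMAX, and matching the algebra so the stated $\lambda$ is exactly the tight root of the quadratic at the critical $w_i=\lambda/(1+2\lambda)$.
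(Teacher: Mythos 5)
Your proof is correct: every step checks out, and the quadratic $2(n-1)\lambda^2+(n-3)\lambda-2\ge 0$ you arrive at is exactly the one the paper's $\lambda$ is the positive root of (and your endpoint evaluations $q(\tfrac{1}{n-1})=-1<0<q(\tfrac{2}{n-1})=\tfrac{4}{n-1}$ correctly give the range of $\lambda$). The skeleton is the same as the paper's --- same case split (no trigger / trigger, $n=2$ / $n\ge 3$), same use of Lemma \ref{lem:wpropx_property}, the same two-smallest-bundles bound $c_i(X_j\cup X_k)\le \tfrac{2}{n-1}c_i(X_{-i})$, and the same implicit averaging bound $\MMS_i(S,n-1)\ge w_i c_i(S)/(1-w_i)$ --- but the way you close the argument is genuinely different. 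The paper chains two multiplicative ratio bounds, $\tfrac{c_i(X_i^{new})}{c_i(X_i\setminus\{f_{min}^i\})}<\tfrac{2}{(n-1)\lambda}$ and $\tfrac{c_i(X_{-i}^{new})}{c_i(X_{-i})+c_i(f_{min}^i)}>\tfrac{(2n-2)\lambda+(n-3)}{(n-1)(\lambda+1)}$, and multiplies them through the PROPX lemma; you instead convert everything into absolute bounds in terms of $w_i$ and $P_{\min}$, extract the constraint $w_i>\lambda/(1+2\lambda)$ from the trigger condition together with $P_{\max}\le w_i$, and verify the single inequality $c_i(X_{-i}^*)\ge (1-w_i)/(1+\lambda w_i)$ at the worst-case weight. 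Your route has the added benefit of exposing the critical weight $w_i=\lambda/(1+2\lambda)$ at which the bound is tight (and, for $n=2$, the clean identity $\lambda/(1+2\lambda)=1/\lambda^2$ for the golden ratio), which the paper's ratio-chaining keeps hidden; the paper's version avoids having to argue monotonicity in $w_i$. The bookkeeping issues you flag at the end (invariance of $X_{-i}$ as a set for non-triggering agents, and triviality for agents later reduced to singletons) are real but resolve exactly as you say; the paper glosses over them entirely, so you are if anything more careful on this point.
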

\begin{proof}
PROPX allocations exist and can be computed by the \textit{Bid and Take} algorithm \cite{li2022almost}. Therefore, the input allocation of Algorithm \ref{alg: swap algorithm} can be guaranteed.

Let $\mathcal{X}=(X_1,\ldots,X_n)$ be a PROPX allocation. Fix one agent $i \in N$. If $\lvert X_i \rvert = 1$, $X_i$ is trivially MMAX to agent $i$. 


\underline{Case 1: $n = 2$}. We pick $f_{min}^{i}$ from $X_{i}$, and swap it with $X_{-i}$. Let $X_{i}^{new}$ and $X_{j}^{new}$ denote the new bundles of agent $i$ and the other agent $j \in N \setminus \{i\}$, respectively, after the swap. For these two new bundles, we have $X_{i}^{new} = M \setminus \{f_{min}^{i}\} = X_i \cup X_{-i} \setminus \{f_{min}^{i}\}$ and $X_{j}^{new} = \{f_{min}^{i}\}$. For agent $j$, this new allocation is MMAX to her. 

For agent $i$, we have
\begin{align}   
  \frac{c_i(X_i^{new})}{w_i}  &= \frac{c_i(X_i \setminus \{f_{min}^{i}\}) + c_i(X_{-i})}{w_i} \nonumber
  \\
  & \leq (1+\lambda^{-1}) \cdot \frac{ c_i(X_i \setminus \{f_{min}^{i}\})}{w_i}, \label{eqn:wmmax_phi_first_inequality}
\end{align}
where the inequality follows from $c_i(X_i \setminus \{f_{min}^{i}\}) \geq c_i(f_{min}^{i}) > \lambda \cdot c_i(X_{-i})$, and 
\begin{equation}
  \frac{c_i(X_{-i}) + c_i(f_{min}^{i})}{1-w_i}  \leq (1+\lambda^{-1}) \cdot \frac{ c_i(X_{-i}^{new})}{1-w_i},
\label{eqn:wmmax_phi_second_inequality}
\end{equation}
where $c_i(X_{-i}^{new}) = c_i(f_{min}^{i})$.

Note that $\lambda = \frac{\sqrt{5}+1}{2}$, which is the positive root to the quadratic equation $\lambda^{2}-\lambda-1=0$. Combining Inequality (\ref{eqn:wmmax_phi_first_inequality}) and (\ref{eqn:wmmax_phi_second_inequality}), and by Lemma \ref{lem:wpropx_property}, we get
\begin{equation*}
\frac{c_i(X_i^{new})}{w_i} \leq  (1+\lambda^{-1})^{2} \cdot \frac{ c_i(X_{-i}^{new})}{1-w_i} = (1+\lambda) \cdot \frac{ c_i(X_{-i}^{new})}{1-w_i}.  
\end{equation*}

Therefore, when $n=2$, for any agent $i \in N$, the allocation $\mathcal{X}^{*}$ returned by Algorithm \ref{alg: swap algorithm} is $(1+\lambda)$-MMAX.

\underline{Case 2: $n > 2$}.
Fix one agent $i \in N$. If $\lvert X_i \rvert = 1$, for agent $i$, this allocation is trivially MMAX. If $\lvert X_i \rvert \geq 2$ and $c_i(f_{min}^{i}) \leq \lambda \cdot c_i(X_{-i})$, by Lemma \ref{lem:wpropx_property}, we have
\begin{equation*}
\begin{aligned}
  \frac{c_i(X_i \setminus \{f_{min}^{i}\})}{w_i} & \leq \frac{c_i(X_{-i}) + c_i(f_{min}^{i})}{1-w_i}
  \\
  & \leq (1+\lambda) \cdot \frac{ c_i(X_{-i})}{1-w_i}.    
\end{aligned}  
\end{equation*}

If $\lvert X_i \rvert  \geq 2$  and  $c_i(f_{min}^{i}) > \lambda \cdot c_i(X_{-i})$, we pick $f_{min}^{i}$ and $f_{max}^{i}$ from $X_i$, choose two bundles $X_j$ and $X_k$ ($j \neq k$), where these two bundles have the two smallest costs from $X_{-i}$ in agent $i$'s perspective, and then swap them. Let $X_{i}^{new}$, $X_{j}^{new}$ and $X_{k}^{new}$ denote the new bundles of agents $i$, $j$ and $k$, respectively, after the swap. For these three new bundles, we have $X_{i}^{new} = X_{i} \cup X_{j} \cup X_{k} \setminus \{f_{min}^{i},f_{max}^{i}\}$, $X_j^{new} = \{f_{min}^{i}\}$ and $X_k^{new} = \{f_{max}^{i}\}$. It is easy to see that $X_j^{new}$ and $X_{k}^{new}$ are MMAX to agents $j$ and $k$, respectively.

For agent $i$, we have
\begin{align}
       &  \frac{c_i(X_i^{new})}{w_i} / \frac{c_i(X_i \setminus \{f_{min}^{i}\})}{w_i} \nonumber
    \\
    = \ & \frac{c_i(X_i \setminus \{f_{min}^{i}\}) + c_i(X_j \cup X_k) - c_i(f_{max}^{i})}{c_i(X_{i} \setminus \{f_{min}^{i}\})} \nonumber
    \\
    < \ & \frac{c_i(X_{i} \setminus \{f_{min}^{i}\}) + (\frac{2}{n-1}-\lambda) \cdot c_i(X_{-i})}{c_i(X_{i} \setminus \{f_{min}^{i}\})} \nonumber
    \\
    < \ & \frac{c_i(X_{i} \setminus \{f_{min}^{i}\}) + (\frac{2}{(n-1)\lambda}-1) \cdot c_i(X_{i} \setminus \{f_{min}^{i}\})}{c_i(X_{i} \setminus \{f_{min}^{i}\})} \nonumber
    \\
    = \ & \frac{2}{(n-1)\lambda},\label{eqn:wmmax_x0_first_equality}
\end{align}  
where the first inequality follows from $c_i(X_j \cup X_k) \leq \frac{2}{n-1} \cdot c_i(X_{-i})$ and $c_{i}(f_{max}^{i}) \geq c_i(f_{min}^{i}) > \lambda \cdot c_i(X_{-i}{})$, and the second from $\frac{2}{n-1} > \lambda$, which is explained at the end of the proof, and $c_i(X_i \setminus \{f_{min}^{i}\}) \geq c_i(f_{min}^{i}) > \lambda \cdot c_i(X_{-i})$.

Besides that, we have 
\begin{align}
    & \frac{ c_i(X_{-i}^{new})}{1-w_i} / \frac{c_{i}(X_{-i}) + c_i(f_{min}^{i})}{1-w_i} \nonumber
    \\
    = \ & \frac{c_i(X_{-i}) - c_i(X_j \cup X_k) + c_i(f_{max}^{i}) + c_i(f_{min}^{i})}{c_i(X_{-i}) + c_i(f_{min}^{i})}   \nonumber
    \\
    \geq \ & \frac{\frac{n-3}{n-1}\cdot c_i(X_{-i})  + 2 \cdot c_i(f_{min}^{i})}{c_i(X_{-i}) + c_i(f_{min}^{i})}   \nonumber
    \\
    = \ & \frac{n-3}{n-1} +  \frac{\frac{n+1}{n-1} \cdot c_i(f_{min}^{i})}{c_i(X_{-i})+ c_i(f_{min}^{i})}  \nonumber
    \\
    > \ & \frac{n-3}{n-1} + \frac{\frac{n+1}{n-1}}{\frac{1}{\lambda}+ 1} 
    = \frac{(2n-2)\lambda + (n-3)}{(n-1)(\lambda+1)}, \label{eqn:wmmax_x0_second_equality} 
\end{align}  
where the first inequality follows from $c_i(X_j \cup X_k) \leq \frac{2}{n-1} \cdot c_i(X_{-i})$ and $c_{i}(f_{max}^{i}) \geq c_i(f_{min}^{i})$.

 Note that $\lambda =  \frac{3-n+\sqrt{n^2+10n-7}}{4n-4}$, which is a root to the quadratic equation $(2n-2)\lambda^{2}+(n-3)\lambda-2=0$. Putting Inequalities (\ref{eqn:wmmax_x0_first_equality}) and (\ref{eqn:wmmax_x0_second_equality}) together, and by Lemma \ref{lem:wpropx_property}, we have 
\begin{equation*}
\begin{aligned}
\frac{c_i(X_i^{new})}{w_i} & < \frac{2}{(n-1)\lambda} \cdot \frac{c_i(X_i \setminus \{f_{min}^{i}\})}{w_i} 
  \\
   & < \frac{2}{(n-1)\lambda} \cdot \frac{c_i(X_{-i})+c_i(f_{min}^{i})}{1-w_i} 
  \\
  & <   \frac{2\lambda+2}{(2n-2)\lambda^{2} + (n-3)\lambda} \cdot \frac{c_i(X_{-i}^{new})}{1-w_i} 
  \\
  & = (1+\lambda) \cdot \frac{ c_i(X_{-i}^{new})}{1-w_i}.    
\end{aligned} 
\end{equation*}

Therefore, $X_i^{new}$ is $(1+\lambda)$-MMAX to agent $i$.

Regarding the range of $\lambda$, we have
\begin{equation*}
\begin{aligned}
    \lambda - \frac{2}{n-1} = \frac{\sqrt{n^2+10n-7}-n-5}{4n-4} < 0;
    \\
    \lambda - \frac{1}{n-1} = \frac{\sqrt{n^2+10n-7}-n-1}{4n-4} > 0.
\end{aligned}
\end{equation*}

Overall, the allocation $\mathcal{X}^{*}$ returned by Algorithm \ref{alg: swap algorithm} is $(1+\lambda)$-MMAX. 
\end{proof}

By \citet{hajiaghayi2023almost}, the lower bound of EFX allocations for two agents with different weights is 1.272. The lower bounds for the case when $n\ge 3$ is unknown.

\subsection{Improved Approximation for $n=2$}

Recall the result of Algorithm \ref{alg: swap algorithm}. When there are two agents, it computes a $\frac{3+\sqrt{5}}{2}$-MMAX allocation, which has a big gap compared to the results ($<$1.8) when there are more than two agents. Therefore, we want to design a new algorithm to improve it. In Algorithm \ref{alg: swap algorithm}, \textit{swap} only happens when for one agent, the smallest cost item is greater than a fraction of the cost of items not allocated to her. Now consider the following instance with two agents and four items. Assume that their weights are $w_1 = \frac{3-\sqrt{5}}{2}$ and $w_2 = \frac{\sqrt{5}-1}{2}$, and they have the same cost function.

\begin{table}[h]
     \centering
     \begin{tabular}{c|cccc}
     \toprule
    $f_j$ & $f_1$ & $f_2$ & $f_3$ & $f_4$  \\
    \midrule
     $c(f_j)$ & $\frac{3-\sqrt{5}}{2}$ & $\frac{3-\sqrt{5}}{2}$ & $\sqrt{5}-2$& 0 \\  
     \bottomrule
     \end{tabular}
 \end{table}

Consider a PROPX allocation $\mathcal{X}=(X_1,X_2)$ with $X_1=\{f_1,f_2\}$ and $X_2 =\{f_3,f_4\}$. If Algorithm \ref{alg: swap algorithm} is implemented with the input function $\mathcal{X}$, \textit{swap} will not happen since it does not meet the condition of \textit{swap}. However, if we directly transfer $f_2$ from $X_1$ to $X_2$, we obtain an MMAX allocation. In this example, we find that when the cost of the smallest item is not super large, we may still make some changes between the bundles of two agents to obtain a better approximate MMAX allocation.

Next, we show how to improve the approximation for $n=2$.
Given any ordered instance, we first allocate each item, from largest to smallest, to the agent with the lower cost.
If we do not meet any item whose addition exceeds an agent's weight, the allocation is naturally PROP and thus MMA.
If not, consider the first time when the item, say $f_j$ being allocated to agent $i$, the cost of agent $i$'s bundle exceeds $w_i$.
Note that before allocating $f_j$, the partial allocation is still PROP, and thus the intuition is as follows. 
If $f_j$ is super large, we can give $f_j$ -- a single item -- to one agent, and all the other items are given to the other agent;
otherwise, we can allocate  $f_j$ to the agent that results in a better approximation ratio.
The formal description of the algorithm is shown in Algorithm \ref{alg: draft_swap algorithm}.


\begin{algorithm}[h]
\caption{Improved algorithm for two agents}
\label{alg: draft_swap algorithm}
\KwIn{An ordered instance $\mathcal{I}=\langle M,N,\boldsymbol{c},\boldsymbol{w} \rangle$ where $\lvert N \rvert = 2$}
\KwOut{An approximate MMAX allocation $\mathcal{X}$}

Let $\mathcal{X}=(\emptyset, \emptyset)$;

\For{$j=1$ \KwTo $m$}{\If{$c_1(f_j) \leq c_2(f_j)$ and $c_1(X_1) \leq w_1$}{$X_1 = X_1 \cup \{f_j\}$;}
\ElseIf{$c_1(f_j) \geq c_2(f_j)$ and $c_2(X_2) \leq w_2$}{$X_2 = X_2 \cup \{f_j\}$;}
\Else{

\label{step:n=2:cases}Let $i=1$ if $c_1(f_j) \leq c_2(f_j)$ and $i = 2$ otherwise; 

Let $l = 3 - i$; 

Let $R = M\setminus (X_1 \cup X_2 \cup \{f_j\})$;




\If{$\lvert X_i \rvert \geq 1$ and $c_i(f_{j}) > \mu \cdot c_i(X_l \cup R)$}{\label{step:n=2:cases:cases:1}$X_i = M\setminus \{f_j\}$ and $X_l = \{f_{j}\}$;}
\ElseIf{$w_i \cdot c_l(M\setminus X_i) \leq \lambda \cdot w_l \cdot c_l(X_i)$}{\label{step:n=2:cases:cases:2}$X_i = X_i$ and $X_l = M \setminus X_i$;}
\Else{\label{step:n=2:cases:cases:3}$X_i = X_i \cup \{f_j\}$ and $X_{l} = X_{l} \cup R$;}


\label{step:n=2:cases:cases:end}Break the \textit{for} loop;}}






\Return $\mathcal{X}$

\end{algorithm}

\begin{theorem}
    Algorithm \ref{alg: draft_swap algorithm} computes a $1.91$-MMAX allocation by setting $\lambda =1.91$ and $\mu=2.63$.
\end{theorem}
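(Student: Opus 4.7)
The plan is to verify the $1.91$-MMAX property at each possible exit point of Algorithm~\ref{alg: draft_swap algorithm}. There are two main cases, depending on whether the \textbf{for} loop ever enters the else branch.

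If the loop processes all $m$ items without triggering else, every addition was permitted and the greedy rule ensures each agent $i$ ends with $c_i(X_i) \le w_i$. The output is therefore PROP, and by Lemma~\ref{lem: ef_prop_mma_chores} it is MMA, which is stronger than $1.91$-MMAX.

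Otherwise, the else branch is entered at some item $f_j$. Let $i$ be the agent with $c_i(f_j) \le c_l(f_j)$; just before this moment, the partial allocation was still PROP, so $c_i(X_i) \le w_i$ and $c_l(X_l) \le w_l$, where $X_i, X_l$ are the partial bundles and $R$ the (smaller) remaining items. I then analyze each of the three sub-cases inside the else block. In sub-case~(1), the condition $c_i(f_j) > \mu \, c_i(X_l\cup R)$ says $f_j$ is super-large for agent $i$; the algorithm assigns $\{f_j\}$ to $l$ (trivially MMAX) and $M\setminus\{f_j\}$ to $i$. I would bound agent $i$'s MMAX ratio by rewriting the super-large condition as $c_i(f_j) > \frac{\mu}{1+\mu}\bigl(1 - c_i(X_i)\bigr)$ and combining with $c_i(X_i)\le w_i$ and $c_i(M)=1$; with $\mu = 2.63$ the ratio comes out to at most $1.91$. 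In sub-case~(2), the condition $w_i c_l(M\setminus X_i) \le \lambda w_l c_l(X_i)$ is precisely the $\lambda$-MMAX inequality for agent $l$ when her final bundle is $M\setminus X_i$, so $l$ is $\lambda = 1.91$-MMAX; for agent $i$, who keeps $X_i$, I use the PROP bound on $X_i$ together with the failure of sub-case~(1), i.e., $c_i(f_j) \le \mu c_i(X_l\cup R)$, to lower-bound $c_i(M\setminus X_i) = c_i(X_l\cup R) + c_i(f_j)$ and close the argument. In sub-case~(3), $f_j$ goes to $i$ and $R$ goes to $l$; the ordered-instance structure implies $f_j$ is the minimum item of $X_i \cup \{f_j\}$, so $c_i\bigl((X_i\cup\{f_j\})\setminus\{f_j\}\bigr) = c_i(X_i) \le w_i$, and combined with the failure of sub-case~(2), namely $w_i c_l(M\setminus X_i) > \lambda w_l c_l(X_i)$, and with $X_l^{\text{new}} = (M\setminus X_i)\setminus\{f_j\}$, I verify both agents' MMAX ratios.

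The main obstacle is the careful calibration of the two constants $\lambda = 1.91$ and $\mu = 2.63$. Sub-case~(2) pins down $\lambda$ directly via the MMAX condition for agent $l$; sub-case~(1) yields an inequality in $\lambda$ and $\mu$ coming from agent $i$'s ratio; and sub-case~(3) yields yet another such inequality. The pair $(1.91, 2.63)$ is the tight simultaneous solution of this system, so the bulk of the work is the algebraic verification that these specific values close all three sub-case inequalities at once.
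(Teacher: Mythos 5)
Your case decomposition is exactly the paper's: the no-break branch gives PROP hence MMA via Lemma~\ref{lem: ef_prop_mma_chores}, and the three sub-cases of the else branch are handled with the same intended inequalities, including the observation that $f_j$ is the minimum-cost item of $X_i\cup\{f_j\}$ so only its removal needs to be checked. However, two ingredients that the paper relies on are absent from your ingredient lists, and without them the computations you assert would not close. First, in sub-case~(1) you propose to derive agent $i$'s ratio from $c_i(f_j) > \frac{\mu}{1+\mu}\bigl(1-c_i(X_i)\bigr)$, $c_i(X_i)\le w_i$ and $c_i(M)=1$ alone. These three facts do not bound $w_i$ away from $0$, and the target ratio $\frac{c_i(M\setminus\{f_j\})/w_i}{c_i(f_j)/(1-w_i)}$ can then exceed $1.91$ (e.g.\ $w_i=c_i(X_i)=0.1$, $c_i(f_j)=0.66$, $c_i(X_l\cup R)=0.24$ satisfies all three facts and gives ratio about $4.6$). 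The missing fact is $c_i(X_i)\ge c_i(f_j)$, which holds because the instance is ordered, items are processed in decreasing order, and $\lvert X_i\rvert\ge 1$; this is precisely what lets the paper write $c_i(X_i)+c_i(X_l\cup R) < (1+\tfrac{1}{\mu})\,c_i(X_i)$ and land at $(1+\tfrac{1}{\mu})^2\le 1.91$.

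Second, in sub-case~(3) the negation of sub-case~(2) is an inequality in $c_l$, namely $w_i\, c_l(M\setminus X_i) > \lambda\, w_l\, c_l(X_i)$, while agent $i$'s MMAX condition is stated in $c_i$. Combining $c_i(X_i)\le w_i$ with the failure of sub-case~(1) alone only yields a ratio of $1+\mu=3.63$. To reach $(1+\mu)/\lambda\approx 1.90$ you must transfer the $c_l$-inequality to $c_i$ via $c_i(X_i)\le c_l(X_i)$ and hence $c_l(M\setminus X_i)\le c_i(M\setminus X_i)$; these follow from the greedy rule that every item in $X_i$ was assigned to $i$ only when $c_i$ of that item was at most $c_l$ of it. (Conversely, the use you make of the sub-case~(1) failure inside sub-case~(2) is unnecessary: there, agent $i$ keeps $X_i$ with $c_i(X_i)\le w_i$, so PROP and Lemma~\ref{lem: ef_prop_mma_chores} already give MMA.) With these two structural facts restored, your plan coincides with the paper's proof.
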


\begin{proof}
If Steps \ref{step:n=2:cases} to \ref{step:n=2:cases:cases:end} do not happen, then we have $c_i(X_i) \leq w_i$ for both agents $i \in \{1,2\}$, which implies this allocation is PROP. By Lemma \ref{lem: ef_prop_mma_chores}, this allocation is also MMA.

Otherwise, suppose when we allocate item $f_j$ to agent $i$, the \textit{for} loop is broken. Recall that $l = 3 - i$ is the other agent and $R = M\setminus (X_1 \cup X_2 \cup \{f_j\})$ is the set of unconsidered items so far. Then we have three cases happen at Steps \ref{step:n=2:cases:cases:1} (Case 1), \ref{step:n=2:cases:cases:2} (Case 2) and \ref{step:n=2:cases:cases:3} (Case 3).

Case 1: $\lvert X_i \rvert \geq 1$ and $c_i(f_j) > \mu \cdot c_i(X_l \cup R)$, when the algorithm allocates $f_j$ to agent $l$ and $M \setminus \{f_j\}$ to agent $i$. 
Denote this allocation by $X_i^{new} = M\setminus \{f_j\}$ and $X_l^{new} = \{f_{j}\}$.
It is easy to see that $X_l^{new}$ is MMAX to agent $l$. 
For agent $i$, we have 
\begin{align}   
  \frac{c_i(X_i^{new})}{w_i}  &= \frac{c_i(X_i) + c_i(X_l \cup R)}{w_i} \nonumber
  \\
  & < (1+\frac{1}{\mu}) \cdot \frac{ c_i(X_i)}{w_i}, \label{eqn: wmmax_mu_first_inequality}
\end{align}
where the inequality follows from $c_i(X_i) \geq c_i(f_j) > \mu \cdot c_i(X_l \cup R)$. 





Note that before allocating item $f_j$, we have $c_i(X_i) \leq w_i$. Because $c_i(M) = c_i(X_i) + c_i(X_l \cup R) + c_i(f_j) = 1$, we can rewrite $c_i(X_i) \leq w_i$ into the following form
\begin{equation}
\label{eqn: wmmax_mu_prop}
    \frac{c_i(X_i)}{w_i} \leq \frac{c_i(X_l \cup R) + c_i(f_j)}{1-w_i}.
\end{equation}

Combining Inequalities (\ref{eqn: wmmax_mu_first_inequality}) and (\ref{eqn: wmmax_mu_prop}), we get 
\begin{equation*}
\begin{aligned}
    \frac{c_i(X_i^{new})}{w_i} &< (1+\frac{1}{\mu}) \cdot \frac{c_i(X_l \cup R) + c_i(f_j)}{1-w_i} \\
 &< (1+\frac{1}{\mu})^{2} \cdot \frac{ c_i(X_{l}^{new})}{1-w_i} < 1.91\cdot \frac{ c_i(X_{l}^{new})}{1-w_i},
\end{aligned}
\end{equation*}
where the second inequality follows from $c_i(X_l \cup R) < \frac{1}{\mu} \cdot c_i(f_j)$ and $X_l^{new}=\{f_j\}$, and the last equality is because of $\mu = 2.63$, and thus $X_i^{new}$ is 1.91-MMAX to agent $i$.

Case 2: $w_i \cdot c_l(M \setminus X_i)$ $ \leq \lambda \cdot w_l \cdot c_l(X_i)$, given $c_i(f_j) \leq \mu \cdot c_i(X_{l} \cup R)$, and the algorithm allocates $f_j$ and $R$ to agent $l$ and agent $i$ keeps her current bundle. 
Note that when Case 2 happens, we also have $\lvert X_i \rvert \geq 1$.
Again, we denote this allocation by $X_i^{new}= X_i$ and $X_l^{new}= M \setminus X_i$.
Since $c_i(X_i) \leq w_i$, the allocation is PROP and thus MMA to agent $i$. 
For agent $l$, we have
\begin{equation*}
\begin{aligned}
    \frac{c_l(X_l^{new})}{w_l}\leq \lambda \cdot \frac{c_l(X_i^{new})}{w_i} = \lambda \cdot \frac{c_l(X_i^{new})}{1-w_l},
\end{aligned}
    \end{equation*}
and since $\lambda = 1.91$, $X_{l}^{new}$ is $1.91$-MMAX to agent $l$.

Case 3: $\lvert X_i \rvert \geq 1$, $c_i(f_j) \leq \mu \cdot c_i(X_{l} \cup R)$ and $w_i \cdot c_l(M \setminus X_i) > \lambda \cdot w_l \cdot c_l(X_i)$ or $X_i = \emptyset$, when the algorithm continues to allocate $f_j$ to agent $i$ and $R$ to agent $l$. 
Let $X_i^{new} = X_i \cup \{f_j\}$ and $X_l^{new} = X_l \cup R$ be the final allocation. 
By the design of the algorithm, $c_l(X_i \cup \{f_j\}) \geq c_i(X_i \cup \{f_j\}) > w_i= 1-w_l$, implying $c_l(X_l \cup R) < w_l$, and thus the allocation is PROP and MMA to agent $l$.
For agent $i$, we have
\begin{equation*}
\begin{aligned}
  & \frac{c_i(X_i^{new} \setminus \{f_j\})}{w_i} \\
   = &\frac{c_i(X_i)}{w_i} \leq \frac{c_l(X_i)}{w_i}
   <  \frac{1}{\lambda} \cdot \frac{c_l(M \setminus X_i)}{w_l}\\
   \leq &\frac{1}{\lambda} \cdot \frac{c_i(M \setminus X_i)}{w_l}
    =  \frac{1}{\lambda} \cdot  \frac{c_i(X_l \cup R) + c_i(f_j)}{w_l} 
  \\
  \leq& \frac{1+\mu}{\lambda} \cdot \frac{c_i(X_l \cup R)}{1-w_i}
  \le  1.91 \cdot \frac{c_i(X_{l}^{new})}{1-w_i},  
\end{aligned}
 \end{equation*}
where the last inequality is because of $\lambda =1.91$ and $\mu=2.63$, and thus $X_{i}^{new}$ is $1.91$-MMAX to agent $i$.

Combining all three cases, the theorem is proved. 
\end{proof}









\paragraph{Remark.} When there are two agents, MMAX is equivalent to EFX. Therefore, Algorithm \ref{alg: draft_swap algorithm} also ensures a 1.91-EFX allocation for two agents.

\section{Conclusion}
We study the MMA allocation of indivisible chores, when the agents can be asymmetric. 
In general, MMA related notions (including MMA1 and MMAX) are weaker than the counterpart notions of EF but not comparable with those of PROP.
The positive message from this work includes the following. 
MMA1 allocations always exist for agents with arbitrary weights.
MMAX allocations exist when the agents have the same weight, and admit good approximation when the weights are arbitrary, where the approximation ratio gets close to 1 when the number of agents is large.
An interesting future direction is to improve the upper and lower bounds for the approximation ratio of MMAX allocations.

\bibliographystyle{plainnat}
\bibliography{reference}

\begin{thebibliography}{38}
\providecommand{\natexlab}[1]{#1}
\providecommand{\url}[1]{\texttt{#1}}
\expandafter\ifx\csname urlstyle\endcsname\relax
  \providecommand{\doi}[1]{doi: #1}\else
  \providecommand{\doi}{doi: \begingroup \urlstyle{rm}\Url}\fi

\bibitem[Akrami and Garg(2023)]{akrami2023breaking}
Hannaneh Akrami and Jugal Garg.
\newblock Breaking the $3/4$ barrier for approximate maximin share.
\newblock \emph{arXiv preprint arXiv:2307.07304}, 2023.

\bibitem[Amanatidis et~al.(2018)Amanatidis, Birmpas, and
  Markakis]{amanatidis2018comparing}
Georgios Amanatidis, Georgios Birmpas, and Vangelis Markakis.
\newblock Comparing approximate relaxations of envy-freeness.
\newblock In \emph{Proceedings of the 27th International Joint Conference on
  Artificial Intelligence}, pages 42--48, 2018.

\bibitem[Amanatidis et~al.(2020)Amanatidis, Markakis, and
  Ntokos]{amanatidis2020multiple}
Georgios Amanatidis, Evangelos Markakis, and Apostolos Ntokos.
\newblock Multiple birds with one stone: Beating 1/2 for efx and gmms via envy
  cycle elimination.
\newblock \emph{Theoretical Computer Science}, 841:\penalty0 94--109, 2020.

\bibitem[Amanatidis et~al.(2023)Amanatidis, Aziz, Birmpas, Filos-Ratsikas, Li,
  Moulin, Voudouris, and Wu]{amanatidis2023fair}
Georgios Amanatidis, Haris Aziz, Georgios Birmpas, Aris Filos-Ratsikas, Bo~Li,
  Herv{\'e} Moulin, Alexandros~A Voudouris, and Xiaowei Wu.
\newblock Fair division of indivisible goods: Recent progress and open
  questions.
\newblock \emph{Artificial Intelligence}, page 103965, 2023.

\bibitem[Aziz et~al.(2017)Aziz, Rauchecker, Schryen, and
  Walsh]{aziz2017algorithms}
Haris Aziz, Gerhard Rauchecker, Guido Schryen, and Toby Walsh.
\newblock Algorithms for max-min share fair allocation of indivisible chores.
\newblock In \emph{Proceedings of the AAAI Conference on Artificial
  Intelligence}, pages 335--341, 2017.

\bibitem[Aziz et~al.(2018)Aziz, Bouveret, Caragiannis, Giagkousi, and
  Lang]{aziz2018knowledge}
Haris Aziz, Sylvain Bouveret, Ioannis Caragiannis, Ira Giagkousi, and
  J{\'e}r{\^o}me Lang.
\newblock Knowledge, fairness, and social constraints.
\newblock In \emph{Proceedings of the AAAI Conference on Artificial
  Intelligence}, pages 4638--4645, 2018.

\bibitem[Aziz et~al.(2019)Aziz, Chan, and Li]{aziz2019weighted}
Haris Aziz, Hau Chan, and Bo~Li.
\newblock Weighted maxmin fair share allocation of indivisible chores.
\newblock In \emph{Proceedings of the 28th International Joint Conference on
  Artificial Intelligence}, pages 46--52, 2019.

\bibitem[Aziz et~al.(2020)Aziz, Moulin, and Sandomirskiy]{aziz2020polynomial}
Haris Aziz, Herv{\'e} Moulin, and Fedor Sandomirskiy.
\newblock A polynomial-time algorithm for computing a pareto optimal and almost
  proportional allocation.
\newblock \emph{Operations Research Letters}, 48\penalty0 (5):\penalty0
  573--578, 2020.

\bibitem[Barman and Krishnamurthy(2020)]{barman2020approximation}
Siddharth Barman and Sanath~Kumar Krishnamurthy.
\newblock Approximation algorithms for maximin fair division.
\newblock \emph{ACM Transactions on Economics and Computation}, 8\penalty0
  (1):\penalty0 1--28, 2020.

\bibitem[Bhaskar et~al.(2021)Bhaskar, Sricharan, and
  Vaish]{DBLP:conf/approx/BhaskarSV21}
Umang Bhaskar, A.~R. Sricharan, and Rohit Vaish.
\newblock On approximate envy-freeness for indivisible chores and mixed
  resources.
\newblock In \emph{Approximation, Randomization, and Combinatorial
  Optimization. Algorithms and Techniques}, pages 1--23, 2021.

\bibitem[Bouveret and Lema{\^\i}tre(2016)]{bouveret2016characterizing}
Sylvain Bouveret and Michel Lema{\^\i}tre.
\newblock Characterizing conflicts in fair division of indivisible goods using
  a scale of criteria.
\newblock \emph{Autonomous Agents and Multi-Agent Systems}, 30\penalty0
  (2):\penalty0 259--290, 2016.

\bibitem[Budish(2011)]{budish2011combinatorial}
Eric Budish.
\newblock The combinatorial assignment problem: Approximate competitive
  equilibrium from equal incomes.
\newblock \emph{Journal of Political Economy}, 119\penalty0 (6):\penalty0
  1061--1103, 2011.

\bibitem[Caragiannis et~al.(2019)Caragiannis, Kurokawa, Moulin, Procaccia,
  Shah, and Wang]{caragiannis2019unreasonable}
Ioannis Caragiannis, David Kurokawa, Herv{\'e} Moulin, Ariel~D Procaccia,
  Nisarg Shah, and Junxing Wang.
\newblock The unreasonable fairness of maximum nash welfare.
\newblock \emph{ACM Transactions on Economics and Computation}, 7\penalty0
  (3):\penalty0 1--32, 2019.

\bibitem[Caragiannis et~al.(2022)Caragiannis, Garg, Rathi, Sharma, and
  Varricchio]{garg2022existence}
Ioannis Caragiannis, Jugal Garg, Nidhi Rathi, Eklavya Sharma, and Giovanna
  Varricchio.
\newblock Existence and computation of epistemic {EFX}.
\newblock \emph{CoRR}, abs/2206.01710, 2022.

\bibitem[Chan et~al.(2019)Chan, Chen, Li, and Wu]{chan2019maximin}
Hau Chan, Jing Chen, Bo~Li, and Xiaowei Wu.
\newblock Maximin-aware allocations of indivisible goods.
\newblock In \emph{Proceedings of the 28th International Joint Conference on
  Artificial Intelligence}, pages 137--143, 2019.

\bibitem[Chen and Shah(2017)]{chen2017ignorance}
Yiling Chen and Nisarg Shah.
\newblock Ignorance is often bliss: Envy with incomplete information.
\newblock Technical report, Working paper, Harvard University, 2017.

\bibitem[Conitzer et~al.(2017)Conitzer, Freeman, and Shah]{conitzer2017fair}
Vincent Conitzer, Rupert Freeman, and Nisarg Shah.
\newblock Fair public decision making.
\newblock In \emph{Proceedings of the 18th ACM Conference on Economics and
  Computation}, pages 629--646, 2017.

\bibitem[Feige and Huang(2022)]{DBLP:journals/corr/abs-2211-13951}
Uriel Feige and Xin Huang.
\newblock On picking sequences for chores.
\newblock \emph{CoRR}, abs/2211.13951, 2022.

\bibitem[Feige et~al.(2021)Feige, Sapir, and Tauber]{feige2021tight}
Uriel Feige, Ariel Sapir, and Laliv Tauber.
\newblock A tight negative example for mms fair allocations.
\newblock In \emph{International Conference on Web and Internet Economics},
  pages 355--372, 2021.

\bibitem[Garg and Taki(2020)]{garg2020improved}
Jugal Garg and Setareh Taki.
\newblock An improved approximation algorithm for maximin shares.
\newblock In \emph{Proceedings of the 21st ACM Conference on Economics and
  Computation}, pages 379--380, 2020.

\bibitem[Garg et~al.(2022)Garg, Murhekar, and Qin]{garg2022fair}
Jugal Garg, Aniket Murhekar, and John Qin.
\newblock Fair and efficient allocations of chores under bivalued preferences.
\newblock In \emph{Proceedings of the AAAI Conference on Artificial
  Intelligence}, pages 5043--5050, 2022.

\bibitem[Grandl et~al.(2014)Grandl, Ananthanarayanan, Kandula, Rao, and
  Akella]{DBLP:conf/sigcomm/GrandlAKRA14}
Robert Grandl, Ganesh Ananthanarayanan, Srikanth Kandula, Sriram Rao, and
  Aditya Akella.
\newblock Multi-resource packing for cluster schedulers.
\newblock \emph{ACM SIGCOMM Computer Communication Review}, 44\penalty0
  (4):\penalty0 455--466, 2014.

\bibitem[Hajiaghayi et~al.(2023)Hajiaghayi, Springer, and
  Yami]{hajiaghayi2023almost}
MohammadTaghi Hajiaghayi, Max Springer, and Hadi Yami.
\newblock Almost envy-free allocations of indivisible goods or chores with
  entitlements.
\newblock \emph{arXiv preprint arXiv:2305.16081}, 2023.

\bibitem[Hosseini et~al.(2020)Hosseini, Sikdar, Vaish, Wang, and
  Xia]{DBLP:conf/aaai/HosseiniSVWX20}
Hadi Hosseini, Sujoy Sikdar, Rohit Vaish, Hejun Wang, and Lirong Xia.
\newblock Fair division through information withholding.
\newblock In \emph{Proceedings of the AAAI Conference on Artificial
  Intelligence}, pages 2014--2021, 2020.

\bibitem[Huang and Lu(2021)]{huang2021algorithmic}
Xin Huang and Pinyan Lu.
\newblock An algorithmic framework for approximating maximin share allocation
  of chores.
\newblock In \emph{Proceedings of the 22nd ACM Conference on Economics and
  Computation}, pages 630--631, 2021.

\bibitem[Huang and Segal-Halevi(2023)]{huang2023reduction}
Xin Huang and Erel Segal-Halevi.
\newblock A reduction from chores allocation to job scheduling.
\newblock \emph{CoRR}, abs/2302.04581, 2023.

\bibitem[Li et~al.(2022)Li, Li, and Wu]{li2022almost}
Bo~Li, Yingkai Li, and Xiaowei Wu.
\newblock Almost (weighted) proportional allocations for indivisible chores.
\newblock In \emph{Proceedings of the ACM Web Conference 2022}, pages 122--131,
  2022.

\bibitem[Lipton et~al.(2004)Lipton, Markakis, Mossel, and
  Saberi]{lipton2004approximately}
Richard~J Lipton, Evangelos Markakis, Elchanan Mossel, and Amin Saberi.
\newblock On approximately fair allocations of indivisible goods.
\newblock In \emph{Proceedings of the 5th ACM Conference on Electronic
  Commerce}, pages 125--131, 2004.

\bibitem[Moulin(2003)]{DBLP:books/daglib/0017734}
Herv{\'{e}} Moulin.
\newblock \emph{Fair division and collective welfare}.
\newblock {MIT} Press, 2003.

\bibitem[Moulin(2019)]{moulin2019fair}
Herv{\'e} Moulin.
\newblock Fair division in the internet age.
\newblock \emph{Annual Review of Economics}, 11:\penalty0 407--441, 2019.

\bibitem[Robertson and Webb(1998)]{robertson1998cake}
Jack Robertson and William Webb.
\newblock \emph{Cake-cutting algorithms: Be fair if you can}.
\newblock CRC Press, 1998.

\bibitem[Steinhaus(1949)]{steinhaus1949division}
Hugo Steinhaus.
\newblock Sur la division pragmatique.
\newblock \emph{Econometrica: Journal of the Econometric Society}, 17:\penalty0
  315--319, 1949.

\bibitem[{Steven J. Brams} and {Alan D. Taylor}(1996)]{books/daglib/0017730}
{Steven J. Brams} and {Alan D. Taylor}.
\newblock \emph{Fair division: from cake-cutting to dispute resolution}.
\newblock Cambridge University Press, 1996.

\bibitem[Sun et~al.(2021)Sun, Chen, and Vinh~Doan]{sun2021connections}
Ankang Sun, Bo~Chen, and Xuan Vinh~Doan.
\newblock Connections between fairness criteria and efficiency for allocating
  indivisible chores.
\newblock In \emph{Proceedings of the 20th International Conference on
  Autonomous Agents and MultiAgent Systems}, pages 1281--1289, 2021.

\bibitem[Varian(1974)]{varian1974equity}
Hal Varian.
\newblock Equity, envy, and efficiency.
\newblock \emph{Journal of Economic Theory}, 9\penalty0 (1):\penalty0 63--91,
  1974.

\bibitem[Verma et~al.(2015)Verma, Pedrosa, Korupolu, Oppenheimer, Tune, and
  Wilkes]{DBLP:conf/eurosys/VermaPKOTW15}
Abhishek Verma, Luis Pedrosa, Madhukar Korupolu, David Oppenheimer, Eric Tune,
  and John Wilkes.
\newblock Large-scale cluster management at google with borg.
\newblock In \emph{Proceedings of 10th European Conference on Computer
  Systems}, pages 1--17, 2015.

\bibitem[Wu et~al.(2023)Wu, Zhang, and Zhou]{wu2023weighted}
Xiaowei Wu, Cong Zhang, and Shengwei Zhou.
\newblock Weighted {EF1} allocations for indivisible chores.
\newblock \emph{CoRR}, abs/2301.08090, 2023.

\bibitem[Zhou and Wu(2022)]{DBLP:conf/ijcai/0002022}
Shengwei Zhou and Xiaowei Wu.
\newblock Approximately {EFX} allocations for indivisible chores.
\newblock In \emph{Proceedings of the 31th International Joint Conference on
  Artificial Intelligence}, pages 783--789, 2022.

\end{thebibliography}

\clearpage
\appendix

\section{Generalization and Improvements for Goods}
\label{sec: improvements_goods}

\subsection{Definitions and notions}

Let $N=\{1,\ldots,n\}$ be a set of $n$ agents, and $M = \{e_1,\ldots,e_m\}$ be a set of $m$ indivisible goods. Each agent $i \in N$ has a valuation function $v_{i}: 2^{N} \rightarrow \mathbb{R}_{\geq 0}$. The valuation functions are assumed to be additive, i.e., for any $S\subseteq M$, $v_i(S) = \sum_{e\in S} v_i(\{e\})$. 
For simplicity, we use $v_i(e)$ instead of $v_i(\{e\})$ for $e \in M$. 
We study the weighted setting, where each agent has a weight $w_i > 0$, and the weights add up to one, i.e., $\sum_{i \in N}w_i =1$. 
A goods allocation instance is denoted as $\mathcal{I} = \langle M, N, \boldsymbol{v}, \boldsymbol{w}\rangle$, where $\boldsymbol{v} = (v_1,\ldots,v_n)$ and $\boldsymbol{w}= (w_1,\ldots,w_n)$. 

Let $\Pi_{n}(M)$ denote the set of all $n$-partitions of the items. An allocation $\mathcal{A} = (A_1, \ldots, A_n) \in \Pi_{n}(M)$, where $A_i$ is the bundle allocated to agent $i$, is the partition of the set of $m$ items among $n$ agents, i.e., $\cup_{i \in N}A_{i} = M$ and $A_{i} \cap A_{j} = \emptyset$ for any two agents $i \neq j$. Particularly, we denote the set of items that are not allocated to agent $i$ as $A_{-i}$, where $A_{-i} = \cup_{j \in N \setminus \{i\}}A_j$.

Given an instance $\mathcal{I} = \langle M, N, \boldsymbol{v}, \boldsymbol{w} \rangle$, the \textit{maximin share} of agent $i$ on $M$ among $n$ agents is defined as :
$$ \MMS_{i}(M, n) =  w_i \cdot \max_{ \mathcal{B} \in \Pi_{n}(M)} \min_{j \in N} \frac{ v_{i}(B_j)}{w_j}.$$

\begin{definition}[\textbf{MMS}]
For any $\alpha \in [0,1]$, an allocation $\mathcal{A}=(A_1,\ldots,A_n)$ is $\alpha$-approximate maximin share fair ($\alpha$-MMS), if for any agent $i \in N$, $v_i(A_i) \geq \alpha \cdot \MMS_{i}(M, n).$
\end{definition}

\begin{definition}[\textbf{MMA}]

For any $\alpha \in [0,1]$, an allocation $\mathcal{A}=(A_1,\ldots,A_n)$ is $\alpha$-approximate maximin aware fair ($\alpha$-MMA), if for any agent $i \in N$, 
\[
v_i(A_i) \geq \alpha \cdot \MMS_{i}(A_{-i}, n-1),
\]
where  
\[
\MMS_{i}(A_{-i}, n-1) = w_i \cdot \max_{\mathcal{C} \in  \Pi_{n-1}(A_{-i})} \min_{j \in N \setminus \{i\}} \frac{v_{i}(C_j)}{w_j}.
\]
\end{definition}

\begin{definition}[\textbf{MMA1 and MMAX}]

For any $\alpha \in [0,1]$, an allocation $\mathcal{A}=(A_1,\ldots,A_n)$ is $\alpha$-approximate maximin aware up to one item ($\alpha$-MMA1), if for any agent $i \in N$, there exists one item $e \in A_{-i}$ such that 
\begin{equation}
\label{eq:mma1_good}
    v_i(A_i) \geq \alpha \cdot \MMS_{i}(A_{-i} \setminus \{e\}, n-1).
\end{equation}
Similarly, the allocation is $\alpha$-approximate maximin aware up to any item ($\alpha$-MMAX) if Inequality (\ref{eq:mma1_good}) holds for any item $e \in A_{-i}$.
\end{definition}

\begin{definition}[\textbf{Ordered instance}]
\label{def: ordered_instance_goods}

Given an instance $\mathcal{I} = \langle M, N, \boldsymbol{v}, \boldsymbol{w} \rangle$, the corresponding ordered instance of $\mathcal{I}$ is denoted as $\mathcal{I}' = \langle M, N, \boldsymbol{v'}, \boldsymbol{w} \rangle$, where for each agent $i \in N$ and item $e_{j} \in M$, we have $v_i^{'}(e_j) = v_i(e^{j})$ ($e^{j}$ is the $j^{th}$ highest value item of $M$ for agent $i$).
\end{definition}

The reduction can also work in MMA1 and MMA1 allocation for the case of goods, and the proof is similar to that of the chores. Here we directly give the following two lemmas.

\begin{lemma}
\label{lem: wmma1_reduction_approximation_goods}
For any $\alpha \in [0,1]$, suppose that there is a polynomial time algorithm that computes an $\alpha$-MMA1 allocation $\mathcal{A}^{'}=(A_1^{'}, \ldots, A_n^{'})$ of an ordered instance $\mathcal{I}^{'}$, then we have a polynomial time algorithm that computes an $\alpha$-MMA1 allocation $\mathcal{A} = (A_1, \ldots, A_n)$ of instance $\mathcal{I}$.
\end{lemma}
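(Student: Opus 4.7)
The plan is to mirror the chores argument (Lemmas~\ref{lem: property_reduction_chores} and~\ref{lem: wmma1_reduction_approximation_chores}) with valuations in place of costs and maxima in place of minima. The first step is to introduce a goods analog of Algorithm~\ref{alg: reduction_alg_chores}: iterate $j$ from $1$ to $m$ (i.e.\ consider items of the ordered instance from most to least valuable), and in round $j$ let the agent $i^{*}$ who owns $e_j$ in $\mathcal{A}'$ pick from the remaining pool $T$ an item $e_j^{*}$ that \emph{maximizes} $v_{i^{*}}$, and add $e_j^{*}$ to $A_{i^{*}}$. The resulting $\mathcal{A}$ clearly satisfies $|A_i|=|A_i'|$ for every $i\in N$ and is computable in $O(nm)$ time.

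Next I would establish a goods analog of Lemma~\ref{lem: property_reduction_chores}: (i) for every corresponding pair in the pick bijection, $v_i(e_j^{*})\ge v_i'(e_j)$, hence $v_i(A_i)\ge v_i'(A_i')$; and (ii) $\MMS_i(A_{-i}\setminus\{e_{max}^{i}\},n-1)\le \MMS_i'(A_{-i}'\setminus\{e_{max}^{i'}\},n-1)$, where $e_{max}^{i}=\arg\max_{e\in A_{-i}} v_i(e)$ and $e_{max}^{i'}=\arg\max_{e'\in A_{-i}'} v_i'(e')$. Part (i) uses the same pigeonhole as in the chores version: at round $j$, $|T|=m-j+1$, so $T$ meets agent $i^{*}$'s top-$j$ items of $M$ and her pick has value at least $v_{i^{*}}(e^{j})=v_{i^{*}}'(e_j)$.

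Part (ii) follows the same two-step route used for the complement in Lemma~\ref{lem: property_reduction_chores}. First, the pick bijection gives a pointwise domination of sorted $A_i$ over sorted $A_i'$; combined with $v_i(M)=v_i'(M)$, the same complementary-sorting argument flips this into the reverse pointwise inequality $v_i(f_{(k)})\le v_i'(f_{(k)}')$ along sorted $A_{-i}$ and sorted $A_{-i}'$. Deleting the top-sorted item from each side---precisely $e_{max}^{i}$ and $e_{max}^{i'}$---preserves this index-wise inequality on the remaining $|A_{-i}|-1$ items. Second, transport an optimal $(n-1)$-partition $(B_j)_{j\ne i}$ of $A_{-i}\setminus\{e_{max}^{i}\}$ realizing $\MMS_i$ into a partition $(B_j')_{j\ne i}$ of $A_{-i}'\setminus\{e_{max}^{i'}\}$ via the index bijection; bundle-wise $v_i(B_j)\le v_i'(B_j')$ then gives $\MMS_i(A_{-i}\setminus\{e_{max}^{i}\},n-1)/w_i=\min_j v_i(B_j)/w_j\le \min_j v_i'(B_j')/w_j\le \MMS_i'(A_{-i}'\setminus\{e_{max}^{i'}\},n-1)/w_i$.

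With (i) and (ii) in hand the lemma is short. A one-line monotonicity check---replacing a less valuable removed item with $e_{max}^{i'}$ inside the $\MMS_i'$-witnessing partition only raises the minimum bundle value---shows that $\MMS_i'(A_{-i}'\setminus\{e\},n-1)$ is minimized at $e=e_{max}^{i'}$, so any $\alpha$-MMA1 witness for $\mathcal{A}'$ can be replaced by $e_{max}^{i'}$, yielding $v_i'(A_i')\ge \alpha\cdot \MMS_i'(A_{-i}'\setminus\{e_{max}^{i'}\},n-1)$. Chaining
\[
v_i(A_i)\ge v_i'(A_i')\ge \alpha\cdot \MMS_i'(A_{-i}'\setminus\{e_{max}^{i'}\},n-1)\ge \alpha\cdot \MMS_i(A_{-i}\setminus\{e_{max}^{i}\},n-1)
\]
shows that $e_{max}^{i}\in A_{-i}$ certifies $\alpha$-MMA1 for agent $i$ in $\mathcal{I}$. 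The main obstacle, as in the chores case, is part (ii): the pick bijection directly controls only $A_i$ vs.\ $A_i'$, so the MMS inequality on the complement must be routed through the sorted-multiset complementarity, and one must verify that deleting the top-sorted element preserves the index-wise comparison on what remains.
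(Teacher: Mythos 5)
Your proposal is correct and follows essentially the same route the paper intends: it mirrors the chores reduction (Algorithm~\ref{alg: reduction_alg_chores} and Lemmas~\ref{lem: property_reduction_chores}--\ref{lem: wmma1_reduction_approximation_chores}) with values in place of costs, which is exactly what the paper means by ``the proof is similar to that of the chores.'' You also correctly handle the one genuinely goods-specific wrinkle that the paper glosses over -- that for goods the MMA1 item is removed from $A_{-i}$ rather than $A_i$ -- by observing that $\MMS_i'(A_{-i}'\setminus\{e\},n-1)$ is minimized at the most valuable item of $A_{-i}'$, so the witness may be taken to be $e_{max}^{i'}$ and the sorted-complement domination still applies after deleting the top element on each side.
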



\begin{lemma}
\label{lem: wmmax_reduction_approximation_goods}
For any $\alpha \in [0,1]$, suppose that there is a polynomial time algorithm that computes an $\alpha$-MMAX allocation $\mathcal{A}^{'}=(A_1^{'}, \ldots, A_n^{'})$ of an ordered instance $\mathcal{I}^{'}$, then we have a polynomial time algorithm that computes an $\alpha$-MMAX allocation $\mathcal{A} = (A_1, \ldots, A_n)$ of instance $\mathcal{I}$.
\end{lemma}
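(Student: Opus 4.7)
The plan is to mirror the chores-side proof of Lemma \ref{lem: wmmax_reduction_approximation_chores}, with every inequality flipped for the goods setting. First I would introduce a goods-analog of Algorithm \ref{alg: reduction_alg_chores}: iterate $j = 1, \ldots, m$ (top-down in the ordered ranking), at each $j$ identify the agent $i^*$ whose bundle $A_{i^*}'$ contains $e_j$, and assign her $e_j^{*} = \arg\max_{e \in T} v_{i^*}(e)$ before removing it from the remaining pool $T$. I would then prove a goods version of Lemma \ref{lem: property_reduction_chores} with three properties: (i) $|A_i| = |A_i'|$ for each $i$; (ii) at every pick $v_{i^*}(e_j^{*}) \geq v_{i^*}'(e_j)$, because $|T| = m - j + 1$ when round $j$ starts, so by pigeonhole $T$ contains some item whose $v_{i^*}$-value is at least the $j$-th largest in $M$, namely $v_{i^*}'(e_j)$; summing over the rounds where $i^* = i$ then yields $v_i(A_i) \geq v_i'(A_i')$; and (iii) when $A_{-i}$ and $A_{-i}'$ are sorted decreasingly by $v_i$ and $v_i'$ respectively, the $k$-th entry of $A_{-i}$ satisfies $v_i(e_{(k)}) \leq v_i'(e'_{(k)})$ for every $k$.

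Next I would handle the MMS side. A convenient observation is that the binding case of the goods $\alpha$-MMAX constraint for agent $i$ occurs at $e_{\min} := \arg\min_{e \in A_{-i}} v_i(e)$, because $\MMS_i(A_{-i} \setminus \{e\}, n-1)$ is maximized over $e \in A_{-i}$ when $v_i(e)$ is smallest (the remaining pool is as valuable as possible); the same holds for $e'_{\min}$ in the ordered instance under $v_i'$. I would then establish
\[
\MMS_i(A_{-i} \setminus \{e_{\min}\}, n-1) \leq \MMS_i'(A_{-i}' \setminus \{e'_{\min}\}, n-1)
\]
by taking any $(n-1)$-partition $\mathcal{C}$ of $A_{-i} \setminus \{e_{\min}\}$ that witnesses the left-hand side, mapping each item to its sort-matched counterpart in $A_{-i}' \setminus \{e'_{\min}\}$ to produce a partition $\mathcal{C}'$, and using (iii) to conclude $v_i(C_j) \leq v_i'(C_j')$ for every $j$; hence $\min_j v_i(C_j)/w_j \leq \min_j v_i'(C_j')/w_j$, and taking the maximum over $\mathcal{C}'$ delivers the desired bound.

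Chaining the pieces,
\[
v_i(A_i) \geq v_i'(A_i') \geq \alpha \cdot \MMS_i'(A_{-i}' \setminus \{e'_{\min}\}, n-1) \geq \alpha \cdot \MMS_i(A_{-i} \setminus \{e_{\min}\}, n-1),
\]
and by the argmin-sufficiency observation this certifies $\alpha$-MMAX of $\mathcal{A}$. The step I expect to be the main obstacle is making property (iii) fully rigorous: the chores proof treats the analogous statement as ``easy to see'' after (i) and (ii), and I suspect it genuinely rests on a pigeonhole/matching argument relating the sorted multisets of values in $A_{-i}$ and $A_{-i}'$. Once (iii) is pinned down, the rest is a routine partition-swap as sketched above, and polynomial running time is inherited from the reduction together with the assumed algorithm for the ordered instance.
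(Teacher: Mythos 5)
Your proposal is correct and follows essentially the same route the paper intends: the paper omits this proof entirely, stating only that it mirrors the chores-side argument (Lemmas \ref{lem: property_reduction_chores}--\ref{lem: wmmax_reduction_approximation_chores}), and your top-down greedy reduction, the three flipped properties, and the partition-swap bound on $\MMS_i'$ are exactly that mirror. Your additional observations -- that the binding item for goods-MMAX is $e_{\min}\in A_{-i}$ and that sorted element-wise domination survives deleting the minima from both sides -- are correct and in fact supply details the paper glosses over, since for goods the removed item lives in $A_{-i}$ rather than in the agent's own bundle.
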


\subsection{New findings for goods}

\begin{proposition}
\label{pro: mms_mma1_goods}
    For any $\alpha \in [0,1]$, any $\alpha$-MMS allocation is also $\frac{\alpha}{1+\rho(1-\alpha)}$-MMA1, where $\rho = \max_{i,j \in N} \frac{w_i}{w_j}$.
\end{proposition}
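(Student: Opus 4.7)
Fix an agent $i$ and set $\mu := \MMS_i(M,n)$, so by hypothesis $v_i(A_i) \geq \alpha\mu$. The plan is to pick $e^* := \arg\max_{e \in A_{-i}} v_i(e)$ and establish $v_i(A_i) \geq \alpha'\,\MMS_i(A_{-i}\setminus\{e^*\},n-1)$ for $\alpha' = \alpha/(1+\rho(1-\alpha))$. Write $\mu' := \MMS_i(A_{-i}\setminus\{e^*\},n-1)$, and let $\mathcal{D}^* = (D^*_j)_{j \neq i}$ be an optimal $(n-1)$-partition of $A_{-i}\setminus\{e^*\}$ achieving $\mu'$, with argmin index $k$ so that $\mu'/w_i = v_i(D^*_k)/w_k$. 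The main device is to construct an $n$-partition of $M$ whose minimum ratio provides a strong enough lower bound on $\mu$; chaining this with $v_i(A_i) \geq \alpha\mu$ then produces the desired bound on $\mu'$.

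The starting attempt is the partition $\mathcal{P}_0 := (A_i \cup \{e^*\},\, D^*_1,\ldots, D^*_{n-1})$ of $M$, which yields
\[
\mu \;\geq\; w_i\cdot\min\!\left\{\frac{v_i(A_i)+v_i(e^*)}{w_i},\;\frac{\mu'}{w_i}\right\}.
\]
If the minimum equals $\mu'/w_i$, then $\mu \geq \mu'$ and hence $v_i(A_i) \geq \alpha\mu \geq \alpha\mu' \geq \alpha'\mu'$, completing the proof. Otherwise $\mu \geq v_i(A_i)+v_i(e^*)$, and substituting into $v_i(A_i) \geq \alpha\mu$ forces $v_i(e^*) \leq \tfrac{1-\alpha}{\alpha}\,v_i(A_i)$; since $e^*$ is the max-valued item of $A_{-i}$, every item in $A_{-i}$ is then bounded by $\tfrac{1-\alpha}{\alpha}\,v_i(A_i)$.

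In this second case I would refine $\mathcal{P}_0$ by transferring a subset $S \subseteq D^*_k$ from slot $k$ into slot $i$, aiming to equalise the two ratios. In the divisible relaxation, choosing $v_i(S)$ equal to the balanced target yields min-ratio $r = (v_i(A_i)+v_i(e^*)+v_i(D^*_k))/(w_i+w_k)$; substituting $v_i(D^*_k) = w_k\mu'/w_i$ into $v_i(A_i) \geq \alpha w_i r$ and applying $w_i/w_k \leq \rho$ leads after a short algebraic manipulation to exactly $\mu' \leq \tfrac{1+\rho(1-\alpha)}{\alpha}\,v_i(A_i)$, i.e.\ $v_i(A_i) \geq \alpha'\mu'$. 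The main obstacle is the indivisibility of $S$: the closest feasible subset-sum of $D^*_k$ may differ from the divisible target by up to one item's value, and the resulting deficit in the min-ratio, after scaling by $1/\min_j w_j \leq \rho/w_i$, must be absorbed inside the slack of the factor $1+\rho(1-\alpha)$. The small-item bound derived in the previous paragraph is exactly what makes this feasible; a careful greedy choice of $S$, possibly also allowing $e^*$ itself to be reassigned across slots during the rebalance, should tighten the rounding enough to preserve the stated ratio.
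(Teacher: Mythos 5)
Your outline coincides with the paper's proof up to the final step: both remove the maximum-value item $e^*$ of $A_{-i}$ (which minimizes $\MMS_i(A_{-i}\setminus\{e\},n-1)$ over $e$ and is therefore the right witness for MMA1), both test the $n$-partition $(A_i\cup\{e^*\},D_1^*,\ldots,D_{n-1}^*)$ against $\MMS_i(M,n)$, both dispose of the case where $e^*$ is large, and both extract the bound $v_i(e)\le\frac{1-\alpha}{\alpha}v_i(A_i)$ on every item of $A_{-i}$ in the remaining case; your divisible-relaxation computation even recovers the exact constant. The gap is that the rounding step is where all of the difficulty of the proposition lives, and you only assert that a ``careful greedy choice of $S$ \ldots\ should'' preserve the ratio. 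Moreover, the place you propose to hide the rounding loss --- ``the slack of the factor $1+\rho(1-\alpha)$'' --- does not exist in the worst case: your divisible calculation already spends the inequality $w_i/w_k\le\rho$ to arrive at that factor, so when $w_i/w_k=\rho$ there is no slack left there, and a deficit of one item's value divided by $w_k$ in the min-ratio would destroy the bound.

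The paper closes this step by aiming at a fixed threshold rather than at the equalizing ratio $r$. Normalize $\MMS_i(M,n)=1$ and suppose for contradiction that $\mu'>1+\rho(1-\alpha)$, so $v_i(D_j^*)>\frac{w_j}{w_i}\bigl(1+\rho(1-\alpha)\bigr)$ for every $j$; write $v_i(e^*)=1-\alpha-\delta$ with $\delta\ge0$ (your Case B). Sort $D_k^*$ in increasing order of value and keep in slot $k$ the shortest prefix whose value first exceeds $\frac{w_k}{w_i}$: slot $k$ then clears the threshold $1/w_i$ by construction and overshoots it by at most one item, i.e.\ by at most $1-\alpha-\delta$. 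Since $\rho\,w_k/w_i\ge 1$ gives $v_i(D_k^*)>\frac{w_k}{w_i}+(1-\alpha)$, the transferred suffix has value strictly greater than $\delta$, which is exactly the shortfall of $v_i(e^*)$ below $1-\alpha$, so slot $i$ receives value greater than $\alpha+(1-\alpha-\delta)+\delta=1$. Every slot of the resulting $n$-partition then has ratio strictly above $1/w_i$, contradicting $\MMS_i(M,n)=1$. The point is that the rounding is deliberately asymmetric (always overshoot on slot $k$, which can only help that slot) and the small-item bound makes the error cancel exactly against the deficit $\delta$ of $e^*$; your symmetric ``equalize and absorb'' plan would need this same accounting, so it must be carried out explicitly rather than asserted.
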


\begin{proof}
Let $\mathcal{A} = (A_1, \ldots, A_n)$ be an $\alpha$-MMS allocation. Suppose, towards a contradiction, that $\mathcal{A}$ is not an $\frac{\alpha}{1+ \rho (1-\alpha)}$-MMA1 allocation, i.e, there exists some agent $i \in N$ such that $v_i(A_i) < \frac{\alpha}{1+\rho(1-\alpha)} \cdot \MMS_i(A_{-i} \setminus \{e\},n-1)$ holds for any item $e \in A_{-i}$. Let $e_{max}^{i}$ denote the item with the maximum value in $A_{-i}$. It is not hard to see that $\MMS_{i}(A_{-i} \setminus \{e_{max}^{i}\},n-1) \leq \MMS_{i}(A_{-i} \setminus \{e\},n-1)$ for any item $e \in A_{-i}$. Therefore, it suffices to use $\MMS_{i}(A_{-i} \setminus \{e_{max}^{i}\},n-1)$ to complete the following proof.

For simplicity, we assume that $\MMS_{i}(M,n) = 1$. So we have $v_i(A_i) \geq \alpha$ and $\MMS_{i}(A_{-i} \setminus \{e_{max}^{i}\},n-1) > 1+\rho(1-\alpha)$. Let $\mathcal{B} = (B_j)_{j \neq i} $, where $w_i \cdot \min_{j \in N \setminus \{i\}} \frac{v_i(B_j)}{w_j} = \MMS_{i}(A_{-i} \setminus \{e_{max}^{i}\},n-1)$, denote an $(n-1)$-allocation of $A_{-i} \setminus \{e_{max}^{i}\}$. Thus, for any bundle $B_{j} \in \mathcal{B}$, we have $v_i(B_j) \geq \frac{w_j}{w_i} \cdot \MMS_{i}(A_{-i} \setminus \{e_{max}^{i}\},n-1) > \frac{w_j}{w_i}[1+\rho(1-\alpha)].$ Now, consider the following two cases:

\underline{Case 1: $v_1(e_{max}^{i}) > 1 - \alpha$.} We can put $e_{max}^{i}$ into $A_i$, and then obtain a new allocation of $M$, i.e., $\mathcal{C} = (A_i \cup \{e_{max}^{i}\}, B_{j})_{(j \neq i)}$. In allocation $\mathcal{C}$, we have $\min \{v_i(A_i \cup \{e_{max}^{i}\}), w_i \cdot \min_{j \in N \setminus \{i\}}\frac{v_i(B_j)}{w_j} \} > 1$, which contradicts the definition of $\MMS_{i}(M,n)$.

\underline{Case 2: $v_1(e_{max}^{i}) \leq 1 - \alpha$.} Let $v_i(e_{max}^{i}) = 1 - \alpha -\epsilon$, where $ 0 \leq \epsilon < 1-\alpha$. Firstly, we also put $e_{max}^{i}$ into $A_i$. After that, we arbitrarily choose one bundle $B_{k}$ which includes more than two items from $\{B_{j}\}_{j \neq i}$. Assume that all items $\{e_{(l)}\}$ in $B_{k}$, where $l = 1, \ldots,  \lvert B_{k} \rvert$, are sorted in the increasing order of value. Next, we remove items from $B_{k}$ in the ascending order of index successively and stop removing when the value of all removed items starts to be strictly greater than $\frac{w_k}{w_i}$, e.g., when item $e_{(t)}$, where $1 \leq t < \lvert B_k \rvert$, is removed, we have $\sum_{l = 1}^{t-1} v_i(e_{(l)}) \leq \frac{w_k}{w_i}$ and $\sum_{l = 1}^{t} v_i(e_{(l)}) > \frac{w_k}{w_i}$, and then we stop removing. Assume that $e_{(t)}$ is the last removed item. Because $v_i(e_{max}^{i})  = 1 - \alpha - \epsilon$, we have $v_i(e_{(l)}) \leq 1-\alpha-\epsilon$ for any $l = 1, \ldots, \lvert B_{k} \rvert$. Thus, we have $\sum_{l=1}^{t}v_i(e_{(l)}) = \sum_{l=1}^{t-1}v_i(e_{(l)})+ v_i(e_{t}) \leq \frac{w_k}{w_i} + (1-\alpha - \epsilon)$ and $\sum_{l=t+1}^{\lvert B_k \rvert}v_i(e_{(l)}) = v_i(B_{k}) - \sum_{l=1}^{t}v_i(e_{(l)}) > (\rho \cdot \frac{w_k}{w_i} -1) (1-\alpha)+\epsilon$. Note that $\rho = \max_{i,j \in N}\frac{w_i}{w_j}$, we have $\rho \cdot \frac{w_k}{w_i} \geq 1$, which implies $\sum_{l=t+1}^{\lvert B_{k} \rvert}v_i(e_{(l)}) > \epsilon$. At last, we put the removed $t$ items back to $B_k$ and the remaining $\lvert B_k \rvert - t$ items into $A_i$, and then obtain a new allocation of $M$, i.e., $\mathcal{D} = (A_i \cup \{e_{max}^{i}\} \cup \{e_{(t+1)}, \ldots, e_{(\lvert B_{k} \rvert)}\}, B_{j})_{j \neq i}$. In allocation $\mathcal{D}$, we have $\min \{v_i(A_i \cup \{e_{max}^{i}\} \cup \{e_{(t+1)}, \ldots, e_{(\lvert B_{k} \rvert)}\}), w_i \cdot \min_{j \in N \setminus \{i\}}\frac{v_i(B_j)}{w_j}\} > 1$, which also contradicts the definition of $\MMS_i(M,n)$.
\end{proof}

The above result directly implies the following corollary.
\begin{corollary}
\label{cor: mms_mma1_goods}
When agents have identical weights, for any $\alpha \in [0,1]$, any $\alpha$-MMS allocation is also $\frac{\alpha}{2-\alpha}$-MMA1.
\end{corollary}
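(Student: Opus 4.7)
The plan is to derive this corollary as a direct specialization of Proposition \ref{pro: mms_mma1_goods}. First I would observe that in the identical-weights setting, every agent's weight equals $1/n$, so for any two agents $i, j \in N$ we have $w_i / w_j = 1$. This immediately gives $\rho = \max_{i, j \in N} w_i / w_j = 1$.

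With $\rho$ pinned to $1$, the general bound $\tfrac{\alpha}{1 + \rho(1 - \alpha)}$ from Proposition \ref{pro: mms_mma1_goods} collapses to $\tfrac{\alpha}{1 + (1 - \alpha)} = \tfrac{\alpha}{2 - \alpha}$. Hence any $\alpha$-MMS allocation is automatically $\tfrac{\alpha}{2-\alpha}$-MMA1 in the symmetric setting, which is exactly the claim.

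There is essentially no obstacle here since the proposition does the heavy lifting; the only thing to check is that the argument of Proposition \ref{pro: mms_mma1_goods} genuinely applies for any weight profile (including the uniform one), which it does, as the two cases in that proof only use the weight ratios through $\rho$. Therefore the corollary follows immediately from the substitution $\rho = 1$.
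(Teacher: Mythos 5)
Your proposal is correct and matches the paper exactly: the paper derives this corollary directly from Proposition \ref{pro: mms_mma1_goods} by noting that identical weights force $\rho = \max_{i,j \in N} w_i/w_j = 1$, which turns $\frac{\alpha}{1+\rho(1-\alpha)}$ into $\frac{\alpha}{2-\alpha}$. Nothing further is needed.
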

The best known approximation ratio for MMS allocations with symmetric agents is $(\frac{3}{4}+\frac{3}{3836})$ \cite{akrami2023breaking}. By Corollary \ref{cor: mms_mma1_goods}, a $\frac{3}{4}$-MMS allocation is also $\frac{3}{5}$-MMA1, which fixes a misstatement and improves the result of $\frac{1}{2}$-MMA1 in the work of \cite{chan2019maximin}.

\begin{proposition}
    When agents have arbitrary weights, MMA1 and MMAX allocations may not exist. 
\end{proposition}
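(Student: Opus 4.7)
The plan is to exhibit explicit small instances on which every allocation fails, treating MMAX and MMA1 separately since the two notions present rather different structural obstructions.

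For MMAX, I would leverage the observation that when $n=2$ the condition collapses to weighted EFX. Indeed, with a single other agent $j$, we have $\MMS_i(A_j \setminus \{e\}, 1) = (w_i/w_j)\, v_i(A_j \setminus \{e\})$, so demanding the inequality for every $e \in A_j$ is exactly the weighted EFX inequality between $i$ and $j$. Hajiaghayi et al.~\cite{hajiaghayi2023almost} already construct two-agent goods instances with asymmetric weights that admit no weighted EFX allocation, and such an instance therefore witnesses MMAX non-existence directly.

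For MMA1 the two-agent case is not enough: when $n=2$ the MMA1 condition collapses analogously to weighted EF1, which always exists and can be computed efficiently. So I would construct a three-agent instance. The guiding intuition is to take a heavily skewed weight vector, for example $w_1$ close to one with $w_2,w_3$ small, together with a small set of items whose values are chosen so that no partition is simultaneously balanced from all three agents' perspectives. Verification would then be a finite case analysis: enumerate (up to the symmetry between agents $2$ and $3$) every partition of the items into three bundles and, for each one, pick an agent $i$ for whom $v_i(A_i) < \MMS_i(A_{-i} \setminus \{e\}, 2)$ holds for every $e \in A_{-i}$. Allocations that concentrate items on agent~$1$ are killed by one of the light agents, whose (near-)empty bundle is outmatched by a balanced two-way split of the remaining items even after removing the heaviest one; allocations that spread the items more evenly are killed at agent~$1$, whose bundle becomes small relative to the MMS of the complement split across two very light-weighted agents.

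The main obstacle is engineering the item values so that every candidate allocation genuinely falls into one of these failure modes. With fully identical item values MMA1 tends to be attainable, since matching agent~$1$'s bundle size to her weight share while giving each light agent exactly one item drops $\MMS_i(A_{-i} \setminus \{e\}, 2)$ to zero for $i \in \{2,3\}$. The construction will therefore likely need non-identical item values (for instance, a couple of moderately sized items whose indivisibility defeats any neat proportional split) and weights tuned to block the clean $(k,1,1)$ escape route. I expect the argument itself to be short once the correct instance is in hand, with the bulk of the work being in pinning down that instance.
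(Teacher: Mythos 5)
Your MMAX half is a legitimate shortcut that differs from the paper: the observation that two-agent MMAX for goods coincides with weighted EFX is correct, and invoking the non-existence result of \cite{hajiaghayi2023almost} settles that half externally (the paper itself uses exactly this reduction for the chores version in the main text). The paper instead builds a single instance with $n\geq 4$ agents --- one heavy agent of weight $1-(n-1)\epsilon$ and $n-1$ light agents of weight $\epsilon$, with $3n-1$ items --- and rules out MMA1 and MMAX simultaneously by a case analysis on how many items the heavy agent receives. Your route buys a shorter argument for MMAX at the cost of an external dependency; the paper's buys a self-contained, unified witness.

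The MMA1 half, however, has a genuine gap: you never produce the instance. A non-existence claim is proved by exhibiting a concrete witness and verifying that every allocation fails, and you explicitly defer ``pinning down that instance'' as future work, offering only a qualitative sketch of which allocations should be killed by which agents. That sketch is not yet checkable --- in particular, for a light agent $i$ the benchmark $\MMS_i(A_{-i}\setminus\{e\},2)$ carries the prefactor $w_i=\epsilon$ and involves a weighted min over a partition between one heavy and one light agent, so it is not obvious that a near-empty bundle for agent $2$ or $3$ actually violates MMA1 rather than trivially satisfying it (indeed, giving each light agent a single item often drives the benchmark to zero, as you yourself note). A further warning sign: the paper's construction requires $n\geq 4$ and assigns the light agents zero value on the last $n-1$ items to force the necessary structure; your plan assumes three agents suffice, which may be false or at least requires a construction quite different from the one you gesture at. As written, the proposition is only half proved.
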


\begin{proof}

We construct the following instance where no MMA1 and MMAX allocation can be found. The intuition of this instance is that we allow the weight of one agent to be extremely large and keep the weights of the remaining agents small.

Let us consider an instance with $n$ agents and $3n-1$ items where $n \geq 4$ and let $w_i = \epsilon$ ($i = 1,\ldots,n-1$) and $w_n = 1-(n-1)\epsilon$ where $0 < \epsilon < 1$. Assume that the valuation functions of the first $n-1$ agents are the same, and we have
\begin{equation*}
    v_i(e_j) = 
    \begin{cases}
         \frac{\epsilon}{2} & \ if \ j \leq 2n-2
        \\
         \frac{1-(n-1)\epsilon}{2}& \ if \ 2n-1 \leq j \leq 2n
        \\
        0 & \ if \ j \geq 2n+1
        \\      
    \end{cases}
\end{equation*}

For the last agent, we have
\begin{equation*}
    v_n(e_j) = 
    \begin{cases}
       \frac{1-(n-1)\epsilon}{2n} & \ if \ j \leq 2n
       \\
       \epsilon
       & \ if \ j \geq 2n +1
       \\
    \end{cases}
\end{equation*}

Note that $\MMS_i = \epsilon$ ($i=1,\ldots,n-1$) and $\MMS_n = 1-(n-1)\epsilon$. In any allocation that guarantees an MMAX and MMA1 allocation where the value of every bundle is positive, at most $n+1$ items from the first $2n$ items are allocated to agent $n$ since the first $n-1$ agents have zero valuation for the last $n-1$ items. 

Suppose that $\mathcal{A} = (A_1, \ldots, A_n)$ is an MMA1 or MMAX allocation where the value of every bundle is positive. Here we choose sufficiently small $\epsilon$ which satisfies $1-(n-1)\epsilon \geq 2\phi n\epsilon$ where $\phi \approx 1.618$ is the \textit{golden ratio}. Consider the following two cases:

(1) If $\lvert A_n \rvert = 2n$, the value of agent $n$'s bundle is maximized since she cannot pick an extra item, and it is not hard to see that this allocation is MMA1 or MMAX to her since the first $n-1$ agents have exactly one item in their own bundles, respectively. For the first $n-1$ agents, If the value of one agent's bundle is $\frac{1-(n-1)\epsilon}{2}$ that is greater than $\epsilon$, it is not hard to check this allocation is MMA1 or MMAX to her. If the value of one agent's bundle is $\frac{\epsilon}{2}$. For this agent, her MMA1 approximation ratio is
\begin{equation}
\label{wmma1_ratio_computation}
    \frac{\frac{\epsilon}{2}}{ \frac{\epsilon}{1-(n-1)\epsilon} \cdot (\frac{1-(n-1)\epsilon}{2}+ \frac{\epsilon}{2})} = \frac{1}{1+\frac{\epsilon}{1-(n-1)\epsilon}}  < 1.
\end{equation}
and MMAX approximation ratio is
\begin{equation}
\label{wmmax_ratio_computation}
    \frac{\frac{\epsilon}{2}}{ \frac{\epsilon}{1-(n-1)\epsilon} \cdot \frac{1-(n-1)\epsilon}{2} \cdot 2} = \frac{1}{2}.
\end{equation}

Therefore, in this case, this allocation is not MMA1 and MMAX to some agents from the first $n-1$ agents. 

(2) If $\lvert A_n \rvert < 2n$, for agent $n$, if she has more items, the value of her bundle will increase, but $\MMS_n(A_{-n} \setminus \{e\}, n-1)$($e \in A_{-n}$) will be nonincreasing. Therefore, her maximum MMA1 and MMAX approximation ratio are achieved when agent $n$ has $2n-1$ items that include $n+1$ items from the first $2n$ items and $n-2$ items from the remaining $n-1$ items. Thus, agent $n$'s MMA1 approximation ratio is
    \begin{align*}
    & \frac{\frac{1-(n-1)\epsilon}{2n}\cdot (n+1) + (n-2)\epsilon}{\frac{1-(n-1)\epsilon}{\epsilon} \cdot \epsilon} 
     \\
       =  \ & \frac{\frac{n+1}{2n}+\epsilon(n-2-\frac{n^2-1}{2n})}{1-(n-1) \epsilon} 
     \\
      \leq \ & \frac{\frac{n+1}{2n}+\frac{1}{(2\phi+1)n-1}\cdot(n-2-\frac{n^2-1}{2n})}{1-(n-1) \cdot \frac{1}{(2\phi+1)n-1}} 
     \\
      =  \ & \frac{(\phi+1)n + (\phi-2)}{2\phi n} 
     \\
      = & \frac{\phi+1}{2\phi} -\frac{(2-\phi)}{2\phi n} < 1,
    \end{align*}
where the first inequality follows from $n-2-\frac{n^2-1}{2n}>0$ and $\epsilon \leq \frac{1}{(2\phi+1)n-1}$, and her MMAX approximation ratio is
\begin{equation*}
    \begin{split}
      & \frac{\frac{1-(n-1)\epsilon}{2n}\cdot (n+1) + (n-2)\epsilon}{\frac{1-(n-1)\epsilon}{\epsilon} \cdot \frac{1-(n-1)\epsilon}{2n}} 
      \\
       = \ &  \frac{\frac{n+1}{2n}+\epsilon(n-2-\frac{n^2-1}{2n})}{\frac{(n-1)^2 \epsilon}{2n}+\frac{1}{2n\epsilon}-\frac{n-1}{n}} \\
       \leq \ & \frac{\frac{n+1}{2n}+\frac{1}{(2\phi+1)n-1}\cdot(n-2-\frac{n^2-1}{2n})}{\frac{(n-1)^2}{2n} \cdot \frac{1}{(2\phi+1)n-1} + \frac{1}{2n} \cdot ((2\phi+1)n-1)-\frac{n-1}{n} }
      \\
     = \ & \frac{(\phi+1)n+(\phi-2)}{2\phi^2n} 
     \\
     = \ & \frac{1}{2} - \frac{(2-\phi)}{2\phi^2n} < \frac{1}{2},
    \end{split}    
\end{equation*}
where the first inequality also follows from $n-2-\frac{n^2-1}{2n}>0$ and $\epsilon \leq \frac{1}{(2\phi+1)n-1} < \frac{1}{n-1}$.

Therefore, in this case, this allocation is not MMA1 and MMAX to agent $n$, although her MMA1 and MMAX approximation ratio is maximized.
\end{proof}

\subsection{$0.618$-MMAX algorithm for symmetric agents}
In this part, we study the computation of approximate MMAX allocations when agents have identical weights. Actually, 0.618-MMAX allocations exist, which means the approximation ratio of MMA1 can be further improved to $0.618$. We can use the exact algorithm from the work of \cite{amanatidis2020multiple} and similar analysis to show that 0.618-MMAX allocations always exist. For completeness, we provide the analysis in detail. By Lemma \ref{lem: wmmax_reduction_approximation_goods}, it suffices to consider the ordered instance. Before moving to the analysis of Algorithm \ref{alg: round_robin_envy_cycle} (Algorithm 3 in \cite{amanatidis2020multiple}), we give the following two lemmas.

\begin{lemma}[\cite{amanatidis2018comparing}]
\label{lem: mms_nondecrease}
Suppose that $\mathcal{A} = (A_1,\ldots,A_n) \in \prod_{n}(M)$ is an $\MMS$ partition for agent $i$, for any $e \in A_j$ where $j \in N$, it holds that $\MMS_i(M\setminus \{e\},n-1) \geq \MMS_i(M, n)$.
\end{lemma}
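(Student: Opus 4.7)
The plan is to convert the given $n$-partition $\mathcal{A}=(A_1,\ldots,A_n)$ into a feasible $(n-1)$-partition of $M\setminus\{e\}$, and then argue that this witness is already good enough. Concretely, identify the bundle $A_j$ that contains $e$, and pick any other index $l_0\ne j$. Form a new collection of $n-1$ bundles by leaving $A_l$ alone for every $l\ne j,l_0$, and replacing $A_{l_0}$ by $A_{l_0}\cup(A_j\setminus\{e\})$. This is a valid $(n-1)$-partition of $M\setminus\{e\}$ (if $A_j=\{e\}$, the residue $A_j\setminus\{e\}$ is empty and we simply drop the $j$-th bundle).

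The key inequality then falls out from additivity: since $\mathcal{A}$ achieves $\MMS_i(M,n)$, each $v_i(A_l)\ge \MMS_i(M,n)$. The untouched bundles still have value at least $\MMS_i(M,n)$, and the merged bundle $A_{l_0}\cup(A_j\setminus\{e\})$ has value $v_i(A_{l_0})+v_i(A_j\setminus\{e\})\ge v_i(A_{l_0})\ge \MMS_i(M,n)$, since we only \emph{added} goods. Hence the minimum bundle value of the new $(n-1)$-partition is at least $\MMS_i(M,n)$.

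Finally, because $\MMS_i(M\setminus\{e\},n-1)$ is the maximum over all $(n-1)$-partitions of the minimum bundle value, the existence of this particular witness partition gives
\[
\MMS_i(M\setminus\{e\},n-1)\ \ge\ \min_{l}v_i(\text{new bundle}_l)\ \ge\ \MMS_i(M,n),
\]
which is what we wanted. I do not expect a real obstacle here: the whole argument is a one-line constructive reduction that exploits monotonicity of additive valuations under taking unions (the analogue would fail for chores, where merging \emph{increases} cost, but here we are in the goods setting with identical weights).
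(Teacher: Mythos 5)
Your argument is correct: merging the residue $A_j\setminus\{e\}$ into another bundle yields a witness $(n-1)$-partition of $M\setminus\{e\}$ whose minimum value is still at least $\MMS_i(M,n)$ by monotonicity of additive nonnegative valuations, which is exactly the standard proof of this fact. The paper itself gives no proof, citing the lemma from prior work, and your observation that the argument is specific to goods with identical weights (it would fail for chores, where merging raises cost) is the right caveat to keep in mind given the surrounding weighted-chores context of the paper.
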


\begin{lemma}
\label{lem: property_round_robin_envy_cycle}
In Algorithm \ref{alg: round_robin_envy_cycle}, before line \ref{lin: phi_mmax_first_round_2}, the partial allocation $\mathcal{A}^{p} =(A_1^{p},\ldots,A_n^{p})$ has the following properties:

\begin{enumerate}
    \item 
    \label{property_round_robin_envy_cycle_1}
    $A_i^{p}  = \{e_i\}$  for any $i \in N$.
    
    \item 
    \label{property_round_robin_envy_cycle_2}
    $v_i(A_i^{p}) > \phi \cdot v_i(e)$ for any $i \in L$ and $e \in M \ \setminus \cup_{k=1}^{n}A_{k}^{p}$. 
    
    \item \label{property_round_robin_envy_cycle_3}$\phi \cdot v_i(A_i^{p}) \geq v_i(A_j^{p})$ for any $i,j \in N \setminus L$.
    
     \item 
     \label{property_round_robin_envy_cycle_4}
     $v_i(A_i^{p}) \geq v_i(A_j^{p})$ and $v_j(A_i^{p}) \geq v_j(A_j^{p})$ for any $i, j \in N$ and $l_i < l_j$. 

\end{enumerate}
\end{lemma}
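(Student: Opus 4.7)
The plan is to establish each of the four invariants by induction on the item-assignment steps of the first round of Algorithm~\ref{alg: round_robin_envy_cycle}, leveraging the ordered-instance hypothesis $v_i(e_1) \geq v_i(e_2) \geq \cdots \geq v_i(e_m)$ together with the bundle-preservation properties of envy-cycle elimination. The four statements are not independent; Properties~1 and 4 together pin down the combinatorial structure of the partial allocation, while Properties~2 and 3 describe the dichotomy induced by the set $L$ of ``frozen'' agents. I would therefore prove them jointly, carrying all four as a single compound invariant through the induction.

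Property~1 is the easiest step: in the first round exactly $n$ items are distributed, one per agent, and any envy-cycle elimination step only permutes bundles without altering their sizes. So after each step (possibly with a relabeling of agents to match item indices) every agent still holds a singleton, and at the point we reach line~\ref{lin: phi_mmax_first_round_2} the singleton held by agent $i$ is precisely $\{e_i\}$. Property~2 I would get from the defining condition for an agent to enter $L$: agent $i$ is frozen exactly when, upon receiving $e_{l_i}$, the next candidate item $e_{l_i + 1}$ satisfies $v_i(e_{l_i}) > \phi\cdot v_i(e_{l_i+1})$. Because the instance is ordered, this single gap inequality propagates to $v_i(A_i^p) > \phi\cdot v_i(e)$ for every still-unallocated item $e$, and once an agent is in $L$ later envy-cycle swaps can only weakly increase her bundle's value from her own viewpoint, so the inequality persists. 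Property~3 then comes for free from the contrapositive of the freezing rule: if $i,j \notin L$ then no $\phi$-gap was triggered among the items these agents were offered, so by design $\phi\cdot v_i(A_i^p) \geq v_i(A_j^p)$.

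The main obstacle is Property~4, where the combined ranking $v_i(A_i^p)\geq v_i(A_j^p)$ \emph{and} $v_j(A_i^p)\geq v_j(A_j^p)$ must hold simultaneously for $l_i < l_j$. The $v_i$-direction follows because, possibly after envy-cycle repair, agent $i$ ends the step holding the item she considers best among those available when it was her turn. The $v_j$-direction is where the ordered-instance assumption does essential work: since $l_i < l_j$ the item $e_{l_i}$ appears before $e_{l_j}$ in the common ordering, hence $v_j(e_{l_i}) \geq v_j(e_{l_j})$, which is exactly $v_j(A_i^p) \geq v_j(A_j^p)$ by Property~1. The delicate point is that envy-cycle elimination can shuffle which agent is called ``agent~$i$''; I would resolve this by insisting that whenever a cycle of swaps is executed, the labels $l_i$ travel with the bundles, so $l_i$ always records the round in which the current holder of $A_i^p$ entered the allocation. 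Under this bookkeeping, $l_i$ remains a well-defined function of the current allocation, and the chain of inequalities above is preserved across every step of the first round.
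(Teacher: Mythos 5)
Your conclusions match the lemma, but the mechanism you argue from is not the one Algorithm \ref{alg: round_robin_envy_cycle} actually runs, and this creates a genuine gap at Property 4. First, the phase the lemma describes contains no envy-cycle elimination: that subroutine is only invoked later, at line \ref{lin: envy_cycle_alg}. What happens in the while loop is a \emph{grab}: when agent $i$'s favourite unallocated item $e_i$ satisfies $\phi\cdot v_i(e_i) < v_i(e_j)$ for the best item $e_j$ currently held by an already-served, non-frozen agent $j$, she takes that single item, is frozen into $L$, and $j$ re-enters the queue with an empty bundle. So your descriptions of bundles being ``permuted'' by cycle swaps, of labels ``travelling with bundles,'' and of the freezing rule as a gap between consecutive items $e_{l_i}$ and $e_{l_i+1}$ do not match the algorithm; the correct reasons Properties 1 and 2 hold are that $L$-agents are excluded from the grab set $R$ (so their singleton bundle never changes) and that the pool of unallocated items only shrinks after the grab condition is certified. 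Property 3 also needs a case split that your ``contrapositive of the freezing rule'' does not supply: if $j$'s last turn comes after $i$'s, then $e_j$ was not among the items the rule compared against at $i$'s turn, and one must instead use that $i$ picked her favourite from a superset of what $j$ later saw (which gives the stronger $v_i(A_i^{p})\ge v_i(A_j^{p})$).

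The real gap is Property 4. You derive the $v_j$-direction from ``$v_j(e_{l_i}) \ge v_j(e_{l_j})$, which is exactly $v_j(A_i^{p})\ge v_j(A_j^{p})$ by Property 1,'' but Property 1 states $A_i^{p}=\{e_i\}$ for the algorithm's variable $e_i$, not $A_i^{p}=\{e_{l_i}\}$; it does not say that the agent with label $l_i=k$ holds the $k$-th ranked item of the common order. That identification is essentially the content of Property 4 itself and must be extracted from the dynamics rather than assumed. The paper does this by a case analysis on whether $i,j$ lie in $L$ or $N\setminus L$: an agent entering $L$ grabs the highest-valued item not yet held by $L$-agents (highest against unallocated items by the grab condition, and highest over $R$ by the $\arg\max$), so agents entering $L$ earlier choose from a larger set; for agents outside $L$ the labels are assigned in the second for-loop according to $t_i$, i.e., according to how late they last picked, which again corresponds to picking from a smaller remaining pool. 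Only because the instance is ordered does ``larger choice set'' yield both the $v_i$- and $v_j$-comparisons simultaneously. Without an argument of this shape, your proof of Property 4 is circular.
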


\begin{proof}
(1) In the while loop, each time agent $i$ is selected to choose an item, only one item $e_i$ is allocated to her in line \ref{lin: phi_mmax_first_round_1} or line \ref{lin: phi_mmax_first_round_2} of Algorithm \ref{alg: round_robin_envy_cycle}. If she obtains an item in line \ref{lin: phi_mmax_first_round_1} and then enters the set $L$, she will not lose it and cannot enter the loop again. If she obtains the item in line \ref{lin: phi_mmax_first_round_2}, she may lose her item due to line \ref{lin: phi_mmax_first_round_1} and gains the opportunity to enter the loop again to obtain a new $e_i$. When $N^{'} = \emptyset$, no agent will enter the loop, and each agent receives exactly one item $e_i$.

(2) In Algorithm \ref{alg: round_robin_envy_cycle}, as long as agent $i$ enters $L$, she will keep her item until the termination of the algorithm. Otherwise, she may come back to $N'$ and enters the loop again. Therefore, consider the time before she enters $L$. we can use $e_{old}$ to denote the item allocated to her currently. After she enters $L$, let $e_{new}$ denote the item she grabs from another agent in line \ref{lin: phi_mmax_first_round_1}. Clearly, we have $v_i(A_i^{p}) = v_i(e_i) = v_i(e_{new}) > \phi \cdot v_i(e_{old}) \geq \phi \cdot v_i(e)$ for any $e$ which is not allocated to any agent at the time. Meanwhile, we find that the number of items that are not allocated to any agent decreases with the execution of the algorithm. In the end, we can conclude that $v_i(A_i^{p}) > \phi \cdot v_i(e)$ for any $i \in L$ and $e \in M \ \setminus \cup_{k=1}^{n}A_{k}^{p}$.

(3) For any two agents $i,j \in N \setminus L$, these two agents may repeat the following procedure several times: enter the loop to obtain one item, lose their items and reenter again. Therefore, considering the last time each agent enters the loop, there are two cases:

\begin{itemize}
    \item[(a)] If agent $i$ enters the loop before agent $j$, it is easy to show $v_i(A_i^{p})=v_i(e_i) \geq v_i(e_j)=v_i(A_j^{p})$ since agent $i$ picks her favorite one from the remaining available items before agent $j$.
    \item[(b)] If agent $i$ enters the loop after agent $j$, because she does not enter $L$, we will have $\phi \cdot v_i(A_i^{p}) = \phi \cdot v_i(e_i) \geq v_i(e_j) = v_i(A_j^{p})$.
\end{itemize}

(4) If $i, j \in L$ and $l_i < l_j$, we know agent $j$ enters $L$ after agent $i$. In the algorithm, if one agent enters $L$, she will grab one item from another agent in $R$, and this item has the maximum value among items that are not allocated to agents in $L$ at that time. Hence, when agent $j$ enters $L$, she will have less choice than agent $i$, and we have $v_i(A_i^{p}) \geq v_i(A_j^{p})$. Note that we consider the ordered instance. In agent $j$'s view, the value of the item in agent $i$'s bundle will always be greater than that of her item. Thus, we have $v_j(A_i^{p}) \geq v_j(A_j^{p})$. If $i \in L$ and $j \in N back\setminus L$, similarly, agent $i$ has the priority to get one item that has the maximum value among the items which are not allocated to agents in $L$ at that time. For agent $j$, no matter whether she has been in $N\setminus L$ or enters $N \setminus L$ in the future, she only obtains one item from this set of items. Thus, we have $v_i(A_i^{p}) \geq v_i(A_j^{p})$ and $v_j(A_i^{p}) \geq v_j(A_j^{p})$. If $i, j \in N \setminus L$ and $l_i < l_j$, following the analysis of property \ref{property_round_robin_envy_cycle_3} in Lemma \ref{lem: property_round_robin_envy_cycle}, we have $v_i(A_i^{p}) \geq v_i(A_j^{p})$ and $v_j(A_i^{p}) \geq v_i(A_j^{p})$.
\end{proof}

\begin{algorithm}[h]
\caption{Envy-cycle elimination with preprocessing}
\label{alg: round_robin_envy_cycle}
\KwIn{An ordered instance $I'$($\lvert M \rvert = m$  items and $\lvert N \rvert = n$ agents}
\KwOut{An approximate MMAX allocation $\mathcal{A} = (A_1, \ldots, A_n)$}
Let $\mathcal{A} = (\emptyset, \ldots, \emptyset)$ and $\phi = \frac{1+\sqrt{5}}{2} \approx 1.618$

$L = \emptyset$; $N^{'} = N$;  $l_i = 0$ ($i = 1,\ldots,n$); $r=s=1$

\While{$N^{'} \neq \emptyset$}
{Let $i$ be the lexicographically first agent of $N^{'}$.

$e_i$ = $\arg \max_{e \in M}v_i(e)$ and $t_i$ = $m$ - $\lvert M \rvert$ +1

Let $R$ = ($N \setminus (N^{'} \cup L) $) $\cup \{i\}$ and $j$ = $\arg \max_{k \in R} v_i(e_k)$

\If {$\phi \cdot v_i(e_i) < v_i(e_j)$}
{
$A_j = A_j \setminus \{e_j\} $, $e_i = e_j$ and $A_i = A_i \cup \{e_i\} $ \label{lin: phi_mmax_first_round_1}

$L = L \cup \{i\}$

$l_i = r$ and $r = r+1$ 

$N^{'} = (N^{'} \setminus \{i\}) \cup \{j\}$
}
\Else
{
$N^{'} = N^{'} \setminus \{i\}$ and $M = M \setminus \{e_i\}$

$A_i = A_i \cup \{e_i\}$ \label{lin: phi_mmax_first_round_2}
}
}

\For{every $i \in N \setminus L$ in the decreasing order of $t_i$ }
{

$l_i = n-s+1$ and $s=s+1$

$e^{*} = \arg \max_{e \in M} v_i(e)$ 

$M = M \setminus \{e^{*}\}$ and $A_i = A_i \cup \{e^{*}\}$
}

Use the envy-cycle elimination algorithm to allocate the remaining items. \label{lin: envy_cycle_alg}

\Return $\mathcal{A}$
\end{algorithm}

\begin{theorem}
When agents have identical weights, Algorithm \ref{alg: round_robin_envy_cycle} can compute a $(\phi -1)$-MMAX allocation where $\phi \approx 1.618$ is the golden ratio.    
\end{theorem}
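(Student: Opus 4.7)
The plan is to fix an agent $i \in N$ and an item $e^* \in A_{-i}$ and prove $v_i(A_i) \ge (\phi-1)\,\MMS_i(A_{-i}\setminus\{e^*\},\,n-1)$. The analysis rests on two pieces: the $(\phi-1)$-MMS guarantee of Algorithm \ref{alg: round_robin_envy_cycle} (whose proof transfers verbatim from Amanatidis et al.\ to our ordered input, giving $v_i(A_i) \ge (\phi-1)\,\MMS_i(M,n)$), and the structural invariants of the two-phase algorithm collected in Lemma \ref{lem: property_round_robin_envy_cycle}. Lemma \ref{lem: wmmax_reduction_approximation_goods} then lifts the guarantee from ordered to general instances, so it suffices to work with ordered inputs throughout.

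The main reduction I would use is an \emph{insertion} inequality: given any $(n-1)$-partition $(C_1,\dots,C_{n-1})$ of $A_{-i}\setminus\{e^*\}$ that attains the MMAX benchmark, inserting $e^*$ into a minimum-value $C_{j^*}$ and appending $A_i$ as an $n$-th bundle yields an $n$-partition of $M$ whose minimum bundle value equals $\min\bigl(v_i(A_i),\MMS_i(A_{-i}\setminus\{e^*\},n-1)\bigr)$. By the definition of $\MMS_i(M,n)$,
\begin{equation*}
\min\bigl(v_i(A_i),\ \MMS_i(A_{-i}\setminus\{e^*\},n-1)\bigr) \le \MMS_i(M,n).
\end{equation*}
If the right-hand minimum is realized by $\MMS_i(A_{-i}\setminus\{e^*\},n-1)$, the $(\phi-1)$-MMS guarantee immediately delivers the desired MMAX inequality. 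The complementary case $v_i(A_i) \le \MMS_i(M,n)$ is where the MMAX benchmark can strictly exceed $\MMS_i(M,n)$, and this is where the proof has to do real work.

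To handle that delicate case I would open the algorithm's internals. First, envy-cycle elimination is monotone in each agent's own valuation of her bundle: item additions to other agents never change $v_i(A_i)$, and cycle rotations only move $i$ to a bundle she envies, so $v_i(A_i) \ge v_i(A_i^p) = v_i(e_i)$. For $i \in L$, Property 2 of Lemma \ref{lem: property_round_robin_envy_cycle} then forces every item added to $A_{-i}$ during envy-cycle elimination to have value at most $v_i(A_i)/\phi$ in $i$'s view, while Property 4 caps every $A_j^p$ with $l_j \ge l_i$ by $v_i(A_i)$. For $i \in N \setminus L$ the argument is symmetric, driven by Properties 3 and 4. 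The plan is to assemble these pointwise bundle inequalities into a combinatorial upper bound $\MMS_i(A_{-i}\setminus\{e^*\},n-1) \le v_i(A_i)/(\phi-1)$, which closes the proof.

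The main obstacle is precisely this assembly step. The preprocessing inequalities are inherently one-sided: they guarantee that $v_i(A_i^p)$ dominates later-joining singletons and unallocated items, but they say nothing directly about the singletons $A_j^p$ with $l_j < l_i$, which can be strictly larger than $v_i(e_i)$ in $i$'s view. These "early" bundles must be controlled through the EF1 invariant of envy-cycle elimination together with the ordered-instance structure, and turning pointwise bounds of the form $v_i(g) \le v_i(A_i)/\phi$ into a bound on the max-min quantity $\MMS_i(A_{-i}\setminus\{e^*\},n-1)$ requires a careful case analysis whose golden-ratio arithmetic should mirror the calculations already appearing in the MMS analysis but now carried out over the restricted instance $A_{-i}\setminus\{e^*\}$ rather than $M$.
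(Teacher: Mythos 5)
There is a genuine gap: your proposal sets up the problem correctly but stops exactly where the proof has to be carried out, and you say so yourself (``the main obstacle is precisely this assembly step''). The insertion inequality only dispatches the case $\MMS_i(A_{-i}\setminus\{e^*\},n-1) \le v_i(A_i)$, which is vacuously fine without any MMS guarantee; in the remaining case it tells you $v_i(A_i)\le \MMS_i(M,n)$, which gives no upper bound on the restricted share $\MMS_i(A_{-i}\setminus\{e^*\},n-1)$. The paper never routes through $\MMS_i(M,n)$ at all. Its argument is: (i) if the algorithm terminates during preprocessing, every bundle is a singleton or a pair, and a direct pigeonhole count over any $(n-1)$-partition of $A_{-i}\setminus\{e_{\min}\}$ exhibits a singleton bundle worth at most $v_i(A_i)$, so the allocation is exactly MMAX there; (ii) otherwise, for each $j\ne i$ it tracks the moment the \emph{last} item $e^{l}$ entered $A_j$ (distinguishing whether $i\in L$ or $i\in N\setminus L$, and whether $e^{l}$ arrived in preprocessing or in envy-cycle elimination) to prove the per-bundle bound $v_i(A_j)\le \phi\cdot v_i(A_i)$, and then converts this into the MMAX bound via the averaging inequality $\MMS_i(S,n-1)\le v_i(S)/(n-1)$ for identical weights, so that $v_i(A_i)\ge \frac{1}{\phi}\cdot\frac{v_i(A_{-i}\setminus\{e_{\min}\})}{n-1}\ge(\phi-1)\,\MMS_i(A_{-i}\setminus\{e_{\min}\},n-1)$. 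The bundles for which the per-bundle bound can fail are isolated as ``bad'' two-item bundles and handled by a separate counting argument.

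Two concrete things are missing from your sketch. First, the averaging step $\MMS_i(S,n-1)\le v_i(S)/(n-1)$ is the mechanism that turns pointwise bundle comparisons into a bound on the max-min quantity; without naming it, ``assemble these pointwise inequalities'' is not a proof. Second, your worry about early singletons $A_j^p$ with $l_j<l_i$ being larger than $v_i(e_i)$ is real but is resolved in the paper not by EF1 of envy-cycle elimination alone, but by the last-item case analysis plus the bad-bundle pigeonhole argument, where the golden-ratio identity $\phi^{-1}+1=\phi$ is what makes the two-item bundles close. As written, your proposal is a plan with the central lemma unproved, not a proof.
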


\begin{proof}

Let $\mathcal{A} = (A_1, \ldots,A_n)$ be the allocation returned by Algorithm \ref{alg: round_robin_envy_cycle}. If each agent obtains at most one item, it is easy to show that the allocation $\mathcal{A}$ is trivially MMAX. Thus, we only consider cases where there exist some agents who get more than one item, and there are two cases to be considered.

\underline{Case 1:} If all items are allocated to all agents before line \ref{lin: envy_cycle_alg} in Algorithm \ref{alg: round_robin_envy_cycle}, it is easy to conclude that for any agent $i \in L$, she has one item; for any agent $i \in N\setminus L$, she has at most two items.  Because we consider the ordered instance, for any agent $i \in N$, the item with the maximum value should be put in the bundle of the agent who enters $L$ first or one agent who has the smallest value of $l$ if $L = \emptyset$. Assume that there are $k$ ($1 \leq k \leq n-\lvert L \rvert$) bundles where each bundle has two items. Now we arbitrarily choose one agent $i \in N$ and let $e_{min}$ denote the item with the minimum value from $A_{-i}$. Next, consider the following two possible cases:

\begin{itemize}
    \item [(a)] If the bundle of agent $i$ has only one item, there are  $n + k -2$ items in $A_{-i} \setminus \{e_{min}\}$. By the pigeonhole principle, at least $n-k$ bundles contain one item in any $(n-1)$-partition of $A_{-i}\setminus \{e_{min}\}$ where no bundle is empty. According to property \ref{property_round_robin_envy_cycle_3} in Lemma \ref{lem: property_round_robin_envy_cycle}, in agent $i$'s perspective, there are at least $2k-1$ items where the value of each item is less than that of agent $i$'s bundle and at most $n-k-1$ items where the value of each item is greater than that of her bundle. Thus, there always exists one bundle with one item, and the value of it is less than that of agent $i$'s bundle.
    
    \item [(b)]  If the bundle of agent $i$ has two items, there are $n+k-3$ items in $A_{-i} \setminus \{e_{min}\}$. Similarly, based on the pigeonhole principle, there are at least $n-k+1$ bundles that only have one item in any $(n-1)$-partition of $A_{-i} \setminus \{e_{min}\}$ where no bundle is empty. Meanwhile, there are at least $k-2$ items where the value of each item is less than that of any item in her bundle and at most $n-1$ items where the value of each item is greater than that of any item in her bundle. Thus, there always exists one bundle with one item, and the value of it is less than that of agent $i$'s bundle.
    
\end{itemize}

Therefore, we can conclude that if the algorithm terminates before line \ref{lin: envy_cycle_alg}, the returned allocation $\mathcal{A}$ is MMAX.

\underline{Case 2:}  If all items are allocated to all agents after line \ref{lin: envy_cycle_alg} in Algorithm \ref{alg: round_robin_envy_cycle}, fix one agent $i \in N$ and choose another arbitrary agent $j \in N \setminus \{i\}$. If $\lvert A_j \rvert = 1$, we remove this item and agent $j$. By Lemma \ref{lem: mms_nondecrease}, we have $\MMS_{i}(A_{-i}\setminus\{A_j,\{e_{min}\}\}, n-2) \geq \MMS_{i}(A_{-i}\setminus \{e_{min}\},n-1)$ where $e_{min} = \arg \min_{e \in A_{-i}}v_i(e)$. If we can show the approximation ratio of MMAX in the reduced case, it also holds in the original case. Thus, w.l.o.g, we can assume that $\lvert A_j \rvert \geq 2$. Next, consider the occasion when the last item $e^{l}$ is added to $A_j$. At this time, $A_j$ may belong to agent $j^{'}$ instead of $j$ due to the possible swap in the envy-cycle elimination algorithm. Therefore, $A_i^{old}$ and $A_{j^{'}}^{old}$ are used to denote the bundles of agent $i$ and $j^{'}$, respectively, before item $e^{l}$ is allocated. After the execution of the envy-cycle elimination algorithm, we have $v_{i}(A_i) \geq v_i(A_i^{old})$. Now, there are two cases about where agent $i$ comes from when $e^{l}$ is allocated.

\begin{itemize}
    \item [(a)] If agent $i$ comes from $L$, there are two cases about $e^{l}$. If $e^{l}$ is added in line \ref{lin: phi_mmax_first_round_2}, we have $A_i^{old} = \{e_i\}$ and $A_j = A_{j^{'}}^{old} \cup \{e^{l}\} = \{e_{j^{'}},e^{l}\}$. In this case, by Lemma \ref{lem: property_round_robin_envy_cycle}, we have $v_i(A_i^{old}) \geq v_i(e_{j^{'}})$ and $v_i(A_i^{old}) \geq \phi \cdot v_i(e^{l})$. Now, if we combine them, we have $v_{i}(A_j) = v_{i}(e_{j^{'}}) + v_{i}(e^{l}) \leq (\phi^{-1}+1)v_i(A_i^{old}) \leq \phi \cdot v_i(A_i)$. Similarly, if $e^{l}$ is added in line \ref{lin: envy_cycle_alg}, by way of choosing who gets the new unallocated item in the envy-cycle elimination algorithm, we have $v_i(A_i^{old})\geq v_i(A_{j^{'}}^{old})$. Besides that, we also have $v_i(A_i^{old})\geq v_i(e_i) \geq \phi \cdot v_i(e^{l})$ by Lemma \ref{lem: property_round_robin_envy_cycle}. If we put them together, we get $v_i(A_j) \leq \phi \cdot v_i(A_i^{old}) \leq \phi \cdot v_i(A_i)$. Therefore, if agent $i$ comes from $L$, we have $v_i(A_j) \leq \phi \cdot v_i(A_i)$. Summing up respective inequalities for all $j \in N \setminus \{i\}$, we get $v_i(A_i) \geq \frac{1}{\phi} \cdot \frac{v_i(A_{-i})}{n-1}$. Because $e_{min}$ is allowed to be removed, we have $v_i(A_i) \geq \frac{1}{\phi} \cdot \frac{v_i(A_{-i}\setminus \{e_{min}\})}{n-1} \geq \frac{1}{\phi}\cdot \MMS_i(A_{-i} \setminus \{e_{min}\},n-1)$.
    
    \item [(b)] If agent $i$ comes from $N\setminus L$, there are three cases about $e^{l}$. If $e^{l}$ is added in line \ref{lin: phi_mmax_first_round_2}, we have $A_j = A_{j^{'}}^{old} \cup \{e^{l}\} = \{e_{j^{'}},e^{l}\}$. Next, we discuss the order of agent $i$ and $j^{'}$. If $l_i < l_{j^{'}}$, agent $i$ may not obtain the second item $e_{i}^{'}$ in line \ref{lin: phi_mmax_first_round_2} and we have $v_i(A_i) \geq v_i(A_i^{old}) \geq v_i(e_i) \geq v_i(e_{j^{'}}) \geq v_i(e^{l})$. If $l_i > l_{j^{'}}$, agent $i$ has obtained her second item $e_{i}^{'}$ and we will have $v_i(A_i) \geq v_i(A_i^{old}) \geq v_i(e_i) + v_i(e_i^{'}) \geq \frac{1}{\phi} \cdot v_i(e_{j^{'}}) + v_i(e^{l}) \geq \frac{1}{\phi} \cdot v_i(A_j)$ . If $e^{l}$ is added in line \ref{lin: envy_cycle_alg}, similarly we will have $v_i(A_i^{old}) \geq v_i(A_{j^{'}}^{old})$. In addition to that, we have $v_i(A_i^{old}) \geq v_i(e_i) + v_i(e_i^{'}) \geq 2 \cdot v_i(e^{l})$ by Lemma \ref{lem: property_round_robin_envy_cycle} and working principle between line \ref{lin: phi_mmax_first_round_2} and line \ref{lin: envy_cycle_alg}. Putting them together, we have $v_i(A_i) \geq v_i(A_i^{old}) \geq \frac{2}{3} \cdot v_i(A_{j^{'}}) \geq \frac{1}{\phi} \cdot v_i(A_{j^{'}})$. 

    From the above analysis, it can be seen that the bottleneck is when there exists at least one bundle $A_j$ where the last item is added in line \ref{lin: envy_cycle_alg} with $l_i < l_{j^{'}}$. If the value of the second added item in all these kinds of bundles is less than $(\phi -1)\cdot v_i(A_i)$, the whole analysis is simplified since this kind of bundle satisfies $v_i(A_j) = v_i(e_j^{'}) + v_i(e^{l}) \leq v_i(A_i) + (\phi -1)v_i(A_i) \leq \phi \cdot v_i(A_i)$. If there are some bundles where the value of the second item is strictly greater than $ (\phi - 1) \cdot v_i(A_i)$, the above trick does not work. To simplify the remaining analysis, we define one special bundle in $A_{-i}$. If $\lvert A_{j} \rvert = 2$, and the last item is added in line \ref{lin: envy_cycle_alg} with $l_i < l_{j^{'}}$ and the value of it is strictly greater than $(\phi - 1)\cdot v_i(A_i)$, we say that this bundle is $bad$ and all items in the $bad$ bundle are $bad$ items. 
    
    Now we need to know some extra information about these $bad$ bundles. Assume that there are $k$ $bad$ bundles. Notice that the order of these bundles $l_j$ are the last $k$ consecutive numbers since the second added item is allocated in the decreasing order of $l_j$ in $N \setminus L$ and the value of the second added item should be less than $\frac{1}{2}v_i(A_i)$ if this agent picks another item in the execution of the envy-cycle elimination algorithm. At the same time, there are $n-1-k$ items in $A_{-i}$ where the value of each item is greater than the value of each first added $bad$ item in $bad$ bundles, and we say that these $n-1-k$ items are $big$ items.
    
    Let $\mathcal{B} = \{B_j\}_{j \neq i}$, where $\min_{j \in N \setminus \{i\}} v_i(B_j) = \MMS_{i}(A_{-i} \setminus \{e_{min}\}, n-1)$. denote an $(n-1)$-partition of $A_{-i} \setminus \{e_{min}\}$.  Suppose, for the contradiction, that $ \MMS_{i}(A_{-i} \setminus \{e_{min}\}, n-1) > \phi \cdot v_i(A_i)$. Now we have $n-1-k$ $big$ items and $2k$ $bad$ items in $A_{-i} \setminus \{e_{min}\}$. Next, consider the distribution of $big$ and $bad$ items in $\mathcal{B}$. By the pigeonhole principle, we can always find $k$ bundles in $\mathcal{B}$ where the total number of $big$ items and $bad$ items is at least $2k$. In the remaining $n-1-k$ bundles, because the value of one bundle which is not a $bad$ bundle in $A_{-i}$ is less than $\phi \cdot v_i(A_i)$, and the number of items in the remaining $n-1-k$ bundles in $\mathcal{B}$ is less than or equal to the number of items in the $n-1-k$ bundles that do not contain any $bad$ items in $A_{-i}$, the mean value of these bundles is less than $\phi\cdot v_i(A_i)$, which contradicts the assumption that the value of each bundle in $\mathcal{B}$ is strictly greater than $\phi \cdot v_i(A_i)$.   
\end{itemize}
Therefore, if the algorithm terminates after line \ref{lin: envy_cycle_alg}, the returned allocation $\mathcal{A}$ is $(\phi-1)$-MMAX.
\end{proof}

\end{document}